\pgfplotsset{compat=1.7}
\numberwithin{equation}{section}
\numberwithin{figure}{section}
\theoremstyle{definition}
\newtheorem{observation}{Observation}[section]
\newtheorem{example}{Example}[section]
\title{Models of Discrete Linear Evolution for Quantum Systems}
\author[1]{Jakub Káninský}
\affil[1]{Charles University, Faculty of Mathematics and Physics, Institute of Theoretical Physics. E-mail address: jakubkaninsky@seznam.cz}
\date{\today}
\titleformat{\section}{\normalfont\scshape\large}{\thesection}{2em}{}
\titleformat{\subsection}{\normalfont\scshape\normalsize}{\thesubsection}{1em}{}
\titleformat{\subsubsection}{\itshape\normalsize}{\thesubsubsection}{1em}{}
\renewenvironment{abstract}
 {\small
  \begin{center}
  %\bfseries \abstractname\vspace{-.5em}\vspace{0pt}
  \end{center}
  \list{}{
    \setlength{\leftmargin}{15mm}%
    \setlength{\rightmargin}{\leftmargin}%
  }
  \item\relax}
 {\endlist}
\begin{document}

\maketitle

\begin{abstract}
Discrete canonical evolution is a key tool for understanding the dynamics in discrete models of spacetime, in particular those represented by a triangular Regge lattice. We consider a finite-dimensional system whose evolution is realized by a series of discrete-time evolution steps governed by Hamiltonian equations of motion that are linear in the canonical coordinates. The evolution is allowed to be irregular, which produces constraints as well as non-uniqueness of solutions. We provide two independent and fundamentally different approaches to canonical quantization of this system, including detailed description of the evolution maps, measurement and path integrals. It is argued that some irregular discrete systems may be most naturally described by a non-unitary quantum evolution. The formalism is then applied to a simple yet physically relevant model of massless scalar field on a two-dimensional spacetime lattice.
\end{abstract}

\vspace{\baselineskip}
\vspace{\baselineskip}
\vspace{\baselineskip}

\tableofcontents

\vspace{\baselineskip}
\vspace{\baselineskip}
\vspace{\baselineskip}

\section{Introduction}
Time and time again, physicists conjecture that the world we live in could be on microscopic scale discrete \cite{Henson2010, Hossenfelder2013, Crouse2016}. So far there is no compelling evidence in favor of that conjecture, neither is it expected to emerge soon---although there are some ideas around on how it could be acquired \cite{Brun2018}. In the meantime, discrete spacetime models of various kinds are subjected to intensive research in quantum gravity \cite{Oriti2009}, whether they are presented as instrumental \cite{Loll2019}, fundamental \cite{Surya2011} or even as candidates to a unified theory of physics \cite{Finster2018}. Many are predestined to fade away, but some have already proven useful in forming our views on critical aspects of the problem and raised valuable questions. The present work is motivated in particular by lattice spacetime models based on Regge calculus \cite{Regge1961}, which have a rich history of serving to numerous discrete approaches to quantum gravity and have given rise to some well established theories like Causal Dynamical Triangulations \cite{Ambjoern2001, Ambjoern2013, Loll2019, Mielczarek2017} and a few others \cite{Barrett2018}. In this paper, we limit our consideration to discrete evolution of quantum systems with configuration space isomorphic to $ \mathbb{R}^{q} $ and linear equations of motion. This is not the case of gravity but it typically applies e.g. to free quantum fields on a fixed spacetime lattice (or any other kinds of fields, given that their action is quadratic). Indeed, the introduced framework may be directly applied to this case and used to model dynamics of fields on fixed lattices (more on this below). Besides that, the linear model can serve as a basis for future generalizations or as a point of reference for other more general models of discrete quantum evolution like \cite{Hoehn2014a}. These are ultimately expected to describe the evolution of geometry in discrete approaches to quantum gravity \cite{Hamber2009, Xue2012, Dittrich2011, Hoehn2014a, Hoehn2014b, Mikovic2018}.

This work is based on the model of \textit{discrete canonical evolution} introduced by B. Dittrich and P. A. H\"{o}hn in a series of articles \cite{Dittrich2011, Dittrich2012, Dittrich2013}, also motivated by the application to simplicial gravity. Besides working in the discrete-time setting, one of the most important features of their treatment is that it allows for irregular evolution which does not always provide one-to-one correspondence of initial and final states. Due to this, the model demands special care in order to be correctly quantized. One approach towards quantization of discretely evolving systems has been presented in \cite{Hoehn2014a}, where the author builds upon the rich constraint classification from \cite{Dittrich2013} and follows the Dirac algorithm \cite{Dirac2001}. Within this approach, one begins with a kinematic quantization and imposes constraints on the quantum states and completes the resulting solution space to a Hilbert space. In practice, this amounts to introducing the so called physical Hilbert spaces and group averaging projectors in order to define the quantum evolution. It is shown how the resulting one-step evolution moves can be composed into an effective move across multiple time-steps with a matching procedure involving an integration over the variables at intermediate time-slices. The article shortly mentions the corresponding path integral. The formalism is also put into the context of simplicial quantum gravity, where the situation is still much more involved. Importantly for us, the quantization procedure has been subsequently applied specifically to systems with quadratic actions in section 10 of \cite{Hoehn2014}. Therein, prescriptions for (different versions of) the propagator and its measure are provided---we will come back to it when discussing our results.

In the present paper we offer another perspective on the problem. We take advantage of our preceding work \cite{Kaninsky2020a} concerning classical discrete linear systems, where we have supplied a simple yet efficient analysis of the symplectic structure with respect to the constraint surfaces. Based on that, we provided two special choices of the symplectic basis on phase spaces of two consecutive time-slices which trivialize the constraints as well as the one-step evolution. Global solutions were considered too. All these classical concepts shall find use in our present work. They are reviewed in section 2. In section 3 we supply a short description of the canonical quantization procedure which will be employed couple of times throughout the paper and add a couple of useful formulas concerning the symplectic transformation of observables. Eventually we will move on to present two standalone models of the treated system which differ both from each other and from the model described in \cite{Hoehn2014a, Hoehn2014}. The first model under our consideration, given in section 4, comes out rather straightforwardly by quantizing a suitably reduced space of solutions. Within this approach, the irregularity of the system is filtered out before quantization. The resulting model is well defined, with conserved symplectic structure and an inherent unitary quantum evolution induced by the classical solutions. As we shall see, this version of the quantum evolution resembles in effect the model introduced in \cite{Hoehn2014a, Hoehn2014} because it is unitary and depends on the chosen time interval, which results in breaking of causality. We argue that in some cases such behavior is not desirable. Besides that, the reduced model suffers from another more practical drawback, which is a non-standard form of the path integral. We are therefore encouraged to propose another model designated to surpass the stated problematic features and better suited for practical computations. We do so in the final section 5. Within the second model, the quantum evolution is found to be non-unitary. On the other hand, we show that this model produces an (almost) standard form of the one-step propagator, which, up to a necessary regularization term and small differences in the propagator measure, agrees with the propagator given in \cite{Hoehn2014}. Accordingly, it leads to the standard path integral in the original canonical variables, which is very favorable for application. The price for all this is the need for regularization of the evolved states. We provide two regularization schemes which are respectful of the constraints and result in the same normalization of states under one-step evolution. We further show that the wavefunctions need to be renormalized in order to recover the usual probabilistic interpretation of the theory, and describe the renormalization process.

In the final section 6 of the article, we apply our second non-unitary model to the example of quantum scalar field on a simple two-dimensional spacetime lattice. The benefit is two-fold. First, the application allows us to review and demonstrate in a clear way the behavior of the non-unitary evolution and explain some of its subtleties. Second, the model provides a relevant description of quantum field on spacetime lattice which can be seen as a discrete analogy of quantum field theory on (continuous) curved spacetime \cite{Wald1994} or as a generalization of the well established lattice quantum field theory \cite{Smit2003}. Similar models have been studied on numerous occasions for both fixed piecewise linear manifolds \cite{Brower2018, Foster2004} and those coupled to the field within some quantum gravity model \cite{Hamber1993, Hamber1993a, Paunkovic2016, Brower2016, Jha2018}. As we have already mentioned, the latter option is out of reach of the present model---gravity possesses nonlinear equations of motion and more complicated configuration spaces. However, the former option is implemented straightforwardly. For our purpose, we choose a simple two-dimensional setting with fixed edge lengths identical to the one introduced before in \cite{Kaninsky2020a} in order to demonstrate the formalism developed for classical evolution. The model can be easily adapted to account for different lattices or other terms in the action. We briefly review its general properties and consider several examples of illustrative one-step lattices, for which we express the evolution maps, propagators, (regularized) norms and inner products used for the computation of probabilities.

\section{Discrete Linear Evolution}

Within this section we shall briefly review the classical model which serves as our starting point. Further details and context can be found in our previous work \cite{Kaninsky2020a}.

Our system of interest evolves in discrete time-steps between individual time-slices indexed by $ n = 0, ..., t $. We assume that there is a natural number $ q $ such that the configuration space $ \mathcal{Q}_{n} $ of our system at any time-slice $ n $ from the chosen interval $ \{ 0, ..., t \} $ is isomorphic to $ \mathbb{R}^{q} $. This assumption implies that the system at hand is finitely-dimensional and its dimension $ q $ is not varying in time. As we will soon make clear, this invariance is only formal. The phase space $ \mathcal{P}_{n} $ is found to be isomorphic to $ \mathbb{R}^{2q} $. Thanks to its natural vector-space structure, one can identify the tangent space $ T_{y_{n}}\mathcal{P}_{n} $ at every point $ y_{n} \in \mathcal{P}_{n} $ with the phase space itself, which allows for the definition of the symplectic structure $ \omega_{n}: \mathcal{P}_{n} \times \mathcal{P}_{n} \rightarrow \mathbb{R} $. We denote the canonical basis of $ \mathcal{P}_{n} $ by $ \{ e_{n I} \}_{I = 1}^{2q} $ and the canonical coordinates of $ y_{n} \in \mathcal{P}_{n} $ by $ y_{nI} $ so that $ y_{n} = y_{n I} e_{n I} $ with $ I = 1, ..., 2q $. The first $ q $ components $ y_{nA} $ with $ A = 1, ..., q $ are referred to as coordinates, the second $ q $ components $ y_{nA+q} $ are referred to as momenta. We can then define the symplectic structure by the standard prescription
\begin{equation}\label{symplstruct}
\omega_{n}(y_{n},u_{n}) = y_{nA} u_{nA+q} - y_{nA+q} u_{nA}
\end{equation}
for all $ y_{n}, u_{n} \in \mathcal{P}_{n} $. Often we shall pass to the matrix form, in which we denote the coordinates $ y_{nI} $ of the point $ y_{n} \in \mathcal{P}_{n} $ by coordinate vectors $ y_{n} \in \mathbb{R}^{2q} $ with the usual abuse of notation. Analogically, the vector $ x_{n} \in \mathcal{Q}_{n} $ is expressed in basis as $ x_{n} = x_{n A} e_{n A} $ with $ A = 1, ..., q $ and we will write the coordinates $ x_{n A} $ compactly as a coordinate vector $ x_{n} \in \mathbb{R}^{q} $. This allows us to work efficiently in coordinates, using either $ x_{n} $ or $ y_{n} = \begin{pmatrix} x_{n} & p_{n} \end{pmatrix}^{T} $. The symplectic product \eqref{symplstruct} is equivalently expressed in matrix form as $ \omega_{n}(y_{n},u_{n}) = y_{n}^{T} \sigma u_{n} $ with the $ 2q \times 2q $ matrix
\begin{equation}\label{key}
\sigma = \begin{pmatrix}
\phantom{-} 0 & \mathbf{1} \\
-\mathbf{1} & 0
\end{pmatrix}
\end{equation}

We assume to be given a global action functional $ S_{0t} $ equal to the sum of contributions $ S_{n+1}(x_{n},x_{n+1}) $ of the individual time-steps,
\begin{equation}\label{action}
S_{0t} = \sum_{n=0}^{t-1} S_{n+1}(x_{n},x_{n+1})
\end{equation}
The action contribution $ S_{n+1} $ governs the discrete time evolution move in the time-step between $ n $ and $ n+1 $. Note that in our discrete-time setting, it plays the role of Lagrangian. Since we are in the business of linear discrete evolution, we assume that $ S_{n+1} $ is quadratic in the generalized coordinates $ x_{n} $. We can use it to pass to the canonical formulation by defining so-called \textit{pre-momenta} $ ^{-}p_{n} $ and \textit{post-momenta} $ ^{+}p_{n} $ via the equations
\begin{equation}\label{prepostmomentamatrix}
\begin{aligned}
^{-}p_{n} &= - \frac{\partial S_{n+1}}{\partial x_{n}} = L_{n} x_{n} + R_{n} x_{n+1} \\
^{+}p_{n+1} &= \frac{\partial S_{n+1}}{\partial x_{n+1}} =  \bar{L}_{n+1} x_{n} + \bar{R}_{n+1} x_{n+1}
\end{aligned}
\end{equation}
where $ ^{-}p_{n} $, $ ^{+}p_{n+1} $, $ x_{n} $, $ x_{n+1} $ are coordinate vectors of dimension $ q $ and $ L_{n}, R_{n}, \bar{L}_{n+1}, \bar{R}_{n+1} $ are real $ q \times q $ matrices. In this canonical picture, the equations of motion demonstrate themselves in the form of so-called \textit{momentum-matching}
\begin{equation}\label{mommatch}
^{-}p_{n} = ~^{+}p_{n+1}
\end{equation}
Therefore, if we consider a solution to the equations of motion, we may drop the $ - $ and $ + $ indices of the momenta, enforcing momentum matching. By virtue of the exchengability of partial derivatives, it holds
\begin{equation}\label{pardevsymm}
\bar{L}_{n+1} = - R_{n}^{T}
\end{equation}
\begin{equation}\label{pardevsymm2}
L_{n} = L_{n}^{T}, \qquad \bar{R}_{n+1} = \bar{R}_{n+1}^{T}
\end{equation}
We need to stress out that there are no other assumptions about $ L_{n}, R_{n} $ and $ \bar{R}_{n+1} $, in particular, they may be \textit{singular}. With the above definitions, the prescription for the time-step action contribution is rewritten as
\begin{equation}\label{Slag}
S_{n+1} = \textstyle - \frac{1}{2} \left(  x_{n}^{T} L_{n} x_{n} + 2 x_{n}^{T} R_{n} x_{n+1} - x_{n+1}^{T} \bar{R}_{n+1} x_{n+1} \right)
\end{equation}

\subsection{The One-Step Evolution}\label{constrsurf}
From the momentum matching \eqref{mommatch} one can deduce the general form of the one-step forward Hamiltonian evolution map between time-slices $ n $ and $ n+1 $. To express it, we use the singular value decomposition of the matrix $ R_{n} $, written as $ R_{n} = U \Sigma V^{T} $. Here, $ U $ and $ V $ are orthogonal $ q \times q $ matrices and $ \Sigma $ is the diagonal $ q \times q $ matrix of singular values. We denote $ r_{n} \equiv \rank (R_{n}) $, $ s_{n} \equiv q - r_{n} $ and divide the above matrices into blocks as
\begin{equation}\label{key}
U = \begin{pmatrix}
U_{1} & U_{2}
\end{pmatrix}, \qquad \Sigma = \begin{pmatrix}
\Sigma_{r} & 0 \\
0 & 0
\end{pmatrix}, \qquad V = \begin{pmatrix}
V_{1} & V_{2}
\end{pmatrix}
\end{equation}
with $ U_{1}, V_{1} \in \mathbb{R}^{q \times r_{n}} $ while $ U_{2}, V_{2} \in \mathbb{R}^{q \times s_{n}} $ and $ \Sigma_{r} \in \mathbb{R}^{r_{n} \times r_{n}} $. The above singular value decomposition has a narrowed version $ R_{n} = U_{1} \Sigma_{r} V_{1} $ in which one ignores the singular part of $ \Sigma $.

Due to the allowed irregularity of the system, the evolution in general does not work for all $ y_{n} \in \mathcal{P}_{n} $. Rather, one can only find solution of \eqref{prepostmomentamatrix} and \eqref{mommatch} for points $ y_{n} \in \mathcal{P}_{n} $ which satisfy the so-called \textit{pre-constraint}
\begin{equation}\label{preconst}
C_{n} y_n = 0, \qquad C_{n} = \begin{pmatrix}
- U_{2} U_{2}^{T} L_{n} & & U_{2} U_{2}^{T}
\end{pmatrix}
\end{equation}
This relation defines the \textit{pre-constraint surface} $ \mathcal{C}_{n}^{-} $. In our case, $ \mathcal{C}_{n}^{-} $ is simply a linear subspace of $ \mathcal{P}_{n} $. The one-step Hamiltonian evolution map $ \mathbb{H}_{n+1}(\lambda_{n+1}) : \mathcal{C}_{n}^{-} \rightarrow \mathcal{C}_{n+1}^{+} $ denoted also as $ y_{n+1} \equiv \mathbb{H}_{n+1}(\lambda_{n+1}) y_{n} $ is then found to act as
\begin{equation}\label{evol}
y_{n+1} = E_{n} y_{n} + F_{n+1} \lambda_{n+1}
\end{equation}
for all $ y_{n} \in \mathcal{C}_{n}^{-} $. Here
\begin{equation}\label{En}
E_{n} = \begin{pmatrix}
- R_{n}^{+} L_{n} & R_{n}^{+} \\
- R_{n}^{T} - \bar{R}_{n+1} R_{n}^{+} L_{n} & \bar{R}_{n+1} R_{n}^{+}
\end{pmatrix}
\end{equation}
and
\begin{equation}\label{Fn+1}
F_{n+1} = \begin{pmatrix}
V_{2} \\
\bar{R}_{n+1} V_{2}
\end{pmatrix}
\end{equation}
The matrix $ E_{n} $ has dimensions $ 2q \times 2q $ and the matrix $ F_{n+1} $ has dimensions $ 2q \times s_{n} $. Note that to specify the map uniquely, one needs to supply the coordinate vector $ \lambda_{n+1} \in \mathbb{R}^{s_{n}} $ of $ s_{n} $ free parameters. The vectors $ y_{n+1} $ are elements of the so-called \textit{post-constraint surface} $ \mathcal{C}_{n+1}^{+} \subset \mathcal{P}_{n+1} $. We further define the space $ \Lambda_{n+1} = \{ F_{n+1} \lambda_{n+1} ~ \vert ~ \lambda_{n+1} \in \mathbb{R}^{s_{n}} \} $ which is a subspace of the post-constraint surface and plays an interesting role in the evolution because the position in this space is determined by the free parameters $ \lambda_{n+1} $ rather than the initial data $ y_{n} $.

One can check that for solutions of the equations of motion, the symplectic structure \eqref{symplstruct} is conserved, i.e.,
\begin{equation}\label{symplcons}
\omega_{n+1}(y_{n+1}, u_{n+1}) = \omega_{n}(y_{n}, u_{n})
\end{equation}
Of course, this conservation property only relates to $ y_{n}, u_{n} \in \mathcal{C}_{n}^{-} $, because outside the pre-constraint surface there are no solutions $ y_{n+1}, u_{n+1} $ at all. One needs to acknowledge that when seen as a two-form, $ \omega_{n} $ is in general no longer symplectic on $ T_{y_{n}} \mathcal{C}_{n}^{-} $, since it may be degenerate there. The structure of the constraint surfaces $ \mathcal{C}_{n}^{-}, \mathcal{C}_{n+1}^{+} $ with respect to $ \omega_{n} $ is studied in detail in \cite{Kaninsky2020a}, where we also provided two special coordinate frames that considerably simplify the picture. They are given by the symplectic transforms
\begin{equation}\label{doteWe}
e_{nJ} = \dot{e}_{nI} \dot{W}_{nIJ}, \qquad e_{n+1J} = \ddot{e}_{n+1I} \ddot{W}_{n+1IJ}
\end{equation}
with
\begin{equation}\label{dotW}
\dot{W}_{n} = \begin{pmatrix}
-U^{T} L_{n} & U^{T} \\
-U^{T} & 0
\end{pmatrix}, \qquad \ddot{W}_{n+1} = \begin{pmatrix}
\bar{\Sigma} V^{T} & 0 \\
-\bar{\Sigma}^{-1} V^{T} \bar{R}_{n+1} & \bar{\Sigma}^{-1} V^{T}
\end{pmatrix}
\end{equation}
where we take advantage of the notation
\begin{equation}\label{key}
\bar{\Sigma} \equiv \begin{pmatrix}
\Sigma_{r} & 0 \\
0 & \mathbf{1}
\end{pmatrix}
\end{equation}
The new symplectic bases $ \{ \dot{e}_{nI} \}_{I = 1}^{2q} $ and $ \{ \ddot{e}_{n+1I} \}_{I = 1}^{2q} $ of $ \mathcal{P}_{n} $ and $ \mathcal{P}_{n+1} $, respectively, specify \textit{adapted coordinates}
\begin{equation}\label{key}
\dot{y}_{n} = \dot{W}_{n} y_{n}, \qquad \ddot{y}_{n+1} = \ddot{W}_{n+1} y_{n+1}
\end{equation}
on the corresponding phase spaces. In these coordinates the Hamiltonian evolution map $ \mathbb{H}_{n+1}(\lambda_{n+1}) $ takes a particularly simple form: the components $ \dot{y}_{nX} $ with $ X = 1,..., r_{n}, q+1, ..., q+r_{n} $ which correspond to propagating degrees of freedom are conserved,
\begin{equation}\label{uddotX}
\ddot{y}_{n+1X} = \dot{y}_{nX}
\end{equation}
while the zero components $ \dot{y}_{nM} $ with $ M = r_{n}+1, ..., q $ corresponding to the part of $ y_{n} $ outside the pre-constraint surface (we assume $ y_{n} \in \mathcal{C}_{n}^{-} $, which is equivalent to  $ \dot{y}_{nM} = 0 $) are updated with an arbitrary constant contribution generated by the free parameters,
\begin{equation}\label{uddotM}
\ddot{y}_{n+1M} = \lambda_{n+1 M-r_{n}}
\end{equation}
and the remaining components $ \dot{y}_{nM+q} $ are annihilated,
\begin{equation}\label{uddotMq}
\ddot{y}_{n+1M+q} = \dot{y}_{nM} = 0
\end{equation}

\subsection{Global Solutions}\label{globsol}
Once we know how the system evolves in a single time-step, we can chain these steps one after another to obtain global solutions. A \textit{solution} $ y \in \mathcal{S}_{0t} $ is a point in $ \mathcal{P}_{0t} = \bigotimes_{n = 0}^{t} \mathcal{P}_{n} $ which satisfies the equations of motion at each time-slice $ n $, including all the constraints. It may of course happen that no solution on the whole interval $ \{0, ..., t\} $ exists; we are however more interested in the case when the solutions are plentiful. We define the constraint surfaces
\begin{equation}\label{Dn}
\mathcal{D}_{n} = \{ y_{n} \in \mathcal{P}_{n} ~ \vert ~ \exists \text{ solution } y \in \mathcal{S}_{0t} \text{ such that } y_{n} = \mathbb{P}_{n} y \}
\end{equation}
where $ \mathbb{P}_{n} $ is the natural projection from $ \mathcal{P}_{0t} $ to $ \mathcal{P}_{n} $. We may also consider the total constraint space $ \mathcal{D}_{0t} = \bigotimes_{n = 0}^{t} \mathcal{D}_{n} $. By definition, $ \mathcal{S}_{0t} \subset \mathcal{D}_{0t} $. However, we must keep in mind that because of the free parameters of the Hamiltonian evolution map, $ y \in \mathcal{S}_{0t} $ is in general not uniquely defined by its time-slice data $ y_{n} \in \mathcal{D}_{n} $.

Thanks to \eqref{symplcons}, the symplectic structure naturally induces the \textit{product of solutions} $ \omega: \mathcal{S}_{0t} \times \mathcal{S}_{0t} \rightarrow \mathbb{R} $ given by $ \omega(y,z) = \omega_{n}( y_{n} , z_{n}) $ for an arbitrary $ n \in \{0, ..., t\} $. This product is \textit{not} generally symplectic. We say that two solutions $ y, \tilde{y} \in \mathcal{S}_{0t} $ are \textit{symplectically equivalent} if $ \omega(y,z) = \omega(\tilde{y},z) $ for all $ z \in \mathcal{S}_{0t} $, and write $ y \sim \tilde{y} $. Then we can render the equivalence classes $ [y] $ of all symplectically equivalent solutions $ [y] = \{ \tilde{y} ~ \vert ~ \tilde{y} \sim y \} $. The space of such equivalence classes shall be denoted by $ \tilde{\mathcal{S}}_{0t} $. There is a naturally induced product $ \omega : \tilde{\mathcal{S}}_{0t} \times \tilde{\mathcal{S}}_{0t} \rightarrow \mathbb{R} $, $ \omega( [y], [z] ) = \omega(y,z) $. It turns out that by definition, $ (\tilde{\mathcal{S}}_{0t}, \omega) $ is a symplectic vector space. In the same manner we get the symplectic space $ ( \tilde{\mathcal{D}}_{n}, \omega_{n} ) $ of equivalence classes $ [y_{n}] = \{ \tilde{y}_{n} \in \mathcal{D}_{n} ~ \vert ~ \tilde{y}_{n} \sim y_{n} \} $ with $ y_{n} \sim \tilde{y}_{n} $ defined by $ \omega_{n}(y_{n},z_{n}) = \omega_{n}(\tilde{y}_{n},z_{n}) $ for all $ z_{n} \in \mathcal{D}_{n} $.

For practical purposes, we can go one more step and represent each class of symplectically equivalent solutions $ [y] \in \tilde{\mathcal{S}}_{0t} $ by a single solution $ y \in [y] $. The space of these representative solutions shall be denoted by $ \dot{\mathcal{S}}_{0t} $.  We require that $ \dot{\mathcal{S}}_{0t} $ is a linear subspace of $ \mathcal{S}_{0t} $. Once it is chosen, it fixes uniquely the spaces $ \dot{\mathcal{D}}_{n} = \{ \mathbb{P}_{n}y ~ \vert ~ y \in \dot{\mathcal{S}}_{0t} \} $ of the corresponding initial data. We let $ \dot{\mathcal{S}}_{0t} $,  $ \dot{\mathcal{D}}_{n} $ inherit the symplectic structures $ \omega $, $ \omega_{n} $ of $ \tilde{\mathcal{S}}_{0t} $, $ \tilde{\mathcal{D}}_{n} $, respectively, then they turn out to be symplectic. Eventually we can make the following observation:

\begin{observation}\label{O3}
	The spaces $ (\dot{\mathcal{D}}_{n}, \omega_{n}) $ for each $ n \in \{ 0, ..., t \} $ and $ (\dot{\mathcal{S}}_{0t}, \omega) $ are all mutually symplectomorphic.
\end{observation}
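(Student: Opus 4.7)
My plan is to identify an explicit symplectomorphism between $(\dot{\mathcal{S}}_{0t}, \omega)$ and each $(\dot{\mathcal{D}}_n, \omega_n)$, so that pairwise symplectomorphisms $\dot{\mathcal{D}}_n \to \dot{\mathcal{D}}_m$ arise automatically by composition. The natural candidate is the projection $\mathbb{P}_n$ restricted to $\dot{\mathcal{S}}_{0t}$, which is linear and by definition surjects onto $\dot{\mathcal{D}}_n$. So the whole task reduces to showing that $\mathbb{P}_n|_{\dot{\mathcal{S}}_{0t}}$ is injective and that it preserves the symplectic form.

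Injectivity is the crux. I would argue as follows. Suppose $y, \tilde{y} \in \dot{\mathcal{S}}_{0t}$ with $\mathbb{P}_n y = \mathbb{P}_n \tilde{y}$, that is $y_n = \tilde{y}_n$. Then for every $z \in \mathcal{S}_{0t}$, invoking the time-slice independence of $\omega$ granted by \eqref{symplcons},
\begin{equation*}
\omega(y,z) = \omega_n(y_n, z_n) = \omega_n(\tilde{y}_n, z_n) = \omega(\tilde{y}, z),
\end{equation*}
so $y \sim \tilde{y}$ and hence $[y] = [\tilde{y}]$. Since $\dot{\mathcal{S}}_{0t}$ contains exactly one representative from each equivalence class in $\tilde{\mathcal{S}}_{0t}$, this forces $y = \tilde{y}$. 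Preservation of the symplectic form is then immediate from the very definition of $\omega$ on solutions: for $y, z \in \dot{\mathcal{S}}_{0t}$,
\begin{equation*}
\omega_n(\mathbb{P}_n y, \mathbb{P}_n z) = \omega_n(y_n, z_n) = \omega(y,z).
\end{equation*}

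With $\mathbb{P}_n : (\dot{\mathcal{S}}_{0t}, \omega) \to (\dot{\mathcal{D}}_n, \omega_n)$ established as a linear symplectic bijection for every $n$, the composition $\mathbb{P}_m \circ \mathbb{P}_n^{-1}$ supplies a symplectomorphism $\dot{\mathcal{D}}_n \to \dot{\mathcal{D}}_m$ for any pair of time-slices, which together with the individual $\mathbb{P}_n$ gives the full claim. The only point I expect to require a bit of care, rather than a serious obstacle, is the consistency check that the inherited form $\omega_n$ on $\dot{\mathcal{D}}_n$ really is non-degenerate and so deserves to be called symplectic in the first place. This follows because by the injectivity just established, $\dot{\mathcal{D}}_n$ is in bijection with the equivalence-class space $\tilde{\mathcal{D}}_n$ on which $\omega_n$ was already observed to be symplectic earlier in this section, and the bijection transports the form together with its non-degeneracy. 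Everything else is essentially tautological once the uniqueness-of-representatives clause in the definition of $\dot{\mathcal{S}}_{0t}$ is combined with the conservation law \eqref{symplcons}.
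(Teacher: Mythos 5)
Your proof is correct and complete. The paper itself gives no in-text argument for this observation---it simply defers to section 5.9 of \cite{Kaninsky2020a}---but your route, showing that $ \mathbb{P}_{n} $ restricted to $ \dot{\mathcal{S}}_{0t} $ is a linear symplectic bijection onto $ \dot{\mathcal{D}}_{n} $ (surjectivity by definition of $ \dot{\mathcal{D}}_{n} $, injectivity from the one-representative-per-class property of $ \dot{\mathcal{S}}_{0t} $ combined with the conservation law \eqref{symplcons}, form preservation from the very definition of $ \omega $ on solutions, and non-degeneracy transported from the quotient spaces $ \tilde{\mathcal{S}}_{0t} $, $ \tilde{\mathcal{D}}_{n} $) is exactly the argument that discussion supplies, and composing $ \mathbb{P}_{m} \circ \mathbb{P}_{n}^{-1} $ then yields the pairwise symplectomorphisms as you say.
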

\begin{proof}
See the discussion in section 5.9 of \cite{Kaninsky2020a}.
\end{proof}

Finally let us remark that the constraint surface $ \mathcal{D}_{n} $ naturally splits into the representative space $ \dot{\mathcal{D}}_{n} $ and the null space $ \mathcal{N}_{n} = \{ z_{n} \in \mathcal{D}_{n} ~ \vert ~ z_{n} \sim 0 \} $, in particular:
\begin{observation}\label{DdotDN}
	It holds $ \mathcal{D}_{n} = \dot{\mathcal{D}}_{n} \oplus \mathcal{N}_{n} $.
\end{observation}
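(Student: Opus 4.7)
The plan is to establish the two defining properties of a direct sum decomposition: that every element of $\mathcal{D}_n$ can be written as a sum of an element of $\dot{\mathcal{D}}_n$ and an element of $\mathcal{N}_n$, and that $\dot{\mathcal{D}}_n \cap \mathcal{N}_n = \{0\}$. Before doing this, I would record that all three spaces are linear subspaces of $\mathcal{P}_n$: $\mathcal{D}_n$ inherits linearity from $\mathcal{S}_{0t}$ (itself linear because the equations of motion \eqref{prepostmomentamatrix}, \eqref{mommatch} and constraints \eqref{preconst} are linear, so $\mathcal{S}_{0t}$ is a vector space and its projection $\mathbb{P}_n$ sends it to the linear subspace $\mathcal{D}_n$); $\dot{\mathcal{D}}_n$ is linear by the explicit requirement that $\dot{\mathcal{S}}_{0t}$ be a linear subspace of $\mathcal{S}_{0t}$; and $\mathcal{N}_n$ is linear because the symplectic equivalence relation is linear (kernel of the linear map $y_n \mapsto \omega_n(y_n, \cdot)|_{\mathcal{D}_n}$).

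For the decomposition, I would pick an arbitrary $y_n \in \mathcal{D}_n$, consider its equivalence class $[y_n] \in \tilde{\mathcal{D}}_n$, and invoke the fact that $\dot{\mathcal{D}}_n$ provides exactly one representative per class to obtain a unique $\dot{y}_n \in \dot{\mathcal{D}}_n$ with $\dot{y}_n \sim y_n$. Setting $z_n := y_n - \dot{y}_n$, linearity of $\mathcal{D}_n$ gives $z_n \in \mathcal{D}_n$, and bilinearity of $\omega_n$ together with $\dot{y}_n \sim y_n$ yields $z_n \sim 0$, so $z_n \in \mathcal{N}_n$. This produces the required splitting $y_n = \dot{y}_n + z_n$.

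For the trivial intersection, I would take $y_n \in \dot{\mathcal{D}}_n \cap \mathcal{N}_n$. Then $y_n$ is a representative of $[0]$, but so is $0 \in \dot{\mathcal{D}}_n$; since each class has exactly one representative in $\dot{\mathcal{D}}_n$, it follows that $y_n = 0$. Equivalently, one may appeal to Observation~\ref{O3}: $\dot{\mathcal{D}}_n$ is symplectomorphic to $\tilde{\mathcal{D}}_n$, so its inherited form $\omega_n$ is non-degenerate, hence the only element of $\dot{\mathcal{D}}_n$ satisfying $y_n \sim 0$ is the zero vector.

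I do not anticipate a genuine obstacle; the statement is essentially bookkeeping, relying on the linearity of all the spaces involved and on the uniqueness-of-representative property built into the construction of $\dot{\mathcal{D}}_n$. The only point that deserves a careful sentence is verifying that $y_n - \dot{y}_n$ lies in $\mathcal{D}_n$ (not just in $\mathcal{P}_n$) so that it qualifies as an element of $\mathcal{N}_n$, which is where the linearity of $\mathcal{D}_n$ is used.
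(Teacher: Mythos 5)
Your argument is correct. The paper itself does not prove this statement---it simply defers to Observation 5.4 of \cite{Kaninsky2020a}---and what you give is the standard direct-sum bookkeeping one would expect to find there: linearity of the three subspaces, existence of the splitting $y_n = \dot{y}_n + (y_n - \dot{y}_n)$ via the unique representative, and triviality of the intersection from uniqueness (or, equivalently, from non-degeneracy of $\omega_n$ on $\dot{\mathcal{D}}_n$). The only step I would spell out in one extra sentence is the one you implicitly use when invoking ``exactly one representative per class'': the equivalence relation on $\mathcal{D}_n$ matches the one on $\mathcal{S}_{0t}$ under $\mathbb{P}_n$, which follows because $\omega(y,z)=\omega_n(y_n,z_n)$ and every $z_n\in\mathcal{D}_n$ arises as $\mathbb{P}_n z$ for some solution $z$; with that observed, your proof is complete.
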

\begin{proof}
	See Observation 5.4 of \cite{Kaninsky2020a}.
\end{proof}

\section{Quantization: a Quick Overview}\label{quant}
In this brief intermezzo we take the opportunity to lay out the basics of our canonical quantization procedure, recall the necessary formalism and establish our notation. We include some results which will be needed in the upcoming sections. For more details, we refer the reader to our previous work \cite{Kaninsky2020} and references therein.

\subsection{The Construction and Dirac Formalism}
Throughout this paper, it is assumed that the classical system in question has configuration space $ \mathcal{Q} $ of finite dimension $ q $ and natural vector space structure, which allows one to identify $ \mathcal{Q} = \mathbb{R}^{q} $. Then the phase space is $ \mathcal{P} = \mathbb{R}^{2q} $ with a symplectic form $ \omega : \mathcal{P} \times \mathcal{P} \rightarrow \mathbb{R} $. We will use a textbook construction of the quantum system, in which one defines the Hilbert space $ \mathcal{F} = L^{2}(\mathcal{Q}) $ to be the space of square-integrable complex functions on $ \mathcal{Q} $ with the usual inner product $ (\psi, \varphi) = \int_{\mathbb{R}^{q}} \overline{\psi(x)} \varphi(x) ~ d^{q} x $.

In order to fix a particular representation of the canonical commutation relations and define the coordinate and momentum operators, one needs to fix a basis in $ \mathcal{P} $. For start, we shall use the canonical symplectic basis $ \{ e_{I} \}_{I = 1}^{2q} $. A vector $ y \in \mathcal{P} $ can be written in this basis as $ y = y_{I} e_{I} $ with an implicit summation over $ I = 1, ..., 2q $. The configuration space $ \mathcal{Q} $ is identified as the space spanned by $ \{ e_{A} \}_{A = 1}^{q} $. The symplectic product of two vectors $ y, u \in \mathcal{P} $ has the form
\begin{equation}\label{key}
\omega(y, u) = y_{I} ~ \sigma_{IJ} ~ u_{J} = y_{A} u_{A+q} - y_{A+q} u_{A}
\end{equation}
Upon the Hilbert space $ \mathcal{F} $ one defines the coordinate and momentum operators $ \hat{y}_{A} : \mathcal{D}(\hat{y}_{A}) \rightarrow \mathcal{F} $ and $ \hat{y}_{A+q} : \mathcal{D}(\hat{y}_{A+q}) \rightarrow \mathcal{F} $, respectively, via
\begin{equation}\label{ymult}
\hat{y}_{A} \varphi(x) = x_{A} \varphi(x)
\end{equation}
\begin{equation}\label{yder}
\hat{y}_{A+q} \varphi(x) = - i \frac{\partial}{\partial x_{A}} \varphi(x)
\end{equation}
Note that their domains $ \mathcal{D}(\hat{y}_{A}) $, $ \mathcal{D}(\hat{y}_{A+q}) $ cannot cover the whole $ \mathcal{F} $. Instead, the operators can be defined at most densely in $ \mathcal{F} $. The domains are also not invariant under the action of the operators they belong to (nor other operators from the family), which makes it complicated to consider powers and combinations of the coordinate and momentum operators. One finds that both $ \hat{y}_{A} $ and $ \hat{y}_{A+q} $ are unbounded and \textit{have no eigenvectors} in $ \mathcal{F} $.

To work with the coordinate and momentum operators $ \hat{y}_{A} $, $ \hat{y}_{A+q} $ easily and safely, we take advantage of the formalism of Rigged Hilbert spaces \cite{Madrid2005}. A rigged Hilbert space is a triad of spaces $ \Phi \subset \mathcal{F} \subset \Phi^{\times} $ where $ \mathcal{F} $ is our Hilbert space, $ \Phi $ is a dense subset of $ \mathcal{F} $ containing well behaved test functions (roughly said, one requires $ \Phi $ to be invariant under the action of any combination of the coordinate and momentum operators), and $ \Phi^{\times} $ is the space of (suitably bounded) \textit{antilinear} functionals over $ \Phi $. This space accommodates the \textit{right states} $ \vert \varphi \rangle \in \Phi^{\times} $, a.k.a. ``kets''  of the Dirac formalism.

In our case, $ \Phi $ is found to be the Schwartz space of smooth rapidly decreasing functions on $ \mathbb{R}^{q} $. Thanks to its invariance under the action of any combination of $ \hat{y}_{A} $ and $ \hat{y}_{A+q} $, the expectation values $ (\varphi, \hat{y}_{I}^{m_{I}} \varphi) $ for $ \varphi \in \Phi $ are finite for any $ m_{I} \in \mathbb{N} $. Likewise, for coordinate and momentum operators narrowed to $ \Phi $, the canonical commutation relation can be written in the illustrious form
\begin{equation}\label{ccr}
[ \hat{y}_{I} , \hat{y}_{J} ] = i \sigma_{IJ} \hat{\mathbf{1}}
\end{equation}
as an identity on $ \Phi $.

In analogy to the above, one introduces a second rigged Hilbert space $ \Phi \subset \mathcal{F} \subset \Phi^{\overline{\times}} $, where $ \Phi^{\overline{\times}} $ is the space of (suitably bounded) \textit{linear} functionals over $ \Phi $, accommodating \textit{left states} $ \langle \varphi \vert \in \Phi^{\overline{\times}} $, a.k.a. ``bras''. The two described rigged Hilbert spaces are in one-to-one correspondence. With these in place, one can define left and right eigenstates of self-adjoint operators. It happens that although the coordinate and momentum operators (which are naturally self-adjoint) do not have eigenvectors in $ \mathcal{F} $, they do have eigenstates in $ \Phi^{\times} $ and $ \Phi^{\overline{\times}} $.

Thanks to the construction very roughly outlined above, one can take advantage of the powerful Dirac formalism. In particular, one gives good meaning to the equations
\begin{equation}\label{key}
\hat{y}_{A} \vert \alpha \rangle_{\mathtt{c}} = \alpha_{A} \vert \alpha \rangle_{\mathtt{c}}
\end{equation}
\begin{equation}\label{key}
\hat{y}_{A+q} \vert \beta \rangle_{\mathtt{m}} = \beta_{A} \vert \beta \rangle_{\mathtt{m}}
\end{equation}
which define the eigenvalues $ \alpha_{A}, \beta_{A} \in \mathbb{R} $ as well as the eigenstates $ \vert \alpha \rangle_{\mathtt{c}}, \vert \beta \rangle_{\mathtt{m}} \in \Phi^{\times} $ of the observables $ \hat{y}_{A}, \hat{y}_{A+q} $. Then one can rely on the basic rules known from quantum mechanics, most importantly the eigenstate wavefunctions and resolutions of identity. Upon using the symbolic summation
\begin{equation}\label{key}
\sum_{\alpha} \equiv \int_{\mathbb{R}} \prod_{A = 1}^{q} d \alpha_{A}
\end{equation}
these amount to
\begin{equation}\label{formfieldmom}
\begin{aligned}
_{\mathtt{c}} \langle \alpha \vert \gamma \rangle _{\mathtt{c}} &= \delta^{q}(\alpha - \gamma) \\
_{\mathtt{c}} \langle \alpha \vert \beta \rangle_{\mathtt{m}} &= (2\pi)^{-q/2} ~ e^{ i \beta_{A} \alpha_{A} } \\
\hat{\mathbf{1}} &= \sum_{\alpha} \vert \alpha \rangle_{\mathtt{c}} ~ _{\mathtt{c}} \langle \alpha \vert \\
\hat{\mathbf{1}} &= \sum_{\beta} \vert \beta \rangle _{\mathtt{m}} ~ _{\mathtt{m}} \langle \beta \vert
\end{aligned}
\end{equation}

\subsection{The Symplectic Transformation of Observables}

Let us consider the following problem. In addition to $ \hat{y}_{A} $ and $ \hat{y}_{A+q} $, one can introduce new observables $ \hat{\dot{y}}_{A} $, $ \hat{\dot{y}}_{A+q} $ with respect to a new symplectic basis $ \{ \dot{e}_{I} \}_{I = 1}^{2q} $ of $ \mathcal{P} $, and define the corresponding new eigenstates $ \vert \alpha \rangle_{\dot{\mathtt{c}}}, \vert \beta \rangle_{\dot{\mathtt{m}}} \in \Phi^{\times} $ by
\begin{equation}\label{key}
\begin{aligned}
\hat{\dot{y}}_{A} \vert \alpha \rangle_{\dot{\mathtt{c}}} &= \alpha_{A} \vert \alpha \rangle_{\dot{\mathtt{c}}}\\
\hat{\dot{y}}_{A+q} \vert \beta \rangle_{\dot{\mathtt{m}}} &= \beta_{A} \vert \beta \rangle_{\dot{\mathtt{m}}}
\end{aligned}
\end{equation}
In section 4 of \cite{Kaninsky2020}, we provided explicit formulas for expressing such eigenstates in the canonical coordinate eigenstate basis. For a transformation
\begin{equation}\label{key}
e_{J} = \dot{e}_{I} W_{IJ}
\end{equation}
given by a $ 2q \times 2q $ symplectic matrix
\begin{equation}\label{key}
W = \begin{pmatrix}
E & F\\
G & H
\end{pmatrix}
\end{equation}
it was found that the new eigenstates are given by
\begin{equation}\label{cdot}
\vert \alpha \rangle_{\dot{\mathtt{c}}} = \sum_{\xi} \vert V_{1} \xi + E^{+} U_{2} U_{2}^{T} \alpha \rangle_{\mathtt{c}} ~ e^{i C} e^{i \left(- \frac{1}{2} \xi^{T} V_{1}^{T} F^{+} E F^{T} F^{+T} V_{1} \xi + \xi^{T} V_{1}^{T} F^{+} \alpha \right) }
\end{equation}
\begin{equation}\label{mdot}
\vert \beta \rangle_{\dot{\mathtt{m}}} = \sum_{\chi} ~ \vert \acute{V}_{1} \chi + G^{+} \acute{U}_{2} \acute{U}_{2}^{T} \beta \rangle_{\mathtt{c}} ~ e^{i K} e^{i \left(- \frac{1}{2} \chi^{T} \acute{V}_{1}^{T} H^{+} G H^{T} H^{+T} \acute{V}_{1} \chi + \chi^{T} \acute{V}_{1}^{T} H^{+} \beta \right) }
\end{equation}
where $ \xi \in \mathbb{R}^{r} $ with $ r = \text{rank}(F) $ while $ \chi \in \mathbb{R}^{\acute{r}} $ with $ \acute{r} = \text{rank}(H) $. Also, $ V_{1} \equiv V_{1}(F) $ and $ U_{2} \equiv U_{2}(F) $ are matrices originating from the singular value decomposition of $ F $ while $ \acute{V}_{1} \equiv V_{1}(H) $ and $ \acute{U}_{2} \equiv U_{2}(H) $ are matrices originating from the singular value decomposition of $ H $. These results will be used later in a study of a suggested evolution scheme.

Finally, one can search section 4 of \cite{Kaninsky2020} to find the following formulas:
\begin{equation}\label{dotcdotc}
_{\dot{\mathtt{c}}}\langle \varrho \vert \omega \rangle_{\dot{\mathtt{c}}} = \delta^{s}( V_{2}^{T} E^{+} U_{2} U_{2}^{T} (\varrho - \omega) ) e^{-2 \Im C} (2\pi)^{r} \delta^{r}( V_{1}^{T} F^{+} ( \varrho - \omega) )
\end{equation}
\begin{equation}\label{dotmdotm}
_{\dot{\mathtt{m}}}\langle \varrho \vert \omega \rangle_{\dot{\mathtt{m}}} = \delta^{\acute{s}}( \acute{V}_{2}^{T} G^{+} \acute{U}_{2} \acute{U}_{2}^{T} (\varrho - \omega) ) e^{-2 \Im K} (2\pi)^{\acute{r}} \delta^{\acute{r}}( \acute{V}_{1}^{T} H^{+} ( \varrho - \omega) )
\end{equation}
where $ s = q - r $ and $ \acute{s} \equiv q - \acute{r} $. With the help of these, one can choose normalization constants $ e^{i C} $, $ e^{i K} $ such that one brings the wavefunctions to the standard form
\begin{equation}\label{key}
_{\dot{\mathtt{c}}}\langle \varrho \vert \omega \rangle_{\dot{\mathtt{c}}} = \delta^{q}(\varrho - \omega)
\end{equation}
\begin{equation}\label{varrhoomprodacute}
_{\dot{\mathtt{m}}}\langle \varrho \vert \omega \rangle_{\dot{\mathtt{m}}} = \delta^{q}(\varrho - \omega )
\end{equation}
Unless stated otherwise, we will assume this normalization for all eigenstates.

\section{A Unitary Evolution Model}
Having at our disposal the reduced constraint surfaces $ (\tilde{\mathcal{D}}_{n}, \omega_{n}) $ symplectomorphic to the reduced space of solutions $ (\tilde{\mathcal{S}}_{0t}, \omega) $ defined in section \ref{globsol}, it seems natural to deploy the standard canonical quantization procedure to obtain the corresponding quantum system. In this section, we will give its brief description and discuss some of its remarkable features as well as drawbacks.

We remind that $ \tilde{\mathcal{D}}_{n} $ is the space of equivalence classes of symplectically equivalent phase-space data at time-slice $ n $. Together with the induced form $ \omega_{n} $, it is a symplectic vector space. We shall denote its dimension by $ \dim(\tilde{\mathcal{D}}_{n}) \equiv 2b $. For greater technical ease, we will build the quantum theory on the corresponding representative space $ (\dot{\mathcal{D}}_{n}, \omega_{n}) $ fixed by the choice of $ \dot{\mathcal{S}}_{0t} $, whose dimension is also $ \dim(\dot{\mathcal{D}}_{n}) = 2b $. Altogether, the use of representative spaces makes no difference; one only needs to keep in mind that the choice of $ \dot{\mathcal{S}}_{0t} $ is arbitrary, and as such cannot bear any physical significance.

Because $ (\dot{\mathcal{S}}_{0t}, \omega ) $ is symplectic, we can choose a symplectic basis $ \{ d_{R} \}_{R = 1}^{2b} $ in it and write a representative solution $ u \in \dot{\mathcal{S}}_{0t} $ in this basis as $ u = u_{R} d_{R} $. Recall that in this basis, the symplectic product of representative solutions $ u,v \in \dot{\mathcal{S}}_{0t} $ has the form
\begin{equation}\label{symplunit}
\omega(u,v) = u_{R} ~ \sigma_{RS} ~ v_{S} = u_{C} v_{C+b} - u_{C+b} v_{C}
\end{equation}
with an implicit summation over $ C = 1, ..., b $ or over $ R,S = 1, ..., 2b $. The choice of symplectic basis also induces a splitting into coordinates $ u_{C} $ and momenta $ u_{C+b} $. Let us fix the corresponding basis-dependent configuration space $ \dot{\mathcal{D}}_{0t}^{\mathcal{Q}} $ to be the space spanned by $ \{ d_{C} \}_{C = 1}^{b} $. If we look at the individual time-slices, we see that by virtue of the time-slice projection $ \mathbb{P}_{n} $, our symplectic basis $ \{ d_{R} \}_{R = 1}^{2b} $ of the representative solutions space $ \dot{\mathcal{S}}_{0t} $ generates bases $ \{ d_{nR} \}_{R = 1}^{2b} $ of the constraint surfaces $ \dot{\mathcal{D}}_{n} $. Moreover, these bases are symplectic w.r.t. the time-slice symplectic forms $ \omega_{n} $. They fix the corresponding configuration spaces $ \dot{\mathcal{D}}_{n}^{\mathcal{Q}} $. A point $ u_{n} \in \dot{\mathcal{D}}_{n} $ is written in basis as $  u_{n} = u_{n R} d_{n R} $ and a point $ v_{n} \in \dot{\mathcal{D}}_{n}^{\mathcal{Q}} $ as $ v_{n} = v_{n C} d_{n C} $.

It is now possible to straightforwardly apply our quantization procedure outlined in the preceding section. At every time-slice $ n $, we introduce the Hilbert space $ \mathcal{H}_{n} \equiv L^{2}(\dot{\mathcal{D}}_{n}^{\mathcal{Q}}) $ and equip it with the coordinate and momentum operators $ \hat{u}_{n C} $ and $ \hat{u}_{n C+b} $, respectively. Physically speaking, these measure the values of coordinates $ u_{n C} $ and momenta $ u_{n C+b} $ in the above chosen symplectic basis $ \{ d_{n R} \}_{R = 1}^{2b} $ of $ \dot{\mathcal{D}}_{n} $. Next one constructs the rigged Hilbert spaces $ \Psi_{n} \subset \mathcal{H}_{n} \subset \Psi_{n}^{\times} $ and $ \Psi_{n} \subset \mathcal{H}_{n} \subset \Psi_{n}^{\overline{\times}} $, including the space $ \Psi_{n} $ of test functions and the spaces $ \Psi^{\times}_{n} $, $ \Psi^{\overline{\times}}_{n} $ of right states (antilinear functionals) and left states (linear functionals), respectively. Then one may define the eigenstates $ \vert \alpha_{n} \rangle_{\mathtt{c}}, \vert \beta_{n} \rangle_{\mathtt{m}} \in \Psi^{\times}_{n} $ by the equations
\begin{equation}\label{newgeneig}
\begin{aligned}
\hat{u}_{nC} \vert \alpha_{n} \rangle_{\mathtt{c}} &= \alpha_{nC} \vert \alpha_{n} \rangle_{\mathtt{c}} \\
\hat{u}_{nC+b} \vert \beta_{n} \rangle_{\mathtt{m}} &= \beta_{nC} \vert \beta_{n} \rangle_{\mathtt{m}}
\end{aligned}
\end{equation}
with $ b $-tuples of eigenvalues $ \alpha_{n}, \beta_{n} \in \mathbb{R}^{b} $. For convenience, we may not write $ \vert \alpha_{n} \rangle_{\mathtt{c}} $ but rather $ \vert \alpha \rangle_{n\mathtt{c}} $ for the eigenstates; the latter notation will be preferred throughout this section. Thanks to the above construction, one can take the full advantage of the Dirac formalism, especially the essential results \eqref{formfieldmom} providing wavefunctions and resolutions of identity.

The definition of the quantum system is almost complete. Our last task is to specify an evolutionary mapping between the rigged Hilbert spaces at individual time-slices. There is no need to make extra effort in order to do that, as the mapping is naturally induced by the classical solutions. According to Observation \ref{O3}, there is a symplectomorphism between $ (\dot{\mathcal{S}}_{0t}, \omega) $ and any $ (\dot{\mathcal{D}}_{n}, \omega_{n}) $, and thus also between any $ (\dot{\mathcal{D}}_{n}, \omega_{n}) $ and $ (\dot{\mathcal{D}}_{m}, \omega_{m}) $ for all $ n,m \in \{ 0, ..., t \} $. Observe that in the chosen coordinates, its form is extremely simple:
\begin{equation}\label{unitevol}
u_{R} = u_{n R} = u_{m R}
\end{equation}
In other words, all the time evolution has been captured in the evolution of the basis vectors, i.e. the solutions $ d_{R} $, and we give coordinates (and momenta) with respect to these. Thanks to the trivial prescription \eqref{unitevol}, we are able to straightforwardly define a fully analogical evolution prescription for the quantum system. For right states, it shall be $ \mathbb{U}_{mn}: \Psi_{n}^{\times} \rightarrow \Psi_{m}^{\times} $ fixed on the coordinate eigenstate basis by
\begin{equation}\label{Uunitdef}
\mathbb{U}_{mn} \vert \alpha \rangle_{n\mathtt{c}} = \vert \alpha \rangle_{m\mathtt{c}}
\end{equation}
for any $ \alpha \in \mathbb{R}^{b} $. We remark that the definition needs not be limited to the case $ m > n $, i.e., the evolution also works backwards. At this point we must add two technicalities. First we define an analogical evolution map for left states. It shall be $ \overline{\mathbb{U}_{mn}} : \Psi_{n}^{\overline{\times}} \rightarrow \Psi_{m}^{\overline{\times}}  $, whose action on $ \langle \psi_{n} \vert $ is denoted in the Dirac formalism as $ \langle \psi_{n} \vert \overline{\mathbb{U}_{mn}} $. It shall be given by $ \langle \psi_{n} \vert \overline{\mathbb{U}_{mn}} = \overline{\mathbb{U}_{mn} \vert \psi_{n} \rangle}  $ where the overline stands for the duality between $ \Psi_{n}^{\times} $ and $ \Psi_{n}^{\overline{\times}} $. In other words, if one denotes $ \vert \psi_{m} \rangle \equiv \mathbb{U}_{mn} \vert \psi_{n} \rangle $, one has $  \langle \psi_{n} \vert \overline{\mathbb{U}_{mn}} = \langle \psi_{m} \vert $. Second, we define the reversed evolution map $ \mathbb{U}_{mn}^{\ast} : \Psi_{m}^{\times} \rightarrow \Psi_{n}^{\times} $ by $ \langle \varphi_{n} \vert \mathbb{U}_{mn}^{\ast} \vert \psi_{m} \rangle = \langle \varphi_{n} \vert \overline{\mathbb{U}_{mn}} \vert \psi_{m} \rangle $ for all $ \vert \varphi_{n} \rangle \in \Psi_{n}^{\times} $ and $ \vert \psi_{m} \rangle \in \Psi_{m}^{\times} $. Now, these definitions together with \eqref{formfieldmom} imply
\begin{equation}\label{key}
_{\mathtt{c}n}\langle \gamma \vert \mathbb{U}_{mn}^{\ast} \vert \alpha \rangle_{m\mathtt{c}} = ~_{\mathtt{c}m}\langle \gamma \vert \alpha \rangle_{m\mathtt{c}} = \delta(\alpha - \gamma) = ~_{\mathtt{c}n}\langle \gamma \vert \alpha \rangle_{n\mathtt{c}}
\end{equation}
which means that
\begin{equation}\label{key}
\mathbb{U}_{mn}^{\ast} = \mathbb{U}_{nm}
\end{equation}
This in turn implies
\begin{equation}\label{key}
\mathbb{U}_{mn}^{\ast} \mathbb{U}_{mn} =  \mathbb{U}_{nm} \mathbb{U}_{mn} = \hat{\mathbf{1}}_{n}
\end{equation}
which can be interpreted as \textit{unitarity} of the evolution.

Eventually, thanks to the symplectomorphism between $ (\dot{\mathcal{S}}_{0t}, \omega) $ and $ (\dot{\mathcal{D}}_{n}, \omega_{n}) $, it makes good sense to move away from the idea of separate time-steps and instead view the system as a whole. Note that we could have just as well started with the Hilbert space $ \mathcal{H} \equiv L^{2}(\dot{\mathcal{D}}_{0t}^{\mathcal{Q}}) $. We can therefore erase all the time-slice subindices and forget about the evolution map completely, obtaining an equivalent quantum theory built on (a representative space of classes of symplectically equivalent) solutions.

\subsection{Path Integral}
Consider two states $ \vert \psi_{in} \rangle, \vert \psi_{fi} \rangle \in \Psi^{\times} $ which will represent initial and final state of the system across time interval $ n = 0, ..., t $. Since we are working in the representative solution space now, both states are defined on the whole interval. Nevertheless, we anticipate that $ \vert \psi_{in} \rangle $ shall be subjected to measurement on time-slice $ 0 $, while $ \vert \psi_{fi} \rangle $ shall be subjected to measurement on time-slice $ t $. The phase corresponding to finding the system in \textit{both} of these states, i.e., measuring $ \vert \psi_{in} \rangle $ at $ n = 0 $ and $ \vert \psi_{fi} \rangle $ at $ n = t $, is taken to be $ \langle \psi_{fi} \vert \psi_{in} \rangle $. We can now simulate measurements on every intermediate time-slice by entering copies of the coordinate and momentum resolutions of unity, and pass to path-integral formulation.

Let us abuse notation and write the states in the coordinate resolution of unity on $ \Psi^{\times} $ with subindex $ n $ as means of differentiating the integration variables between time-slices, i.e., $ \vert \alpha_{n} \rangle_{\mathtt{c}} \in \Psi^{\times} $ and similarly for momenta. Then, using \eqref{formfieldmom}, we can rewrite the wavefunction between the initial and final state as
\begin{equation}\label{pi}
\begin{aligned}
\langle \psi_{fi} \vert \psi_{in} \rangle &= \sum_{\alpha_{0}...\alpha_{t}} \sum_{\beta_{1}...\beta_{t}} \langle  \psi_{fi} \vert \alpha_{t} \rangle_{\mathtt{c}} ~_{\mathtt{c}}\langle \alpha_{t} \vert \beta_{t} \rangle_{\mathtt{c}} ~_{\mathtt{c}}\langle \beta_{t} \vert \alpha_{t-1} \rangle_{\mathtt{c}} ~_{\mathtt{c}}\langle \alpha_{t-1} \vert ~ ... \\ & \qquad \qquad ~ \vert \alpha_{1} \rangle_{\mathtt{c}} ~_{\mathtt{c}}\langle \alpha_{1} \vert \beta_{1} \rangle_{\mathtt{c}} ~_{\mathtt{c}}\langle \beta_{1} \vert \alpha_{0} \rangle_{\mathtt{c}} ~_{\mathtt{c}}\langle \alpha_{0} \vert \psi_{in} \rangle = \\
&= \sum_{\alpha_{0}...\alpha_{t}} \sum_{\beta_{1}...\beta_{t}} \langle  \psi_{fi} \vert \alpha_{t} \rangle_{\mathtt{c}} ~_{\mathtt{c}}\langle \alpha_{0} \vert \psi_{in} \rangle ~ (2 \pi)^{-tb} ~ e^{i \check{S}(\alpha_{0}, ..., \alpha_{t}, \beta_{1}, ..., \beta_{t})}
\end{aligned}
\end{equation}
where we have denoted
\begin{equation}\label{checkS}
\check{S}(\alpha_{0}, ..., \alpha_{t}, \beta_{1}, ..., \beta_{t}) = \sum_{n = 0}^{t-1} \beta_{n+1}^{T}( \alpha_{n+1} - \alpha_{n} )
\end{equation}
Of course, we have done nothing here except putting the wavefunction $ \langle \psi_{fi} \vert \psi_{in} \rangle $ into a more complicated form than it actually deserves. One can see that the role of action is played by the function $ \check{S}(\alpha_{0}, ..., \alpha_{t}, \beta_{1}, ..., \beta_{t}) $. One does not expect to get the original action; after all, the integration variables are different from the canonical ones. Besides, the integral \eqref{pi} sums only over initial data $ \dot{\mathcal{D}}_{n} $ of our representative solutions and ignores data from the null space, as well as data that do not give rise to solutions at all. This is of course a consequence of our manipulations with the phase space which enabled us to find a conserved symplectic structure and define the quantum theory canonically.

\subsection{Unimportance of the Representative Space}
In the context of path integral, one can check easily that the choice of the representative space of solutions $ \dot{\mathcal{S}}_{0t} $ has no relevance for physical results drawn from the theory. In this quick intermezzo, we show how to do it.

The eigenvalue vector $ \alpha_{n} \in \mathbb{R}^{b} $ can be interpreted as a $ b $-tuple of coordinates $ \alpha_{n C} $ of a point $ \alpha_{n} \in \dot{\mathcal{S}}_{0t} $ w.r.t. the basis $ \{ d_{R} \}_{R = 1}^{2b} $ (we abuse notation by neglecting to distinguish between the two). This point is of course not fixed, since we did not specify the momenta $ \alpha_{n C+b} $. We can do the same with $ \beta_{n} \in \mathbb{R}^{b} $, which is in turn interpreted as a $ b $-tuple of momenta $ \beta_{n C+b} $ of a point $ \beta_{n} \in \dot{\mathcal{S}}_{0t} $, whose coordinates $ \beta_{n C} $ are unspecified. This has some justification in the interpretation of eigenstates. Nonetheless, we do it only because it allows us to pull a little trick. From \eqref{symplunit} we know $ \omega(\alpha_{n}, \beta_{n}) = \alpha_{n C} \beta_{n C+b} - \alpha_{n C+b} \beta_{n C} $ so we can rewrite $ \beta_{n}^{T} \alpha_{n} = \beta_{n C+b} \alpha_{n C} = \omega(\alpha_{n}, \beta_{n}) + \alpha_{n C+b} \beta_{n C} $. The single time-step contribution to \eqref{checkS} can then be expressed as
\begin{equation}\label{key}
\check{S}_{n+1} \equiv \beta_{n+1}^{T}( \alpha_{n+1} - \alpha_{n} ) = \omega(\alpha_{n+1}, \beta_{n+1}) - \omega(\alpha_{n}, \beta_{n+1}) + \check{C}_{n+1}
\end{equation}
with $ \check{C}_{n+1} \equiv  \beta_{n+1 C} (\alpha_{n+1 C+b} - \alpha_{n C+b}) $. If we further denote $ \check{C}_{0t} \equiv \sum_{n = 0}^{t-1} \check{C}_{n+1} $, we can eventually express
\begin{equation}\label{key}
\begin{aligned}
\langle \psi_{fi} \vert \psi_{in} \rangle &= e^{- i \check{C}_{0t}} \sum_{\alpha_{0}...\alpha_{t}} \sum_{\beta_{1}...\beta_{t}} \langle  \psi_{fi} \vert \alpha_{t} \rangle_{\mathtt{c}} ~_{\mathtt{c}}\langle \alpha_{0} \vert \psi_{in} \rangle \\ & \qquad \qquad ~ (2 \pi)^{-tb} ~ e^{i \sum_{n = 0}^{t-1} \left[ \omega(\alpha_{n+1}, \beta_{n+1}) - \omega(\alpha_{n}, \beta_{n+1}) \right] }
\end{aligned}
\end{equation}
Since the overall phase $ e^{- i \check{C}_{0t}} $ is irrelevant and the integrand only contains symplectic products of solutions, this form explicitly shows that the result is irrespective of the particular solutions $ \alpha_{n}, \beta_{n} \in \dot{\mathcal{S}}_{0t} $ chosen as representatives of the classes $ [ \alpha_{n} ], [\beta_{n}] \in \tilde{\mathcal{S}}_{0t} $. The choice of $ \dot{\mathcal{S}}_{0t} $ therefore does not bear any physical meaning.

\subsection{Discussion}
Having set up our first version of the quantum theory of a discretely canonically evolved linear dynamical system, we will take a few lines to discuss its qualities. First of all, we want to draw the reader's attention to the fact that the whole work connected to the irregularity of the discrete evolution was carried out on the classical level. The model was rendered regular by finding an appropriate phase space in one-to-one correspondence to an appropriate solution space, and we made sure that both these spaces are symplectic. Thanks to this, the quantization procedure was straightforward, with a trivial (or alternatively none) prescription for evolution. The above described approach had two major consequences. First, the evolution turned out unitary. Second, all the structures within the model are based and thus dependent on classical solutions. This has serious repercussions. To actually draw any results from the theory, like for instance $ \langle \psi_{fi} \vert \psi_{in} \rangle $, one would have to first fully solve the classical system. In this account, the path integral formulation given by \eqref{pi} utterly loses its value since the dynamics is no longer contained in the integrand. Instead, an essential role is played by the integration variables themselves which have to be determined from the classical solutions.

Another questionable feature about the present model is that its functionality is dependent upon the time interval under consideration. Say that we have a system spanning between time-steps $ 0 $ and $ t $. If we choose $ m $ such that $ 0 < m < t $ and limit the system to time-steps between $ 0 $ and $ m $, we in general get a different quantum theory, because there may be (symplectically inequivalent) classical solutions between $ 0 $ and $ m $ which do not extend to $ t $. Consequently, the quantum evolution described in this manner is not causal, and as such cannot be considered physical, but rather effective. In other words, the procedure in which we render classes of symplectically equivalent solutions amounts to fixing the classical system so that we can easily define the quantum counterpart. We should not be surprised that we obtain a different system.

Let us compare our unitary quantum model to the model of \cite{Hoehn2014a}. Therein, the system is first quantized separately at each time-slice $ n $ which gives rise to the co-called \textit{kinematical Hilbert spaces} $ \mathcal{H}_{n}^{\text{kin}} $. From these one constructs the \textit{pre-physical} and \textit{post-physical Hilbert spaces} $ ^{-}\mathcal{H}_{n}^{\text{phys}} $ and $ ^{+}\mathcal{H}_{n}^{\text{phys}} $ with the employment of special improper projectors that implement the pre-constraints and the post-constraints, respectively. The physical Hilbert spaces therefore contain only those states which satisfy the appropriate quantum constraint. These are provided with the special \textit{physical inner product}. Then one can define a unitary map $ U_{n \rightarrow n+1} $ from $ ^{-}\mathcal{H}_{n}^{\text{phys}} $ to $ ^{+}\mathcal{H}_{n+1}^{\text{phys}} $, much like in the classical case. The details of this construction are given in section 4 of \cite{Hoehn2014a}. The constrained quantum evolution moves can be composed into an effective move across multiple time-steps as described in section 6 of that reference.

As it turns out, all of the previously discussed features of our unitary model can be traced in \cite{Hoehn2014a} too. Thanks to the introduction of physical Hilbert spaces, the model admits a unitary evolution map. Moreover, it also displays interval dependence. This feature is perhaps most clearly described in section 6.2, arguing that for $ 0 < m < t $, the number of constraints arising in the interval between $ 0 $ and $ t $ can be only equal to or higher than the number of constraints between $ 0 $ and $ m $ (this applies to both classical and quantum constraints). The effective quantum evolution move across multiple time-steps takes in account all these constraints and the resulting initial and final physical Hilbert spaces are fixed accordingly. It follows that the unitary isomorphism $ U_{n \rightarrow m} $ which maps between the physical Hilbert spaces at question will depend on the chosen interval as well. This is exactly what we observe in our model. Finally let us mention the path integral provided in section 6.4 of \cite{Hoehn2014a}. It is given in terms of formal regularized measures $ d\xi_{l}(x_{l}) $ defined on each phase space $ \mathcal{P}_{l} $. Again, the measures cannot be determined locally since they respond to constraints across the whole time interval of the effective move. Besides the general treatment, this can be seen also in the application to linear systems treated in \cite{Hoehn2014}. In section 10.4 of the reference, the author gives and explicit prescription for the propagator $ K_{0\rightarrow 2} $ of the move between time-slices $ 0 $ and $ 2 $---see equation (10.25) and the subsequent discussion---and explains that the constraints propagate not only forward but also backwards, thereby limiting the number of propagating degrees of freedom. In this lies a clear parallel to our prescription \eqref{pi} which includes only the integration over propagating degrees of freedom. Meanwhile, one would arguably like to have a theory whose evolution map is local in time so that the system does not experience back-propagation of constraints and therefore allows for causal evolution. Moreover, it would be preferable to have a path integral in terms of the canonical coordinates with the dynamics captured fully in the integrand.

\section{A Non-Unitary Evolution Model}
In view of the previous section, we set out to search for a more natural and practical alternative for the quantum evolution. In doing so, we decide to quantize the system separately at each time-step $ n $ without any prior treatment, as it is done in \cite{Hoehn2014, Hoehn2014a}. However, we will not take the path of reducing the Hilbert spaces. On the contrary: we shall look for an evolution scheme which maps every element of the initial state space to the state space one time-step later. This is not to say the scheme will be onto, it will of course not. Consequently, it will also turn out non-unitary. We will implement the pre-constraint in effectively the same way it was done in \cite{Hoehn2014, Hoehn2014a}, but it will actually not yield any divergences for us to worry about. More trouble will be caused by the null space $ \Lambda_{n+1} \subset \mathcal{C}_{n+1}^{+} $ generated by the free parameters $ \lambda_{n+1} $ of the classical Hamiltonian evolution (in the terminology of \cite{Hoehn2014, Hoehn2014a}, it is called the \textit{post-orbit}) but this trouble can be overcome with a suitable choice of regularization. Besides avoiding the questionable features described in the previous section, the present model has another important advantage: it is characterized by a single relatively simple evolution map. Moreover, we spend effort to build this map in a natural way as a direct quantum analogy of the classical evolution. In doing so, we benefit heavily from the classical framework introduced in \cite{Kaninsky2020a} and the ability to perform arbitrary symplectic transformations of quantum-mechanical observables established in \cite{Kaninsky2020}. Thanks to these, we will be able to derive an expression for the propagator without any outer input.

Let us proceed with the construction. Once again we employ the definitions from section \ref{quant}. We start by introducing the Hilbert spaces $ \mathcal{F}_{n} \equiv L^{2}(\mathcal{Q}_{n}) $ and the coordinate and momentum operators $ \hat{y}_{nA} $ and $ \hat{y}_{nA+q} $, respectively, where $ A = 1, ..., q $. These measure the values of coordinates $ y_{nA} $ and momenta $ y_{nA+q} $ in the canonical basis $ \{ e_{n I} \}_{I = 1}^{2q} $ of $ \mathcal{P}_{n} $. Next we consider the suitable rigged Hilbert spaces giving rise to the useful spaces of functionals (states) $ \Phi^{\times}_{n} $ and $ \Phi^{\overline{\times}}_{n} $. Then we may define the eigenstates $ \vert \alpha_{n} \rangle_{\mathtt{c}}, \vert \beta_{n} \rangle_{\mathtt{m}} \in \Phi^{\times}_{n} $ by the equations
\begin{equation}\label{newgeneig}
\begin{aligned}
\hat{y}_{nA} \vert \alpha_{n} \rangle_{\mathtt{c}} &= \alpha_{nA} \vert \alpha_{n} \rangle_{\mathtt{c}}\\
\hat{y}_{nA+q} \vert \beta_{n} \rangle_{\mathtt{m}} &= \beta_{nA} \vert \beta_{n} \rangle_{\mathtt{m}}
\end{aligned}
\end{equation}
This construction allow to take advantage of the Dirac formalism, especially the essential formulas \eqref{formfieldmom} providing wavefunctions and resolutions of identity, and all results derived from these.

\subsection{Eigenstates of the Adapted Observables}

In this section we give explicit expressions for the eigenstates of all observables (i.e., coordinates and momenta) measured in the adapted coordinates described in section \ref{constrsurf}. Therein we have provided new symplectic bases $ \{ \dot{e}_{nI} \}_{I = 1}^{2q} $ and $ \{ \ddot{e}_{n+1I} \}_{I = 1}^{2q} $ of $ \mathcal{P}_{n} $ and $ \mathcal{P}_{n+1} $, respectively, defined by the symplectic matrices
\begin{equation}\label{dotW2}
\dot{W}_{n} = \begin{pmatrix}
-U^{T}L_{n} & U^{T} \\
-U^{T} & 0
\end{pmatrix}
\end{equation}
and
\begin{equation}\label{ddotW2}
\ddot{W}_{n+1} = \begin{pmatrix}
\bar{\Sigma} V^{T} & 0 \\
- \bar{\Sigma}^{-1} V^{T} \bar{R}_{n+1} & \bar{\Sigma}^{-1} V^{T}
\end{pmatrix}
\end{equation}
Recall that the matrices $ U $, $ V $ and $ \Sigma_{r} $ originate in the singular value decomposition of $ R_{n} $ and $ \bar{\Sigma} $ is just a version of $ \Sigma_{r} $ extended to a $ q \times q $ matrix by inserting the $ s_{n} \times s_{n} $ unit matrix at the diagonal. According to \eqref{uddotX}, \eqref{uddotM} and \eqref{uddotMq}, the one-step Hamiltonian evolution $ u_{n+1} = \mathbb{H}_{n+1}(\lambda_{n+1}) u_{n} \in \mathcal{C}_{n+1}^{+} $ of a vector $ u_{n} \in \mathcal{C}_{n}^{-} $ is given in the adapted coordinates by
\begin{equation}\label{dotuddotu}
\begin{aligned}
\ddot{u}_{n+1E} &= \dot{u}_{nE}\\
\ddot{u}_{n+1E+q} &= \dot{u}_{nE+q}\\
\ddot{u}_{n+1M} &= \lambda_{n+1 M-r_{n}}\\
\ddot{u}_{n+1M+q} &= \dot{u}_{nM} = 0
\end{aligned}
\end{equation}
for all $ E = 1,..., r_{n} $ and all $ M = r_{n}+1, ..., q $. We remark that we chose to switch the coordinates and momenta in the irregular sector in order to have a simpler transformation matrix \eqref{ddotW2}.

We shall now introduce new observables $ \hat{\dot{y}}_{nA} $, $ \hat{\dot{y}}_{nA+q} $ measuring coordinates and momenta w.r.t. the basis $ \{ \dot{e}_{nI} \}_{I = 1}^{2q} $ of $ \mathcal{P}_{n} $ and consider the corresponding new eigenstates $ \vert \alpha_{n} \rangle_{\dot{\mathtt{c}}}, \vert \beta_{n} \rangle_{\dot{\mathtt{m}}} \in \Phi^{\times}_{n} $ defined by the equations $ \hat{\dot{y}}_{nA} \vert \alpha_{n} \rangle_{\dot{\mathtt{c}}} = \alpha_{nA} \vert \alpha_{n} \rangle_{\dot{\mathtt{c}}} $ and $ \hat{\dot{y}}_{nA+q} \vert \beta_{n} \rangle_{\dot{\mathtt{m}}} = \beta_{nA} \vert \beta_{n} \rangle_{\dot{\mathtt{m}}} $. By plugging from the corresponding symplectic matrix \eqref{dotW2} into the formula \eqref{cdot}, we get
\begin{equation}\label{alphandot}
\vert \alpha_{n} \rangle_{\dot{\mathtt{c}}} = \sum_{\xi_{n}} \vert \xi_{n} \rangle_{\mathtt{c}} ~ e^{i C} e^{ i \left( \frac{1}{2} \xi_{n}^{T} L_{n} \xi_{n} + \xi_{n}^{T} U \alpha_{n} \right) }
\end{equation}
where $ \xi_{n} \in \mathbb{R}^{q} $ and we used the trivial singular value decomposition $ U(R_{n})^{T} = U(R_{n})^{T} \mathbf{1} \mathbf{1} $. Before we continue, we are going to find the appropriate normalization stemming from our requirement $ ~_{\dot{\mathtt{c}}} \langle \gamma_{n} \vert \alpha_{n} \rangle_{\dot{\mathtt{c}}} = \delta( \gamma_{n} - \alpha_{n} ) $. Using \eqref{dotcdotc}, we find
\begin{equation}\label{key}
~_{\dot{\mathtt{c}}} \langle \gamma_{n} \vert \alpha_{n} \rangle_{\dot{\mathtt{c}}} = e^{-2 \Im C} (2\pi)^{q} \delta^{q} \left( \gamma_{n} - \alpha_{n} \right)
\end{equation}
where we recalled that $ \delta( O \theta) = \delta(\theta) $ for $ O $ orthogonal. If we fix the arbitrary real part of $ C $ to zero, we get the normalization
\begin{equation}\label{key}
e^{iC} = (2\pi)^{-q/2}
\end{equation}
and we end up with
\begin{equation}\label{alphandot2}
\vert \alpha_{n} \rangle_{\dot{\mathtt{c}}} = \sum_{\xi_{n}} \vert \xi_{n} \rangle_{\mathtt{c}} ~ (2\pi)^{-q/2} ~ e^{ i \left( \frac{1}{2} \xi_{n}^{T} L_{n} \xi_{n} + \xi_{n}^{T} U \alpha_{n} \right) }
\end{equation}
Let us continue by expressing the momentum eigenstates. By plugging from the matrix \eqref{dotW2} into \eqref{mdot}, we immediately obtain
\begin{equation}\label{betandot}
\vert \beta_{n} \rangle_{\dot{\mathtt{m}}} = \vert - U \beta_{n} \rangle_{\mathtt{c}} ~ e^{i K}
\end{equation}
The appropriate normalization given by the requirement $ ~_{\dot{\mathtt{m}}} \langle \gamma_{n} \vert \beta_{n} \rangle_{\dot{\mathtt{m}}} = \delta( \gamma_{n} - \beta_{n} ) $ is easily found to be $ e^{iK} = 1 $, so we have
\begin{equation}\label{betandot2}
\vert \beta_{n} \rangle_{\dot{\mathtt{m}}} = \vert - U \beta_{n} \rangle_{\mathtt{c}}
\end{equation}
Eventually, we shall give the momentum wavefunction $ _{\dot{\mathtt{c}}} \langle \alpha_{n} \vert \beta_{n} \rangle_{\dot{\mathtt{m}}} $. After plugging in from \eqref{alphandot2} and \eqref{betandot2}, one gets
\begin{equation}\label{dotcm}
_{\dot{\mathtt{c}}} \langle \alpha_{n} \vert \beta_{n} \rangle_{\dot{\mathtt{m}}} = (2\pi)^{-q/2} ~ e^{i \alpha_{n}^{T} \beta_{n}} ~ e^{- i \frac{1}{2} \beta_{n}^{T} U^{T} L_{n} U \beta_{n}} 
\end{equation}
We must appreciate how simple the wavefunction is. Also, it contains the canonical momentum wavefunction $ _{\mathtt{c}} \langle \alpha_{n} \vert \beta_{n} \rangle_{\mathtt{m}} = (2\pi)^{-q/2} ~ e^{i \alpha_{n}^{T} \beta_{n}} $ which is a good sign.

\vspace{\baselineskip}

Now we shall move to the next time-slice. Analogically to the previous treatment, we introduce the observables $ \hat{\ddot{y}}_{n+1A} $, $ \hat{\ddot{y}}_{n+1A+q} $ with respect to the new symplectic basis $ \{ \ddot{e}_{n+1 I} \}_{I = 1}^{2q} $ of $ \mathcal{P}_{n+1} $ and consider the corresponding new generalized eigenvectors $ \vert \alpha_{n+1} \rangle_{\ddot{\mathtt{c}}}, \vert \beta_{n+1} \rangle_{\ddot{\mathtt{m}}} \in \Phi^{\times}_{n+1} $ defined by $ \hat{\ddot{y}}_{n+1A} \vert \alpha_{n+1} \rangle_{\ddot{\mathtt{c}}} = \alpha_{n+1A} \vert \alpha_{n+1} \rangle_{\ddot{\mathtt{c}}} $ and $ \hat{\ddot{y}}_{n+1A+q} \vert \beta_{n+1} \rangle_{\ddot{\mathtt{m}}} = \beta_{n+1A} \vert \beta_{n+1} \rangle_{\ddot{\mathtt{m}}} $. We plug from \eqref{ddotW2} into the formula \eqref{cdot}, obtaining
\begin{equation}\label{key}
\vert \alpha_{n+1} \rangle_{\ddot{\mathtt{c}}} = \vert V \bar{\Sigma}^{-1} \alpha_{n+1} \rangle_{\mathtt{c}} ~ e^{i C}
\end{equation}
The normalization given by $ ~_{\ddot{\mathtt{c}}} \langle \gamma_{n+1} \vert \alpha_{n+1} \rangle_{\ddot{\mathtt{c}}} = \delta( \gamma_{n+1} - \alpha_{n+1} ) $ is easily found from
\begin{equation}\label{key}
\begin{aligned}
~_{\ddot{\mathtt{c}}} \langle \gamma_{n+1} \vert \alpha_{n+1} \rangle_{\ddot{\mathtt{c}}} &= e^{-2 \Im C} ~ \delta^{q}\left( V \bar{\Sigma}^{-1} (\alpha_{n+1} - \gamma_{n+1}) \right) = \\ &= \abs{\det \Sigma_{r} } ~ e^{-2 \Im C} ~ \delta^{q}\left( \gamma_{n+1} - \alpha_{n+1} \right)
\end{aligned}
\end{equation}
to be
\begin{equation}\label{key}
e^{iC} = \abs{\det \Sigma_{r} }^{-1/2}
\end{equation}
We thus have
\begin{equation}\label{ddotc}
\vert \alpha_{n+1} \rangle_{\ddot{\mathtt{c}}} = \abs{\det \Sigma_{r} }^{-1/2} ~ \vert V \bar{\Sigma}^{-1} \alpha_{n+1} \rangle_{\mathtt{c}}
\end{equation}
Let us look at momenta. By plugging from \eqref{dotW2} into \eqref{mdot}, we find
\begin{equation}\label{ddotm}
\vert \beta_{n+1} \rangle_{\ddot{\mathtt{m}}} = \sum_{\chi_{n+1}} ~ \vert V \chi_{n+1} \rangle_{\mathtt{c}} ~ e^{i K} ~ e^{i \left( \frac{1}{2} \chi_{n+1}^{T} V^{T} \bar{R}_{n+1} V \chi_{n+1} + \chi_{n+1}^{T} \bar{\Sigma} \beta_{n+1} \right) }
\end{equation}
and using \eqref{dotmdotm}, we obtain the normalization
\begin{equation}\label{ddotnorm}
e^{iK} = (2\pi)^{-q/2} \abs{\det \Sigma_{r} }^{1/2}
\end{equation}
Thus the final expression is
\begin{equation}\label{ddotm2}
\vert \beta_{n+1} \rangle_{\ddot{\mathtt{m}}} = \sum_{\chi_{n+1}} ~ \vert V \chi_{n+1} \rangle_{\mathtt{c}} ~ (2\pi)^{-q/2} ~ \abs{\det \Sigma_{r} }^{1/2} ~ e^{i \left( \frac{1}{2} \chi_{n+1}^{T} V^{T} \bar{R}_{n+1} V \chi_{n+1} + \chi_{n+1}^{T} \bar{\Sigma} \beta_{n+1} \right) }
\end{equation}
We close by giving the momentum wavefunction $ _{\ddot{\mathtt{c}}} \langle \alpha_{n+1} \vert \beta_{n+1} \rangle_{\ddot{\mathtt{m}}} $. Plugging in from the equations \eqref{ddotc} and \eqref{ddotm2}, we get
\begin{equation}\label{ddotcm}
~_{\ddot{\mathtt{c}}} \langle \alpha_{n+1} \vert \beta_{n+1} \rangle_{\ddot{\mathtt{m}}} = (2\pi)^{-q/2} ~ e^{i \alpha_{n+1}^{T} \beta_{n+1}} ~ e^{i\frac{1}{2} \alpha_{n+1}^{T} \bar{\Sigma}^{-1} V^{T} \bar{R}_{n+1} V \bar{\Sigma}^{-1} \alpha_{n+1}}
\end{equation}
Again, the wavefunction is outstandingly simple and contains the canonical momentum wavefunction $ _{\mathtt{c}} \langle \alpha_{n+1} \vert \beta_{n+1} \rangle_{\mathtt{m}} = (2\pi)^{-q/2} ~ e^{i \alpha_{n+1}^{T} \beta_{n+1}} $.

\subsection{The One-Step Evolution Scheme}

We eventually have the necessary structures to introduce the promised evolution scheme. It shall be realized by the map $ \mathbb{U}_{n+1}: \Phi^{\times}_{n} \rightarrow \Phi^{\times}_{n+1} $ between the generalized time-slice state spaces of our quantum system. We will define it in three steps. In the first step, we introduce a partial evolution map $ \bar{\mathbb{U}}_{n+1}: \Phi^{\times}_{n} \rightarrow \Phi^{\times}_{n+1} $ defined only for the adapted coordinate eigenstates $ \vert \alpha_{n} \rangle_{\dot{\mathtt{c}}} $ satisfying the pre-constraint (i.e., $ \alpha_{n} \in \mathbb{R}^{q} $ such that $ \alpha_{n M} = 0 $ for all $ M = r_{n}+1, ..., q $). In the second step, we consider a suitable generalization to all coordinate eigenstates. In the third step, we check the action of the mapping on the momentum eigenstates, and fix it so that they are correctly transported too.

We suggest that $ \bar{\mathbb{U}}_{n+1} $ should be defined as follows. Assume that
\begin{equation}\label{key}
\alpha_{n} = \begin{pmatrix}
\varrho_{n}\\
0
\end{pmatrix}
\end{equation}
Then we fix
\begin{equation}\label{Un+1bar}
\bar{\mathbb{U}}_{n+1} \vert \alpha_{n} \rangle_{\dot{\mathtt{c}}} = \sum_{\lambda_{n+1}} \mathcal{l}(\lambda_{n+1}) ~ \vert \alpha_{n+1} \rangle_{\ddot{\mathtt{c}}} ~ f(\alpha_{n+1})
\end{equation}
with
\begin{equation}\label{key}
\alpha_{n+1} \equiv \begin{pmatrix}
\varrho_{n}\\
\lambda_{n+1}
\end{pmatrix}
\end{equation}
and a complex locally integrable function $ f(\alpha_{n+1}) = f(\varrho_{n}, \lambda_{n+1}) $ of the vector $ \alpha_{n+1} $ which will be specified later. Note that we have built the image of $ \bar{\mathbb{U}}_{n+1} $ in such a way that its regular part $ \varrho_{n} \in \mathbb{R}^{r_{n}} $ is copied from the previous time-slice in fashion of the classical evolution, while the singular part is made of a linear combination of the classical free parameters $ \lambda_{n+1} \in \mathbb{R}^{s_{n}} $ parametrizing the null space; cf. \eqref{dotuddotu}. We have further included another complex locally integrable function $ \mathcal{l}(\lambda_{n+1}) $ which is supposed to serve as a regularization of the integral over $ \lambda_{n+1} $. We would of course prefer the linear combination over $ \lambda_{n+1} $ to be uniform ($ \mathcal{l} = const. $), but that will soon turn out problematic. The regularization procedure, including a special choice of $ \mathcal{l} $, will be provided in the next section.

It remains a question what happens to those coordinate eigenstates which are not of the form assumed in \eqref{Un+1bar}. We answer by introducing the map $ \Xi_{n} : \Phi^{\times}_{n} \rightarrow \Phi^{\times}_{n} $ satisfying
\begin{equation}\label{Chi}
\Xi_{n} \sum_{\iota_{n}} \vert \alpha_{n} \rangle_{\dot{\mathtt{c}}} ~ h(\alpha_{n}) = \vert P_{r_{n}} \alpha_{n} \rangle_{\dot{\mathtt{c}}} ~ h(P_{r_{n}} \alpha_{n})
\end{equation}
where
\begin{equation}\label{key}
\alpha_{n} \equiv \begin{pmatrix}
\varrho_{n} \\
\iota_{n}
\end{pmatrix} \qq{and} P_{r_{n}} \equiv \begin{pmatrix}
\mathbf{1} & 0 \\
0 & 0
\end{pmatrix} \qq{so that} P_{r_{n}} \alpha_{n} = \begin{pmatrix}
\varrho_{n} \\
0
\end{pmatrix} 
\end{equation}
with $ \varrho_{n} \in \mathbb{R}^{r_{n}} $ and $ \iota_{n} \in \mathbb{R}^{s_{n}} $. Above, $ h : \mathbb{R}^{q} \rightarrow \mathbb{R} $ is any locally integrable function. In a more relaxed notation, a physicist could write $ \Xi_{n} \vert \alpha_{n} \rangle_{\dot{\mathtt{c}}} = \vert \alpha_{n} \rangle_{\dot{\mathtt{c}}} ~ \delta^{s_{n}}( J_{s_{n}}^{T} \alpha_{n}) $ with the $ q \times s_{n} $ matrix
\begin{equation}\label{key}
J_{s_{n}} \equiv \begin{pmatrix}
0 \\
\mathbf{1}
\end{pmatrix}
\end{equation}
Simply put, the map $ \Xi_{n} $ effectively picks only those eigenstates $ \vert \alpha_{n} \rangle_{\dot{\mathtt{c}}} $ which satisfy $ \alpha_{n M} = 0 $ and discards the rest. Note that essentially the same procedure is used in section 4.2 of \cite{Hoehn2014} to map vectors from the kinematical Hilbert space $ \mathcal{H}_{0}^{\text{kin}} $ to the pre-physical Hilbert space $ ^{-}\mathcal{H}_{0}^{\text{phys}} $ (the corresponding map is denoted $ ^{-}\mathbb{P}_{0} $ and referred to as the \textit{pre-projector}). We can now define
\begin{equation}\label{Un+1}
\mathbb{U}_{n+1} = \bar{\mathbb{U}}_{n+1} \Xi_{n}
\end{equation}
Thanks to this, the domain of $ \mathbb{U}_{n+1} $ is the whole $ \Phi^{\times}_{n} $. Employing together all the definitions \eqref{Un+1bar}, \eqref{Chi} and \eqref{Un+1}, we may write in physical notation
\begin{equation}\label{Un+1c}
\mathbb{U}_{n+1} \vert \alpha_{n} \rangle_{\dot{\mathtt{c}}} = \bar{\mathbb{U}}_{n+1} \Xi_{n} \vert \alpha_{n} \rangle_{\dot{\mathtt{c}}} = \sum_{\lambda_{n+1}} \mathcal{l}~ \vert \alpha_{n+1} \rangle_{\ddot{\mathtt{c}}} ~ \delta^{s_{n}}( \iota_{n}) ~ f(\alpha_{n+1})
\end{equation}
where we use once again the short-hands
\begin{equation}\label{key}
\alpha_{n} \equiv \begin{pmatrix}
\varrho_{n} \\
\iota_{n}
\end{pmatrix}, \qquad \alpha_{n+1} \equiv \begin{pmatrix}
\varrho_{n}\\
\lambda_{n+1}
\end{pmatrix}
\end{equation}
In the above expression one can clearly see how the outside-pre-constraint part $ \iota_{n} $ of $ \alpha_{n} $ is discarded and replaced by a combination of null space parts $ \lambda_{n+1} $ in the image $ \alpha_{n+1} $, in a parallel to the classical evolution.

It is time to consider the evolution of momentum eigenstates. We suggest that they should be evolved as
\begin{equation}\label{Un+1m}
\mathbb{U}_{n+1} \vert \beta_{n} \rangle_{\dot{\mathtt{m}}} = \vert P_{r_{n}} \beta_{n} \rangle_{\ddot{\mathtt{m}}} ~ g(\beta_{n})
\end{equation}
with a complex integrable function $ g(\beta_{n}) $ of the vector $ \beta_{n} $. We take advantage of the notation
\begin{equation}\label{key}
\beta_{n} \equiv \begin{pmatrix}
\theta_{n}\\
\nu_{n}
\end{pmatrix}, \qquad \beta_{n+1} \equiv P_{r_{n}} \beta_{n} = \begin{pmatrix}
\theta_{n}\\
0
\end{pmatrix}
\end{equation}
That $ \beta_{n+1} $ should have zero singular part is of course given by the classical evolution \eqref{dotuddotu}, which sets the irregular momentum components $ \ddot{u}_{n+1M+q} $ to zero as a realization of the post-constraint. On the other hand, there are no free parameters in the momentum sector and thus no additional integration. Note that while the singular part $ \nu_{n} $ of the vector $ \beta_{n} $ is annihilated when entering the eigenstate $ \vert \beta_{n+1} \rangle_{\ddot{\mathtt{m}}} $, we allow its presence in the normalization $ g(\beta_{n}) = g(\theta_{n}, \nu_{n})  $. As it will turn out in a moment, this is indeed necessary.

Eventually, let us see whether the suggested definitions \eqref{Un+1c} and \eqref{Un+1m} are compatible and whether they fix the so far unspecified functions $ f,g $. From \eqref{Un+1c} we compute
\begin{equation}\label{key}
\begin{aligned}
\mathbb{U}_{n+1} \vert \beta_{n} \rangle_{\dot{\mathtt{m}}} &= \mathbb{U}_{n+1} \sum_{\alpha_{n}} \vert \alpha_{n} \rangle_{\dot{\mathtt{c}}} ~_{\dot{\mathtt{c}}} \langle \alpha_{n} \vert \beta_{n} \rangle_{\dot{\mathtt{m}}} = \\
&= \sum_{\alpha_{n}} \sum_{\lambda_{n+1}} \mathcal{l}~ \vert \alpha_{n+1} \rangle_{\ddot{\mathtt{c}}} ~ \delta^{s_{n}}( \iota_{n}) ~ f(\alpha_{n+1}) ~_{\dot{\mathtt{c}}} \langle \alpha_{n} \vert \beta_{n} \rangle_{\dot{\mathtt{m}}} =\\
&= \sum_{\alpha_{n+1}} \mathcal{l}~ \vert \alpha_{n+1} \rangle_{\ddot{\mathtt{c}}} ~ f(\alpha_{n+1}) ~_{\dot{\mathtt{c}}} \langle P_{r_{n}} \alpha_{n+1} \vert \beta_{n} \rangle_{\dot{\mathtt{m}}}
\end{aligned}
\end{equation}
Now recall \eqref{dotcm} and \eqref{ddotcm}. From these one can (rather arbitrarily but correctly) express
\begin{equation}\label{key}
_{\dot{\mathtt{c}}} \langle P_{r_{n}} \alpha_{n+1} \vert \beta_{n} \rangle_{\dot{\mathtt{m}}} = ~_{\ddot{\mathtt{c}}} \langle \alpha_{n+1}  \vert P_{r_{n}} \beta_{n} \rangle_{\ddot{\mathtt{m}}} ~ e^{-i\frac{1}{2} \beta_{n}^{T} U^{T} L_{n} U \beta_{n}} ~ e^{-i\frac{1}{2} \alpha_{n+1}^{T} \bar{\Sigma}^{-1} V^{T} \bar{R}_{n+1} V \bar{\Sigma}^{-1} \alpha_{n+1}}
\end{equation}
Then one gets
\begin{equation}\label{key}
\begin{aligned}
\mathbb{U}_{n+1} \vert \beta_{n} \rangle_{\dot{\mathtt{m}}} &= \sum_{\alpha_{n+1}} \mathcal{l}~ \vert \alpha_{n+1} \rangle_{\ddot{\mathtt{c}}} ~ f(\alpha_{n+1}) ~_{\ddot{\mathtt{c}}} \langle \alpha_{n+1}  \vert P_{r_{n}} \beta_{n} \rangle_{\ddot{\mathtt{m}}} \\ & \qquad \qquad ~ e^{-i\frac{1}{2} \beta_{n}^{T} U^{T} L_{n} U \beta_{n}} ~ e^{-i\frac{1}{2} \alpha_{n+1}^{T} \bar{\Sigma}^{-1} V^{T} \bar{R}_{n+1} V \bar{\Sigma}^{-1} \alpha_{n+1}}
\end{aligned}
\end{equation}
For this moment, let us imagine that $ \mathcal{l} = const. $. Then we could take it out of the integral and obtain the desired result. By comparison with \eqref{Un+1m}, we would find
\begin{equation}\label{key}
f(\alpha_{n+1}) = e^{i D} ~ e^{i\frac{1}{2} \alpha_{n+1}^{T} \bar{\Sigma}^{-1} V^{T} \bar{R}_{n+1} V \bar{\Sigma}^{-1} \alpha_{n+1}}
\end{equation}
\begin{equation}\label{key}
g(\beta_{n}) = e^{i D} ~ e^{-i\frac{1}{2} \beta_{n}^{T} U^{T} L_{n} U \beta_{n}}
\end{equation}
We will later need to find a slightly more sophisticated prescription for $ \mathcal{l} $, but the reader will see that it will not prevent us from keeping this exact choice of $ f,g $.

The complex constant $ D $ remains unspecified. To fix it, we shall define the mapping $ \overline{\mathbb{U}_{n+1}} : \Phi_{n}^{\overline{\times}} \rightarrow \Phi_{n+1}^{\overline{\times}}  $ whose action on a left state $ \langle \psi_{n} \vert $ shall be denoted by $ \langle \psi_{n} \vert \overline{\mathbb{U}_{n+1}} $ and which is defined by denoting $ \vert \psi_{n+1} \rangle \equiv \mathbb{U}_{n+1} \vert \psi_{n} \rangle $ and putting $  \langle \psi_{n} \vert \overline{\mathbb{U}_{n+1}} = \langle \psi_{n+1} \vert $. In other words, $ \langle \psi_{n} \vert \overline{\mathbb{U}_{n+1}} = \overline{\mathbb{U}_{n+1} \vert \psi_{n} \rangle}  $ where the overline stands for the duality between $ \Phi_{n}^{\times} $ and $ \Phi_{n}^{\overline{\times}} $.

Now consider the wavefunction $ ~_{\dot{\mathtt{c}}} \langle \gamma_{n} \vert \overline{\mathbb{U}_{n+1}} \mathbb{U}_{n+1} \vert \alpha_{n} \rangle_{\dot{\mathtt{c}}} $ between two evolved states of the adapted basis and compare it to their original wavefunction
\begin{equation}\label{key}
~_{\dot{\mathtt{c}}} \langle \gamma_{n} \vert \alpha_{n} \rangle_{\dot{\mathtt{c}}} = \delta(\gamma_{n} - \alpha_{n})
\end{equation}
If we want the normalization of states to be conserved throughout the evolution, these wavefunctions should be equal. Let us therefore denote
\begin{equation}\label{key}
\gamma_{n} \equiv \begin{pmatrix}
\tau_{n} \\
\varepsilon_{n}
\end{pmatrix}, \qquad \gamma_{n+1} \equiv \begin{pmatrix}
\tau_{n}\\
\eta_{n+1}
\end{pmatrix}
\end{equation}
and compute
\begin{equation}\label{UoverlineU}
\begin{aligned}
~_{\dot{\mathtt{c}}} \langle \gamma_{n} \vert \overline{\mathbb{U}_{n+1}} \mathbb{U}_{n+1} \vert \alpha_{n} \rangle_{\dot{\mathtt{c}}}
&= \sum_{\lambda_{n+1}} \sum_{\eta_{n+1}} \overline{\mathcal{l}(\eta_{n+1})}  \mathcal{l}(\lambda_{n+1})  ~_{\ddot{\mathtt{c}}} \langle \gamma_{n+1} \vert \alpha_{n+1} \rangle_{\ddot{\mathtt{c}}} \\ & \qquad \qquad ~ \delta^{s_{n}}( \varepsilon_{n}) ~ \delta^{s_{n}}( \iota_{n}) ~ \overline{f(\gamma_{n+1})} ~ f(\alpha_{n+1}) =\\
&= \sum_{\lambda_{n+1}} \sum_{\eta_{n+1}} \overline{\mathcal{l}(\eta_{n+1})}\mathcal{l}(\lambda_{n+1}) ~ \delta^{q}(\gamma_{n+1} - \alpha_{n+1})  \\ & \qquad \qquad ~ \delta^{s_{n}}( \varepsilon_{n}) ~ \delta^{s_{n}}( \iota_{n}) ~ e^{-2 \Im D} =\\
&= \sum_{\lambda_{n+1}} \overline{\mathcal{l}(\lambda_{n+1})}\mathcal{l}(\lambda_{n+1}) ~ \delta^{r_{n}}(\tau_{n} - \varrho_{n})  \\ & \qquad \qquad ~ \delta^{s_{n}}( \varepsilon_{n} - \iota_{n}) ~ \delta^{s_{n}}( \iota_{n}) ~ e^{-2 \Im D} =\\
&= \delta(\gamma_{n} - \alpha_{n}) ~ \delta^{s_{n}}( \iota_{n}) ~ e^{-2 \Im D} \sum_{\lambda_{n+1}} \overline{\mathcal{l}(\lambda_{n+1})}\mathcal{l}(\lambda_{n+1})
\end{aligned}
\end{equation}
Here we see that up to the delta function $ \delta^{s_{n}}( \iota_{n}) $ and the expression $ \sum_{\lambda_{n+1}}\overline{\mathcal{l}}\mathcal{l} $, which are residuals of the irregular nature of the evolution, the evolved states are normalized standardly. Thus we decide to put $ D = 0 $. We will look closer at the issues of normalization in the next section.

The last goal of this section is to express the evolution mapping in the canonical coordinate bases. Upon employing \eqref{Un+1c}, we obtain
\begin{equation}\label{key}
\begin{aligned}
\mathbb{U}_{n+1} \vert \gamma_{n} \rangle_{\mathtt{c}} &= \mathbb{U}_{n+1} \sum_{\alpha_{n}} \vert \alpha_{n} \rangle_{\dot{\mathtt{c}}} ~_{\dot{\mathtt{c}}} \langle \alpha_{n} \vert \gamma_{n} \rangle_{\mathtt{c}} = \\
&= \sum_{\alpha_{n}} \sum_{\lambda_{n+1}} \mathcal{l} ~ \vert \alpha_{n+1} \rangle_{\ddot{\mathtt{c}}} ~ \delta^{s_{n}}(\iota_{n}) ~ e^{i\frac{1}{2} \alpha_{n+1}^{T} \bar{\Sigma}^{-1} V^{T} \bar{R}_{n+1} V \bar{\Sigma}^{-1} \alpha_{n+1}} ~_{\dot{\mathtt{c}}} \langle \alpha_{n} \vert \gamma_{n} \rangle_{\mathtt{c}} = \\
&= \sum_{\alpha_{n+1}} \mathcal{l} ~ \vert \alpha_{n+1} \rangle_{\ddot{\mathtt{c}}} ~ e^{i\frac{1}{2} \alpha_{n+1}^{T} \bar{\Sigma}^{-1} V^{T} \bar{R}_{n+1} V \bar{\Sigma}^{-1} \alpha_{n+1}} ~_{\dot{\mathtt{c}}} \langle P_{r_{n}} \alpha_{n+1} \vert \gamma_{n} \rangle_{\mathtt{c}}
\end{aligned}
\end{equation}
From \eqref{alphandot2} we know that
\begin{equation}\label{key}
~_{\dot{\mathtt{c}}} \langle P_{r_{n}} \alpha_{n+1} \vert \gamma_{n} \rangle_{\mathtt{c}} = (2\pi)^{-q/2} ~ e^{-i \frac{1}{2} \gamma_{n}^{T} L_{n} \gamma_{n} } ~ e^{-i \gamma_{n}^{T} U P_{r_{n}} \alpha_{n+1} }
\end{equation}
and we can also directly plug in for $ \vert \alpha_{n+1} \rangle_{\ddot{\mathtt{c}}} $ from \eqref{ddotc}. Altogether, it is
\begin{equation}\label{Un+1fincanon}
\begin{aligned}
\mathbb{U}_{n+1} \vert \gamma_{n} \rangle_{\mathtt{c}} &= \sum_{\alpha_{n+1}} \mathcal{l} ~ \vert V \bar{\Sigma}^{-1} \alpha_{n+1} \rangle_{\mathtt{c}} ~ \abs{\det \Sigma_{r} }^{-1/2} ~ (2\pi)^{-q/2} \\ & \qquad \qquad ~ e^{-i \frac{1}{2} \gamma_{n}^{T} L_{n} \gamma_{n} } ~ e^{-i \gamma_{n}^{T} U P_{r_{n}} \alpha_{n+1} } ~ e^{i\frac{1}{2} \alpha_{n+1}^{T} \bar{\Sigma}^{-1} V^{T} \bar{R}_{n+1} V \bar{\Sigma}^{-1} \alpha_{n+1}}
\end{aligned}
\end{equation}

The corresponding coordinate wavefunction is the following,
\begin{equation}\label{cUn+1c}
\begin{aligned}
_{\mathtt{c}} \langle \beta_{n+1} \vert \mathbb{U}_{n+1} \vert \gamma_{n} \rangle_{\mathtt{c}} &= \sum_{\alpha_{n+1}} \mathcal{l} ~ \delta \left( V \bar{\Sigma}^{-1} \alpha_{n+1} - \beta_{n+1} \right) ~ \abs{\det \Sigma_{r} }^{-1/2} ~ (2\pi)^{-q/2} \\ & \qquad \qquad ~ e^{-i \frac{1}{2} \gamma_{n}^{T} L_{n} \gamma_{n} } ~ e^{-i \gamma_{n}^{T} U P_{r_{n}} \alpha_{n+1} } ~ e^{i\frac{1}{2} \alpha_{n+1}^{T} \bar{\Sigma}^{-1} V^{T} \bar{R}_{n+1} V \bar{\Sigma}^{-1} \alpha_{n+1}} =\\
&= \mathcal{l}(V_{2}^{T} \beta_{n+1}) ~ \abs{\det \Sigma_{r} }^{1/2} ~ (2\pi)^{-q/2} \\ & \qquad \qquad  ~ e^{-i \frac{1}{2} \gamma_{n}^{T} L_{n} \gamma_{n} } ~ e^{-i \gamma_{n}^{T} R_{n} \beta_{n+1} } ~ e^{i\frac{1}{2} \beta_{n+1}^{T} \bar{R}_{n+1} \beta_{n+1}}
\end{aligned}
\end{equation}
where we restored $ U P_{r_{n}} \bar{\Sigma} V^{T} = U \Sigma_{r} V^{T} = R_{n} $ and computed $ \lambda_{n+1} = J_{s_{n}}^{T} \alpha_{n+1} = J_{s_{n}}^{T} \bar{\Sigma} V^{T} \beta_{n+1} = V_{2}^{T} \beta_{n+1} $. Comparing this result to \eqref{Slag}, which gives the one-step action in the form
\begin{equation}\label{timestepactionrepeated}
S_{n+1}(x_{n}, x_{n+1}) = \textstyle - \frac{1}{2} \left(  x_{n}^{T} L_{n} x_{n} + 2 x_{n}^{T} R_{n} x_{n+1} - x_{n+1}^{T} \bar{R}_{n+1} x_{n+1} \right)
\end{equation}
one immediately observes that
\begin{equation}\label{prop}
_{\mathtt{c}} \langle \beta_{n+1} \vert \mathbb{U}_{n+1} \vert \gamma_{n} \rangle_{\mathtt{c}} = \mathcal{l}(V_{2}^{T} \beta_{n+1}) ~ \abs{\det \Sigma_{r} }^{1/2} ~ (2\pi)^{-q/2} ~ e^{iS_{n+1}(\gamma_{n},\beta_{n+1})}
\end{equation}
i.e., up to the term $ \mathcal{l}(V_{2}^{T} \beta_{n+1}) $, we obtained the standard propagator proportional to the complex exponential of the classical action. This is satisfying as well as remarkable, since the procedure taken in defining our one-step evolution scheme was far from standard. We had to tackle the problem of irregularity of the Hamiltonian evolution, which we did by a suitable choice of the adapted coordinates. We did not use any Hamiltonian operator, instead we decided to rely on the classical evolution and find its suitable quantum analogy. We resolved the problem of non-existence of solutions by throwing away all states that do not satisfy the pre-constraint. Then we resolved the problem of non-uniqueness by evolving each quantum state into a linear combination of its possible images on the next time-slice. This approach resulted in the formula \eqref{Un+1c} for the quantum one-step evolution mapping, or \eqref{Un+1fincanon} in terms of the canonical coordinate bases. Eventually we learned that it is exactly this map which gives the (almost) standard propagator \eqref{prop}.

Eventually, let us comment on the result \eqref{prop} in context of the approach adopted in \cite{Hoehn2014, Hoehn2014a}. In these works, the (unfixed) propagator between time-slices 0 and 1 is assumed to be of the form
\begin{equation}\label{key}
K_{0\rightarrow 1} (x_{0}, x_{1}) = M_{0\rightarrow 1} ~ e^{i S_{1}(x_{0}, x_{1})/\hbar}
\end{equation}
where $ M_{0\rightarrow 1} $ is a propagator measure; see section 3.2 of \cite{Hoehn2014} and section 10.2 of \cite{Hoehn2014a}. The assumption is made on the basis of a similar result
\begin{equation}\label{Kcont}
K_{0 \rightarrow 1}(x_{0}, x_{1}) = \sqrt{\left(\frac{i}{\pi \hbar}\right)^{N} \det(\frac{\partial^{2} S(x_{0}, x_{1})}{\partial x_{0} \partial x_{1}})} ~ e^{i S(x_{0}, x_{1})/\hbar}
\end{equation}
known from the continuum \cite{Miller1974, Miller1970} which is exact for quadratic actions; see section 6.1.3 of \cite{Klauder2011}. In \cite{Hoehn2014}, the measure $ M_{0\rightarrow 1} $ is subsequently fixed via the invertibility (or to say, unitarity) conditions, yielding
\begin{equation}\label{K01}
K_{0\rightarrow 1} (x_{0}, x_{1}) = \sqrt{(-2\pi i \hbar)^{-N^{0}_{A}} ~ \abs{\det(T_{0}^{-1}) \det(c^{1}_{AB})\det(T_{1}^{-1})}} ~ e^{i S_{1}(x_{0}, x_{1})/\hbar}
\end{equation}
The details of the computation can be found in section 10.3 of \cite{Hoehn2014}. Here, $ N^{0}_{A} $ is the number of degrees of freedom propagating from time-slice 0 to time-slice 1, which corresponds to our $ r_{0} $. The $ q \times q $ regular matrix $ T_{n} $ is defined implicitly (and not uniquely) by means of a classification of vectors on the configuration space; see section 6 of \cite{Hoehn2014a}. The matrix elements $ c^{1}_{AB} $ are defined as $ c^{1}_{AB} = (T_{0})_{A}^{~~j} c_{ji}^{1} (T_{1}^{T})^{i}_{~B} $ (mind that $ A,B $ index columns of $ T_{n} $, contrary to the usual convention) where $ c_{ji}^{1} $ is our $ - (R_{0})_{ji} $. They form an $ N^{0}_{A} \times N^{0}_{A} $ regular matrix. At first sight, the determinants present in \eqref{prop} and \eqref{K01} do not seem to be equivalent. Recall that our $  \Sigma_{r} $ is the diagonal matrix of nonzero singular values of $ R_{n} $, which makes $ \det(\Sigma_{r}) $ simply the product of nonzero singular values of $ R_{n} $. Meanwhile, $ T_{n} $ is constructed from $ c^{n}, c^{n+1} $ and the Hessian $ h^{n ~ n+1} = b^{n} + a^{n+1} $ (these correspond to our $ L_{n} \equiv - a^{n+1} $, $ R_{n} \equiv - c^{n+1} $ and $ \bar{R}_{n+1} \equiv b^{n} $). Nevertheless, considering that $ T_{n} $ is not unique, it is reasonable to assume that the product of determinants in \eqref{K01} should actually simplify so that it would not be dependent on the particular choice of $ T_{n} $. Chances are that after this simplification, it might turn out equivalent to our $ \det(\Sigma_{r}) $. After all, the latter factor very well corresponds to \eqref{Kcont} which suggests that the propagator should be proportional to
\begin{equation}\label{Lag2formmatrix}
\det(\frac{\partial^{2} S_{n+1}(x_{0}, x_{1})}{\partial x_{n} \partial x_{n+1}})^{1/2}
\end{equation}
where we recognize the canonical coordinate matrix associated to (minus) the Lagrange two-form
\begin{equation}\label{Lagra}
\Omega_{n+1} = - \frac{\partial^{2} S_{n+1}(x_{n}, x_{n+1})}{\partial x_{n A} \partial x_{n+1 B}} dx_{n A} \wedge dx_{n+1 B}
\end{equation}
known from discrete canonical formalism \cite{Dittrich2013}. In our case, the matrix associated to \eqref{Lagra} is nothing but $ R_{n} $, and \eqref{Lag2formmatrix} is therefore equivalent to $ i^{q} \det(\Sigma)^{1/2} $. Our result \eqref{prop} uses only the regular part $ \Sigma_{r} $ of $ \Sigma $ to compute the determinant, which is appropriate---the propagator would otherwise completely vanish for degenerate actions.

The rest of the expressions in \eqref{prop} and \eqref{K01} can be matched more easily. The different powers of their numerical prefactors are a logical consequence of the fact that in \eqref{K01} the phase space has been effectively reduced to $ N^{0}_{A} $ propagating degrees of freedom---thus the factor $ (2\pi)^{-N^{0}_{A}} $---, while \eqref{prop} does not assume any reduction of the phase space---thus the full factor $ (2\pi)^{-q/2} $. Let us note that we use the unit convention in which $ \hbar = 1 $. Any extra $ \pm i $ amounts to a complex phase. The last notable difference is of course our regularization term $ \mathcal{l}(V_{2}^{T} \beta_{n+1}) $ which serves as a technical support to the evolution scheme. The advantage of \eqref{K01} is that it needs no such regularization.

\subsection{The Non-Unitarity}
Let us recall the standard physical interpretation of wavefunction: its absolute value squared should equal the probability of finding the system in the state it describes. Let us define our probability measure on $ m $-dimensional locally integrable complex functions by the $ m $-fold Lebesque integral
\begin{equation}\label{key}
\sum_{\zeta} \equiv \int_{-\infty}^{\infty} d\zeta_{1} ... d\zeta_{m}
\end{equation}
Assume that the functional $ \vert \varphi_{n} \rangle \in \Phi^{\times}_{n} $ is an initial state which corresponds to some wavefunction $ \varphi_{n} \in \mathcal{F}_{n} $ via
\begin{equation}\label{functionfunctional}
\langle \psi_{n} \vert \varphi_{n} \rangle = \int_{\mathbb{R}^{q}} \overline{\psi_{n}(x)} \varphi_{n}(x) ~ dx
\end{equation}
for all test functions $ \psi_{n} \in \Phi_{n} $, where the expression $ \langle \psi_{n} \vert \varphi_{n} \rangle $ stands for the action of the functional $ \vert \varphi_{n} \rangle $ on the test function $ \psi_{n} $. Then $ _{\mathtt{c}} \langle \alpha_{n} \vert \varphi_{n} \rangle = \varphi_{n}(\alpha_{n}) $ is the value of the wavefunction $ \varphi_{n} $ at a point $ \alpha_{n} \in \mathcal{Q}_{n} $ and 
\begin{equation}\label{key}
\sum_{\alpha_{n}} ~ \langle \varphi_{n} \vert \alpha_{n} \rangle_{\mathtt{c}} ~_{\mathtt{c}} \langle \alpha_{n} \vert \varphi_{n} \rangle = \langle \varphi_{n} \vert \varphi_{n} \rangle \in \mathbb{R}
\end{equation}
is the (square) norm of $ \varphi_{n} $. Note that the expression $ \langle \varphi_{n} \vert \varphi_{n} \rangle $ should not be understood in the sense of \eqref{functionfunctional} (because $ \varphi_{n} $ need not belong to $ \Phi_{n} $) but rather simply as an inner product on $ \mathcal{F}_{n} $. According to the physical interpretation, if this norm is nonzero, then the expression
\begin{equation}\label{key}
\langle \varphi_{n} \vert \varphi_{n} \rangle^{-1} \langle \varphi_{n} \vert \overline{\mathbb{U}_{n+1}} \vert \gamma_{n+1} \rangle_{\ddot{\mathtt{c}}} ~_{\ddot{\mathtt{c}}} \langle \gamma_{n+1} \vert \mathbb{U}_{n+1} \vert \varphi_{n} \rangle
\end{equation}
should equal the probability of finding the system one time-step later in state $ \vert \gamma_{n+1} \rangle_{\ddot{\mathtt{c}}} $. Usually one expects the normalization to be conserved throughout the evolution, which would be the case if
\begin{equation}\label{varphilaternorm}
\langle \varphi_{n} \vert \varphi_{n} \rangle^{-1} \sum_{\gamma_{n+1}} ~ \langle \varphi_{n} \vert \overline{\mathbb{U}_{n+1}} \vert \gamma_{n+1} \rangle_{\ddot{\mathtt{c}}} ~_{\ddot{\mathtt{c}}} \langle \gamma_{n+1} \vert \mathbb{U}_{n+1} \vert \varphi_{n} \rangle = \langle \varphi_{n} \vert \varphi_{n} \rangle^{-1} \langle \varphi_{n} \vert \overline{\mathbb{U}_{n+1}} \mathbb{U}_{n+1} \vert \varphi_{n} \rangle
\end{equation}
was equal to one. However, it turns out this is not the case, because we have
\begin{equation}\label{varphiUoverlineUvarphi}
\begin{aligned}
\langle \varphi_{n} \vert \overline{\mathbb{U}_{n+1}} \mathbb{U}_{n+1} \vert \varphi_{n} \rangle &= \sum_{\alpha_{n}} \sum_{\gamma_{n}} \langle \varphi_{n} \vert \gamma_{n} \rangle_{\dot{\mathtt{c}}} ~_{\dot{\mathtt{c}}} \langle \gamma_{n} \vert \overline{\mathbb{U}_{n+1}} \mathbb{U}_{n+1} \vert \alpha_{n} \rangle_{\dot{\mathtt{c}}} ~_{\dot{\mathtt{c}}} \langle \alpha_{n} \vert \varphi_{n} \rangle = \\
&= \sum_{\alpha_{n}} \sum_{\gamma_{n}} \langle \varphi_{n} \vert \gamma_{n} \rangle_{\dot{\mathtt{c}}} ~_{\dot{\mathtt{c}}} \langle \alpha_{n} \vert \varphi_{n} \rangle ~ \delta(\gamma_{n} - \alpha_{n}) ~ \delta^{s_{n}}( \iota_{n}) \sum_{\lambda_{n+1}} \overline{\mathcal{l}}\mathcal{l} =\\
&= \sum_{\alpha_{n}} \langle \varphi_{n} \vert \alpha_{n} \rangle_{\dot{\mathtt{c}}} ~_{\dot{\mathtt{c}}} \langle \alpha_{n} \vert \varphi_{n} \rangle ~ \delta^{s_{n}}( \iota_{n}) \sum_{\lambda_{n+1}} \overline{\mathcal{l}}\mathcal{l} =\\
&= \sum_{\varrho_{n}} \langle \varphi_{n} \vert J_{r_{n}} \varrho_{n} \rangle_{\dot{\mathtt{c}}} ~_{\dot{\mathtt{c}}} \langle J_{r_{n}} \varrho_{n} \vert \varphi_{n} \rangle ~ \sum_{\lambda_{n+1}} \overline{\mathcal{l}}\mathcal{l}
\end{aligned}
\end{equation}
where we recalled \eqref{UoverlineU} and took advantage of the $ q \times r_{n} $ matrix
\begin{equation}\label{Jrn}
J_{r_{n}} = \begin{pmatrix}
\mathbf{1}_{r_{n}}\\
0
\end{pmatrix}
\end{equation}

Let us define the operator $ \mathbb{U}_{n+1}^{\ast} : \Phi_{n+1}^{\times} \rightarrow \Phi_{n}^{\times} $ (note that it evolves states backwards) by $ \langle \varphi_{n} \vert \mathbb{U}_{n+1}^{\ast} \vert \psi_{n+1} \rangle = \langle \varphi_{n} \vert \overline{\mathbb{U}_{n+1}} \vert \psi_{n+1} \rangle $ for all $ \vert \varphi_{n} \rangle \in \Phi_{n}^{\times} $ and $ \vert \psi_{n+1} \rangle \in \Phi_{n+1}^{\times} $. Then we can investigate the behavior of the operator product $ \mathbb{U}_{n+1}^{\ast} \mathbb{U}_{n+1} : \Phi_{n}^{\times} \rightarrow \Phi_{n}^{\times} $. Clearly, the above discussed requirement that the normalization is conserved is equivalent to the requirement that $ \mathbb{U}_{n+1}^{\ast} \mathbb{U}_{n+1} = \hat{\mathbf{1}} $, which is a kind of \textit{unitarity} of the one-step evolution mapping. The equation \eqref{varphiUoverlineUvarphi} tells us that this requirement is in general not satisfied since we formally obtain
\begin{equation}\label{UastU}
\mathbb{U}_{n+1}^{\ast} \mathbb{U}_{n+1} = \sum_{\varrho_{n}} ~ \vert J_{r_{n}} \varrho_{n} \rangle_{\dot{\mathtt{c}}} ~_{\dot{\mathtt{c}}} \langle J_{r_{n}} \varrho_{n} \vert ~ \sum_{\lambda_{n+1}} \overline{\mathcal{l}}\mathcal{l}
\end{equation}
This is the reason for us to say that the evolution is \textit{non-unitary}. Indeed, it must be this way: it is based on classical evolution which looses information and consumes new information along the way. It is simply an inconvenient feature of the present approach that, as it stands now, the standard physical interpretation is broken. If we want to draw results from the model, it is our task to repair it. First of all, we shall regularize the evolution map to avoid divergences arising from the integration over free parameters $ \lambda_{n+1} $. Second, if necessary, we will renormalize the initial states to recover the probabilistic interpretation of the theory. In the following two sections, we perform this series of tasks for two particular regularization schemes which we consider especially simple and useful.

\subsection{Functional Regularization}

The regularization which we want to propose is rather straightforward. The result \eqref{varphiUoverlineUvarphi} tells us that for $ \mathcal{l} = const. $, which would be our preferred choice, the above defined evolution scheme produces a divergence of the order $ \sum_{\lambda_{n+1}} $ in the normalization of evolved wavefunctions. We suggest to fix this by introducing the regularized time-step evolution map $ \mathbb{U}_{n+1}^{\varepsilon} : \Phi^{\times}_{n} \rightarrow \Phi^{\times}_{n+1} $ given by the same formula \eqref{Un+1bar} as $ \mathbb{U}_{n+1} $ and
\begin{equation}\label{el}
\mathcal{l} = k(\varepsilon) ~ e^{-\frac{1}{2} \varepsilon \lambda_{n+1}^{T}\lambda_{n+1} }
\end{equation}
with a regularization parameter $ \varepsilon > 0 $. We include a real normalization constant $ k(\varepsilon) $ which depends only on $ \varepsilon $. Upon employing the formula
\begin{equation}\label{key}
\int_{\mathbb{R}^{n}} e^{-\frac{1}{2} x^{T} A x + i J^{T} x} ~ d^{n}x = \sqrt{\frac{(2 \pi)^{n}}{\det A}} ~ e^{-\frac{1}{2} J^{T} A^{-1} J}
\end{equation}
(the matrix $ A $ must be positive definite here) which is a multidimensional version of the Hubbard–Stratonovich transformation \cite{Hubbard1959}, one finds
\begin{equation}\label{lalt}
\mathcal{l} = \sum_{\mu_{n+1}} k ~ (\varepsilon 2 \pi)^{-s_{n}/2} ~ e^{-\frac{1}{2} \varepsilon^{-1} \mu_{n+1}^{T} \mu_{n+1}} ~ e^{i \lambda_{n+1}^{T} \mu_{n+1}}
\end{equation}
With this, one can compute
\begin{equation}\label{key}
\begin{aligned}
\mathbb{U}_{n+1}^{\varepsilon} \vert \beta_{n} \rangle_{\dot{\mathtt{m}}} &= \sum_{\alpha_{n+1}} \mathcal{l} ~ \vert \alpha_{n+1} \rangle_{\ddot{\mathtt{c}}} ~ f(\alpha_{n+1}) ~_{\ddot{\mathtt{c}}} \langle \alpha_{n+1}  \vert P_{r_{n}} \beta_{n} \rangle_{\ddot{\mathtt{m}}} \\ & \qquad \qquad ~ e^{-i\frac{1}{2} \beta_{n}^{T} U^{T} L_{n} U \beta_{n}} ~ e^{-i\frac{1}{2} \alpha_{n+1}^{T} \bar{\Sigma}^{-1} V^{T} \bar{R}_{n+1} V \bar{\Sigma}^{-1} \alpha_{n+1}} =\\
&= \sum_{\alpha_{n+1}} \mathcal{l} ~ \vert \alpha_{n+1} \rangle_{\ddot{\mathtt{c}}} ~_{\ddot{\mathtt{c}}} \langle \alpha_{n+1}  \vert P_{r_{n}} \beta_{n} \rangle_{\ddot{\mathtt{m}}} ~ g(\beta_{n}) =\\
&= \sum_{\alpha_{n+1}} \sum_{\mu_{n+1}} k ~ (2 \pi \varepsilon)^{-s_{n}/2} e^{-\frac{1}{2} \varepsilon^{-1} \mu_{n+1}^{T} \mu_{n+1}} \vert \alpha_{n+1} \rangle_{\ddot{\mathtt{c}}} ~_{\ddot{\mathtt{c}}} \langle \alpha_{n+1}  \vert \beta_{n+1} \rangle_{\ddot{\mathtt{m}}} ~ g(\beta_{n}) =\\
&= \sum_{\mu_{n+1}} k ~ (2 \pi \varepsilon)^{-s_{n}/2} e^{-\frac{1}{2} \varepsilon^{-1} \mu_{n+1}^{T} \mu_{n+1}} \vert \beta_{n+1} \rangle_{\ddot{\mathtt{m}}} ~ g(\beta_{n})
\end{aligned}
\end{equation}
where we recalled \eqref{ddotcm} and denoted
\begin{equation}\label{key}
\beta_{n} \equiv \begin{pmatrix}
\theta_{n}\\
\nu_{n}
\end{pmatrix}, \qquad \beta_{n+1} \equiv \begin{pmatrix}
\theta_{n}\\
\mu_{n+1}
\end{pmatrix}
\end{equation}
This is nothing but normal distribution in terms of $ \mu_{n+1} $ centered around zero. We were able to obtain this simple result only thanks to the special form of \eqref{el}.

In summary, we have defined the evolution map $ \mathbb{U}_{n+1}^{\varepsilon} : \Phi^{\times}_{n} \rightarrow \Phi^{\times}_{n+1} $ acting as
\begin{equation}\label{Un+1creg}
\mathbb{U}_{n+1}^{\varepsilon} \vert \alpha_{n} \rangle_{\dot{\mathtt{c}}} = \sum_{\lambda_{n+1}} \mathcal{l}(\lambda_{n+1}) ~ \vert \alpha_{n+1} \rangle_{\ddot{\mathtt{c}}} ~ \delta^{s_{n}}( \iota_{n}) ~ f(\alpha_{n+1})
\end{equation}
\begin{equation}\label{key}
\alpha_{n} \equiv \begin{pmatrix}
\varrho_{n} \\
\iota_{n}
\end{pmatrix}, \qquad \alpha_{n+1} \equiv \begin{pmatrix}
\varrho_{n}\\
\lambda_{n+1}
\end{pmatrix}
\end{equation}
\begin{equation}\label{key}
f(\alpha_{n+1}) \equiv e^{i\frac{1}{2} \alpha_{n+1}^{T} \bar{\Sigma}^{-1} V^{T} \bar{R}_{n+1} V \bar{\Sigma}^{-1} \alpha_{n+1}}
\end{equation}
\begin{equation}\label{key}
\mathcal{l}(\lambda_{n+1}) \equiv k ~ e^{-\frac{1}{2} \varepsilon \lambda_{n+1}^{T}\lambda_{n+1} }
\end{equation}
or equivalently
\begin{equation}\label{Un+1mreg}
\mathbb{U}_{n+1}^{\varepsilon} \vert \beta_{n} \rangle_{\dot{\mathtt{m}}} = \sum_{\mu_{n+1}} \tilde{\mathcal{l}}(\mu_{n+1}) ~ \vert \beta_{n+1} \rangle_{\ddot{\mathtt{m}}} ~ g(\beta_{n})
\end{equation}
\begin{equation}\label{key}
\beta_{n} \equiv \begin{pmatrix}
\theta_{n}\\
\nu_{n}
\end{pmatrix}, \qquad \beta_{n+1} \equiv \begin{pmatrix}
\theta_{n}\\
\mu_{n+1}
\end{pmatrix}
\end{equation}
\begin{equation}\label{key}
g(\beta_{n}) \equiv e^{-i\frac{1}{2} \beta_{n}^{T} U^{T} L_{n} U \beta_{n}}
\end{equation}
\begin{equation}\label{key}
\tilde{\mathcal{l}}(\mu_{n+1}) \equiv k ~ (2 \pi \varepsilon)^{-s_{n}/2} e^{-\frac{1}{2} \varepsilon^{-1} \mu_{n+1}^{T} \mu_{n+1}}
\end{equation}

We can now turn to the normalization of evolved states. Since $ \mathcal{l} $ is square integrable, it is possible to choose such $ k $ that
\begin{equation}\label{knorm}
\sum_{\lambda_{n+1}}  \mathcal{l}^{2} = 1
\end{equation}
This requirement corresponds to
\begin{equation}\label{kepsilon}
k = \pi^{-s_{n}/4} ~ \varepsilon^{s_{n}/4}
\end{equation}
Indeed, when  $ \varepsilon \rightarrow 0 $, we shall have $ k \rightarrow 0 $ so that the assumed constant normalization \eqref{knorm} is sustained. We remark that while $ \mathcal{l}^{2} $ approaches a (zero) constant function in this limit, $ \tilde{\mathcal{l}}^{2} $ in fact approaches the delta function. It means that the singular part $ \mu_{n+1} $ of the evolved momentum state $ \vert \beta_{n+1} \rangle_{\ddot{\mathtt{m}}} $ in \eqref{Un+1mreg} ultimately gets localized around value $ \mu_{n+1} = 0 $, which clearly complies to the post-constraint. Note that $ \tilde{\mathcal{l}} $ does not diverge fast enough to render the integral nonzero, so upon performing the limit, one gets zero final state, just as in the case of the coordinate eigenstate. Nevertheless, for $ \varepsilon > 0 $, all the expressions are finite and the evolved states are nonzero. With \eqref{knorm} in place, \eqref{UoverlineU} and \eqref{varphiUoverlineUvarphi} become
\begin{equation}\label{UoverlineUreg}
~_{\dot{\mathtt{c}}} \langle \gamma_{n} \vert \overline{\mathbb{U}_{n+1}^{\varepsilon}} \mathbb{U}_{n+1}^{\varepsilon} \vert \alpha_{n} \rangle_{\dot{\mathtt{c}}}
= \delta(\gamma_{n} - \alpha_{n}) ~ \delta^{s_{n}}( \iota_{n})
\end{equation}
\begin{equation}\label{vphiUvepsisq}
\langle \varphi_{n} \vert \overline{\mathbb{U}_{n+1}^{\varepsilon}} \mathbb{U}_{n+1}^{\varepsilon} \vert \varphi_{n} \rangle
= \sum_{\varrho_{n}} \langle \varphi_{n} \vert J_{r_{n}} \varrho_{n} \rangle_{\dot{\mathtt{c}}} ~_{\dot{\mathtt{c}}} \langle J_{r_{n}} \varrho_{n} \vert \varphi_{n} \rangle
\end{equation}

The possible renormalization of probability offered in \eqref{vphiUvepsisq} is now obvious. For any initial state $ \vert \varphi_{n} \rangle \in \Phi^{\times}_{n} $ corresponding to a normalizable wavefunction $ \varphi_{n} \in \mathcal{F}_{n} $ as in \eqref{functionfunctional}, we introduce the constant
\begin{equation}\label{Cvarphi}
C_{\vert \varphi_{n} \rangle} \equiv \langle \varphi_{n} \vert \overline{\mathbb{U}_{n+1}^{\varepsilon}} \mathbb{U}_{n+1}^{\varepsilon} \vert \varphi_{n} \rangle
\end{equation}
which, if finite, can be interpreted as the square norm of $ \vert \varphi_{n} \rangle $ in the subspace of $ \Phi^{\times}_{n} $ corresponding to the pre-constraint surface. In particular, if $ C_{\vert \varphi_{n} \rangle} = 0 $, it means that $ \varphi_{n} $ is, up to domain subsets of zero measure, supported only outside the pre-constraint surface. In that case, we will not talk about probability of finding the system in an evolved state, just as we do not talk about such probability for an initial state $ \vert \varphi_{n} \rangle $ when $ \langle \varphi_{n} \vert \varphi_{n} \rangle = 0 $. On the other hand, if $ C_{\vert \varphi_{n} \rangle} $ is finite and nonzero, we will say that the probability of finding the system one time-step later in state $ \vert \psi_{n+1} \rangle $ is
\begin{equation}\label{prob1}
p_{\vert \varphi_{n} \rangle \rightarrow \vert \psi_{n+1} \rangle} =  C_{\vert \varphi_{n} \rangle}^{-1} ~  \lim_{\varepsilon \rightarrow 0} ~ \langle \varphi_{n} \vert \overline{\mathbb{U}_{n+1}^{\varepsilon}} \vert \psi_{n+1} \rangle \langle \psi_{n+1} \vert \mathbb{U}_{n+1}^{\varepsilon} \vert \varphi_{n} \rangle
\end{equation}
rather than $ \langle \varphi_{n} \vert \varphi_{n} \rangle^{-1} \langle \varphi_{n} \vert \overline{\mathbb{U}_{n+1}} \vert \psi_{n+1} \rangle \langle \psi_{n+1} \vert \mathbb{U}_{n+1} \vert \varphi_{n} \rangle $. This finally results in the correct normalization
\begin{equation}\label{key}
\sum_{\gamma_{n+1}} p_{\vert \varphi_{n} \rangle \rightarrow  \vert \gamma_{n+1} \rangle_{\ddot{\mathtt{c}}} } = 1
\end{equation}

Let us remark that one can also look at the problem of normalization differently. If we use the functional regularization for which \eqref{knorm} holds true, we know that the change of the norms of wavefunctions during the evolution is caused solely by the mapping $ \Xi_{n} $ introduced in \eqref{Chi}, which implements the pre-constraint. It does so in a rather radical way: the map effectively filters out every part of the wavefunction $ _{\mathtt{c}} \langle \alpha_{n} \vert \varphi_{n} \rangle $ laying at points $ \alpha_{n} $ which do not satisfy the pre-constraint. In effect, the norm decreases or at most stays the same (if the wavefunction is supported only inside the pre-constraint surface). Then, if one has a system whose irregularity bears some kind of physical meaning, the decreasing norm can be viewed less as a problem and more as a reflection of the physical reality. For instance, imagine a system describing quantum field on a discrete model of spacetime, which at some point splits into two disjoint regions. One of the regions continues without any irregularity, the other diminishes and ends. We fix the quantum field by initial data on some spacelike hypersurface before the splitting and let it evolve. For a family of observers living in the regular region, the field which entered the other region is forever lost. It thus makes perfect sense if the norm of the corresponding state in the regular region decreases, reflecting on that loss. This example illustrates that in some cases, we actually might want to keep the normalization as it is. The reduced norm of the resulting final state, and even the corresponding transition probability, will then simply reflect the dissipative nature of our system.

\subsection{Cylindrical Regularization}
An alternative and very useful approach to regularization is adopted in \cite{Hoehn2014, Hoehn2014a}. We dedicate this section to its short introduction in context of our work. The main point is that instead of regularizing the wavefunction, one can actually leave it in its natural form and instead change the inner product so that the norm exists and is finite. It is argued that the wavefunctions describing states of an irregular quantum system can be seen as \textit{cylindrical functions}---see section 5.2 of \cite{Hoehn2014a}---which are constant w.r.t. some parameters from a (linear) subspace of their domain. These of course do not belong to the original Hilbert space so there is no norm for them. However, the inner product for cylindrical functions can be defined with the help of so-called \textit{cylindrical measure}.

Within our formalism, the procedure would be the following. We consider the evolution map $ \mathbb{U}_{n+1}^{0} $ still given by \eqref{Un+1bar} in which we lay
\begin{equation}\label{el0}
\mathcal{l} = 1
\end{equation}
In consequence, the evolved state is simply
\begin{equation}\label{U0}
\mathbb{U}_{n+1}^{0} \vert \alpha_{n} \rangle_{\dot{\mathtt{c}}} = \sum_{\lambda_{n+1}} ~ \vert \alpha_{n+1} \rangle_{\ddot{\mathtt{c}}} ~ \delta^{s_{n}}( \iota_{n}) ~ f(\alpha_{n+1})
\end{equation}
with
\begin{equation}\label{key}
\alpha_{n} \equiv \begin{pmatrix}
\varrho_{n} \\
\iota_{n}
\end{pmatrix}, \qquad \alpha_{n+1} \equiv \begin{pmatrix}
\varrho_{n}\\
\lambda_{n+1}
\end{pmatrix}
\end{equation}
\begin{equation}\label{key}
f(\alpha_{n+1}) \equiv e^{i\frac{1}{2} \alpha_{n+1}^{T} \bar{\Sigma}^{-1} V^{T} \bar{R}_{n+1} V \bar{\Sigma}^{-1} \alpha_{n+1}}
\end{equation}
just as before. Note that \eqref{U0} still belongs to $ \Phi^{\times}_{n+1} $ so the definition is formally correct. However, for an initial $ \varphi_{n} \in \mathcal{F}_{n} $, it results in the state
\begin{equation}\label{U0varphi}
\begin{aligned}
\mathbb{U}_{n+1}^{0} \vert \varphi_{n} \rangle &= \sum_{\alpha_{n}} \mathbb{U}_{n+1}^{0} \vert \alpha_{n} \rangle_{\dot{\mathtt{c}}} ~_{\dot{\mathtt{c}}} \langle \alpha_{n} \vert \varphi_{n} \rangle =\\
&= \sum_{\alpha_{n}} \sum_{\lambda_{n+1}} ~ \vert \alpha_{n+1} \rangle_{\ddot{\mathtt{c}}} ~ \delta^{s_{n}}( \iota_{n}) ~ f(\alpha_{n+1}) ~_{\dot{\mathtt{c}}} \langle \alpha_{n} \vert \varphi_{n} \rangle =\\
&= \sum_{\alpha_{n+1}} ~ \vert \alpha_{n+1} \rangle_{\ddot{\mathtt{c}}} ~ f(\alpha_{n+1}) ~_{\dot{\mathtt{c}}} \langle P_{r_{n}} \alpha_{n+1} \vert \varphi_{n} \rangle
\end{aligned}
\end{equation}
with $ P_{r_{n}} $ as in \eqref{Chi}. We see that the dependence of the integrand on $ \lambda_{n+1} $ is limited to the factor $ f(\alpha_{n+1}) $ and the eigenstate $ \vert \alpha_{n+1} \rangle_{\ddot{\mathtt{c}}} $. As we know, this is certainly not enough to render $ \mathbb{U}_{n+1}^{0} \vert \varphi_{n} \rangle $ normalizable. The corresponding wavefunction amounts to
\begin{equation}\label{gammaU0varphi}
\begin{aligned}
~_{\ddot{\mathtt{c}}} \langle \gamma_{n+1} \vert \mathbb{U}_{n+1}^{0} \vert \varphi_{n} \rangle
&= \sum_{\alpha_{n+1}} ~_{\ddot{\mathtt{c}}} \langle \gamma_{n+1} \vert  \alpha_{n+1} \rangle_{\ddot{\mathtt{c}}} ~ f(\alpha_{n+1}) ~_{\dot{\mathtt{c}}} \langle P_{r_{n}} \alpha_{n+1} \vert \varphi_{n} \rangle = \\
&= f(\gamma_{n+1}) ~_{\dot{\mathtt{c}}} \langle P_{r_{n}} \gamma_{n+1} \vert \varphi_{n} \rangle
\end{aligned}
\end{equation}
Indeed, \eqref{gammaU0varphi} clearly does not belong to $ \mathcal{F}_{n+1} $. This is manifested by an infinite expression for what would otherwise be the norm,
\begin{equation}\label{nonorm}
\begin{aligned}
\langle \varphi_{n} \vert \overline{\mathbb{U}^{0}_{n+1}} \mathbb{U}^{0}_{n+1} \vert \varphi_{n} \rangle = \sum_{\varrho_{n}} \langle \varphi_{n} \vert J_{r_{n}} \varrho_{n} \rangle_{\dot{\mathtt{c}}} ~_{\dot{\mathtt{c}}} \langle J_{r_{n}} \varrho_{n} \vert \varphi_{n} \rangle ~ \sum_{\lambda_{n+1}} = \infty
\end{aligned}
\end{equation}
cf. \eqref{varphiUoverlineUvarphi}. We therefore need to introduce an alternative inner product for wavefunctions of the form \eqref{gammaU0varphi}. This can be done relatively easily. The configuration space splits as $ \mathcal{Q}_{n+1} = \Lambda^{\mathcal{Q}}_{n+1} \oplus \dot{\mathcal{C}}^{+\mathcal{Q}}_{n+1} $ into the space $ \Lambda^{\mathcal{Q}}_{n+1} = \{ V_{2} \lambda_{n+1} ~ \vert ~ \lambda_{n+1} \in \mathbb{R}^{s_{n}} \} $ which consists of coordinate parts of vectors in $ \Lambda_{n+1} $ (see section 2.1) and the space $ \dot{\mathcal{C}}^{+\mathcal{Q}}_{n+1} = \{ V_{1} \varrho_{n+1} ~ \vert ~ \varrho_{n+1} \in \mathbb{R}^{r_{n}} \} $ which consists of coordinate parts of vectors in the symplectic space $ \dot{\mathcal{C}}^{+}_{n+1} \equiv \{ E_{n} y_{n} ~ \vert ~ y_{n} \in \mathcal{C}^{-}_{n} \} $ familiar from section 5.3 of \cite{Kaninsky2020a}. We can then introduce a new \textit{physical} Hilbert space $ \mathcal{F}_{n+1}^{\mathrm{ph}} = L^{2}(\dot{\mathcal{C}}^{+\mathcal{Q}}_{n+1}) $ with the usual inner product denoted by $ \langle ~ \rangle^{\mathrm{ph}} $. Recall that the same trick is used in \cite{Hoehn2014, Hoehn2014a}---as we have already discussed before, the reduction of kinematical Hilbert spaces to physical ones is a key point of these works. For us, the situation is very much simplified thanks to our choice of basis: the adapted coordinates $ \gamma_{n+1} \in \mathbb{R}^{q} $ w.r.t. the double-dotted symplectic basis $ \{ \ddot{e}_{n+1 I} \}_{I = 1}^{q} $ of $ \mathcal{P}_{n+1} $ split trivially into
\begin{equation}\label{gammataueta}
\gamma_{n+1} \equiv \begin{pmatrix}
\tau_{n+1}\\
\eta_{n+1}
\end{pmatrix}
\end{equation}
where $ \tau_{n+1} \in \mathbb{R}^{r_{n}} $ and $ \eta_{n+1} \in \mathbb{R}^{s_{n}} $. The former are coordinates on $ \dot{\mathcal{C}}^{+\mathcal{Q}}_{n+1} $, the latter are coordinates on $ \Lambda^{\mathcal{Q}}_{n+1} $. The physical inner product of two functions $ \psi_{n+1}, \phi_{n+1} \in \mathcal{F}_{n+1}^{\mathrm{ph}} $ which correspond to the functions $ \psi_{n+1}, \phi_{n+1} \in \mathcal{F}_{n+1} $ can be written as
\begin{equation}\label{1spip}
\langle \psi_{n+1} \vert \phi_{n+1} \rangle^{\mathrm{ph}} = \sum_{\tau_{n+1}} ~ \langle \psi_{n+1} \vert \gamma_{n+1} \rangle_{\ddot{\mathtt{c}}} ~_{\ddot{\mathtt{c}}} \langle \gamma_{n+1} \vert \phi_{n+1} \rangle
\end{equation}
where the coordinates $ \eta_{n+1} $ play the role of additional parameters. The functions are expressed in the adapted coordinates here. If we have them given in terms of the canonical coordinates (which is typically the case), we need to transform them with
\begin{equation}\label{ddotcgammaphi}
~_{\ddot{\mathtt{c}}} \langle \gamma_{n+1} \vert \phi_{n+1} \rangle = \abs{\det \Sigma_{r} }^{-1/2} ~_{\mathtt{c}}\langle V \bar{\Sigma}^{-1} \gamma_{n+1} \vert \phi_{n+1} \rangle
\end{equation}
where we used \eqref{ddotc}. Then we can plug \eqref{ddotcgammaphi} into \eqref{1spip}, getting
\begin{equation}\label{pipcanonical}
\langle \psi_{n+1} \vert \phi_{n+1} \rangle^{\mathrm{ph}} = \abs{\det \Sigma_{r} }^{-1} \sum_{\tau_{n+1}} ~ \langle \psi_{n+1} \vert V \bar{\Sigma}^{-1} \gamma_{n+1} \rangle_{\mathtt{c}} ~_{\mathtt{c}}\langle V \bar{\Sigma}^{-1} \gamma_{n+1} \vert \phi_{n+1} \rangle
\end{equation}
The same can be done for the final state wavefunction. Again, we shall view the $ \eta_{n+1} $ as parameters of \eqref{gammaU0varphi} rather than its variables. Then we can easily convince ourselves that while \eqref{gammaU0varphi} does not belong to $ \mathcal{F}_{n+1} $, it actually belongs to $ \mathcal{F}_{n+1}^{\mathrm{ph}} $ for any choice of these parameters $ \eta_{n+1} $, because the expression
\begin{equation}\label{phnorm}
\langle \varphi_{n} \vert \overline{\mathbb{U}^{0}_{n+1}} \mathbb{U}^{0}_{n+1} \vert \varphi_{n} \rangle^{\mathrm{ph}} = \sum_{\varrho_{n}} \langle \varphi_{n} \vert J_{r_{n}} \varrho_{n} \rangle_{\dot{\mathtt{c}}} ~_{\dot{\mathtt{c}}} \langle J_{r_{n}} \varrho_{n} \vert \varphi_{n} \rangle
\end{equation}
is---unlike \eqref{nonorm}---finite. This is of course the norm of $ ~_{\ddot{\mathtt{c}}} \langle \gamma_{n+1} \vert \mathbb{U}_{n+1}^{0} \vert \varphi_{n} \rangle $ on $ \mathcal{F}_{n+1}^{\mathrm{ph}} $. Note that it does not depend on the free parameters at all.

The regularization of the one-step evolution map is now complete. We can of course define
\begin{equation}\label{Cphnorm}
C_{\vert \varphi_{n} \rangle} \equiv \langle \varphi_{n} \vert \overline{\mathbb{U}^{0}_{n+1}} \mathbb{U}^{0}_{n+1} \vert \varphi_{n} \rangle^{\mathrm{ph}}
\end{equation}
which yields the exact same result as \eqref{Cvarphi}. Then we may perform renormalization in complete analogy with the previous section. The prescription for the renormalized transition probability is given similarly to \eqref{prob1}, only with use of the alternative inner product and with no need for a limiting procedure:
\begin{equation}\label{prob1}
p_{\vert \varphi_{n} \rangle \rightarrow \vert \psi_{n+1} \rangle} =  C_{\vert \varphi_{n} \rangle}^{-1} ~ \langle \varphi_{n} \vert \overline{\mathbb{U}_{n+1}^{0}} \vert \psi_{n+1}^{\mathrm{ph}} \rangle^{\mathrm{ph}} \langle \psi_{n+1}^{\mathrm{ph}} \vert \mathbb{U}_{n+1}^{0} \vert \varphi_{n} \rangle^{\mathrm{ph}}
\end{equation}
One has to plug in the state $ \vert \psi_{n+1}^{\mathrm{ph}} \rangle \in \Phi_{n+1}^{\mathrm{ph} \times} $ whose state space $ \Phi_{n+1}^{\mathrm{ph} \times} $ is built from $ \mathcal{F}_{n+1}^{\mathrm{ph}} $. The state $ \vert \psi_{n+1}^{\mathrm{ph}} \rangle $ has a wavefunction $ \psi_{n+1}^{\mathrm{ph}}(\tau_{n+1}) \in \mathcal{F}_{n+1}^{\mathrm{ph}} $. It corresponds to the wavefunction $ \psi_{n+1}(\gamma_{n+1}) \in \mathcal{F}_{n+1} $ which is constant w.r.t. the variables $ \eta_{n+1} $. This fixes a corresponding state $ \vert \psi_{n+1} \rangle \in \Phi_{n+1}^{\times} $. For one-step evolution, the parameters $ \eta_{n+1} $ can be actually considered unphysical, since the evolution map does not make any difference between them. To have a final state wavefunction constant in these variables is therefore natural. Eventually, the resolution of identity on the physical Hilbert space yields the correct normalization of probability, as before.

\subsection{Density Operators}

We can easily rewrite the theory into the language of density operators \cite{Blum2012}. The general form of a \textit{density operator} $ \hat{A}_{n} : \Phi^{\times}_{n} \rightarrow \Phi^{\times}_{n} $ is
\begin{equation}\label{key}
\hat{A}_{n} = \sum_{\Omega_{n}} A_{\vert \varphi_{n} \rangle} ~ \vert \varphi_{n} \rangle \langle \varphi_{n} \vert
\end{equation}
where $ \sum_{\Omega_{n}} $ stands for a summation (or an integral of some kind) over a set $ \Omega_{n} $ of states $ \vert \varphi_{n} \rangle \in \Omega_{n} $ and $ A_{\vert \varphi_{n} \rangle} $ is a real \textit{probability} (or \textit{probability density}) of the state $ \vert \varphi_{n} \rangle $. In the latter case we allow $ A_{\vert \varphi_{n} \rangle} $ to include delta functions, defined e.g. as a measure w.r.t. the given integral.

The renormalization of the transition probability can be performed in complete parallel with the previous sections. Let us do it e.g. with the functional regularization. The renormalization constant analogical to \eqref{Cvarphi} for a state described by $ \hat{A}_{n} $ shall be
\begin{equation}\label{CA}
\begin{aligned}
C_{\hat{A}_{n}} &\equiv \Tr \left[ \mathbb{U}_{n+1}^{\varepsilon} \hat{A}_{n} \mathbb{U}_{n+1}^{\varepsilon \ast} \right]
\end{aligned}
\end{equation}
The trace of an operator $ \hat{D}_{n} $ is defined as
\begin{equation}\label{key}
\Tr \left[ \hat{D}_{n} \right] \equiv \sum_{\delta_{n}} ~_{\mathtt{c}} \langle \delta_{n} \vert \hat{D}_{n} \vert \delta_{n} \rangle_{\mathtt{c}}
\end{equation}
and it works the same in the adapted basis: the reader can easily check that by writing $ \sum_{\delta_{n}} ~_{\dot{\mathtt{c}}} \langle \delta_{n} \vert \hat{D}_{n} \vert \delta_{n} \rangle_{\dot{\mathtt{c}}} $ and directly plugging in \eqref{alphandot2}. In the special case of a pure state $ \hat{A}_{n} = \vert \varphi_{n} \rangle \langle \varphi_{n} \vert $, the constant is equal to \eqref{Cvarphi}, i.e.,
\begin{equation}\label{key}
C_{\vert \varphi_{n} \rangle \langle \varphi_{n} \vert} = C_{\vert \varphi_{n} \rangle}
\end{equation}
Let us assume that the initial state of the system is described by a density operator $ \hat{A}_{n} $ such that $ C_{\hat{A}_{n}} $ is finite and nonzero. Then we state that the probability of a single time-step transition from the state described by $ \hat{A}_{n} $ to another state described by $ \hat{B}_{n+1} $ is expressed as
\begin{equation}\label{prob1dens}
p_{\hat{A}_{n} \rightarrow \hat{B}_{n+1}} = \lim_{\varepsilon \rightarrow 0} ~ C_{\hat{A}_{n}}^{-1} ~ \Tr \left[  \hat{B}_{n+1} \mathbb{U}_{n+1}^{\varepsilon} \hat{A}_{n} \mathbb{U}_{n+1}^{\varepsilon \ast} \right] 
\end{equation}
And again, for pure states this probability is in correspondence with \eqref{prob1}. By fixing $ \hat{B}_{n+1} = \hat{\mathbf{1}}_{n+1} $ (which means \textit{all states}), one finds that the normalization is correct,
\begin{equation}\label{key}
p_{\hat{A}_{n} \rightarrow \hat{\mathbf{1}}_{n+1}} = 1
\end{equation}

In case of cylindrical regularization, one considers the evolution map $ \mathbb{U}_{n+1}^{0} $ instead of $ \mathbb{U}_{n+1}^{\varepsilon} $ and employs the \textit{physical trace}
\begin{equation}\label{phystrace}
\begin{aligned}
\Tr \left[ \hat{D}_{n+1} \right]^{\mathrm{ph}} &\equiv \sum_{\delta_{n+1}} ~_{\mathtt{c}} \langle \delta_{n+1} \vert \hat{D}_{n+1} \vert \delta_{n+1} \rangle_{\mathtt{c}}^{\mathrm{ph}} \\
&= \abs{\det \Sigma_{r} }^{-1} \sum_{\tau_{n+1}} ~_{\mathtt{c}}\langle V \bar{\Sigma}^{-1} \gamma_{n+1} \vert \hat{D}_{n+1} \vert V \bar{\Sigma}^{-1} \gamma_{n+1} \rangle_{\mathtt{c}} 
\end{aligned}
\end{equation}
for operators $ \hat{D}_{n+1} $ on $ \Phi^{\times}_{n+1} $. The second line of \eqref{phystrace} follows from \eqref{pipcanonical} and uses the convention \eqref{gammataueta}. Similarly to the physical inner product, the physical trace leaves out the integration over the free parameters, and therefore turns out finite even for evolved density operators $ \mathbb{U}_{n+1}^{0} \hat{A}_{n} \mathbb{U}_{n+1}^{0 \ast} $ whose normal trace diverges.

\subsection{Global Evolution and the Path Integral}\label{sec:globevol}

Now that we have in our hand the (regularized) one-step evolution scheme, we can extend our analysis to multiple time-steps. Assume that $ m, n $ are two whole numbers such that $ m > n $. Then it is natural to define the multi-step evolution mapping $ \mathbb{U}_{mn}: \Phi^{\times}_{n} \rightarrow \Phi^{\times}_{m} $ which evolves states from a time-slice $ n $ to a later time-slice $ m $ by $ \mathbb{U}_{mn} = \mathbb{U}_{m} ... \mathbb{U}_{n+1} $. We shall put $ n = 0 $, $ m = t $ and seek the propagator wavefunction $ \langle \psi_{t} \vert \mathbb{U}_{t0} \vert \varphi_{0} \rangle $ given by two arbitrary states $ \vert \varphi_{0} \rangle \in \Phi^{\times}_{0} $ and $ \vert \psi_{t} \rangle \in \Phi^{\times}_{t} $. It is a matter of applying \eqref{prop} repeatedly to obtain the path-integral form for the propagator:
\begin{equation}\label{longprop}
\begin{aligned}
\langle \psi_{t} \vert \mathbb{U}_{t0} \vert \varphi_{0} \rangle &= \sum_{\alpha_{0}...\alpha_{t}} \langle  \psi_{t} \vert \alpha_{t} \rangle_{\mathtt{c}} ~_{\mathtt{c}}\langle \alpha_{t} \vert \mathbb{U}_{t} \vert \alpha_{t-1} \rangle_{\mathtt{c}} ~_{\mathtt{c}}\langle \alpha_{t-1} \vert ~ ... \\ & \qquad \qquad ~ \vert \alpha_{1} \rangle_{\mathtt{c}} ~_{\mathtt{c}}\langle \alpha_{1} \vert \mathbb{U}_{1} \vert \alpha_{0} \rangle_{\mathtt{c}} ~_{\mathtt{c}}\langle \alpha_{0} \vert \varphi_{0} \rangle = \\
&= \sum_{\alpha_{0}...\alpha_{t}} (2\pi)^{-tq/2} ~ \abs{ \prod_{m = 1}^{t} \det \Sigma_{r}(R_{m}) }^{1/2} \left( \prod_{n = 0}^{t-1} \mathcal{l}(V_{2}(R_{n})^{T} \alpha_{n+1}) \right) \\ & \qquad \qquad ~ e^{i S(\alpha_{0}, ..., \alpha_{t})} ~ \langle  \psi_{t} \vert \alpha_{t} \rangle_{\mathtt{c}}  ~_{\mathtt{c}}\langle \alpha_{0} \vert \varphi_{0} \rangle
\end{aligned}
\end{equation}
with the overall action
\begin{equation}\label{key}
S(\alpha_{0}, ..., \alpha_{t}) = \sum_{n = 0}^{t-1} S_{n+1}(\alpha_{n}, \alpha_{n+1})
\end{equation}
The propagator \eqref{longprop} inherits all the normalization problems of \eqref{prop}, so it is again suitable (at least for applications in which the irregularity is not really physical) to establish a renormalization procedure in order to compute probabilities. Note that this effectively makes the constant factor $ \abs{ \prod_{m = 1}^{t} \det \Sigma_{r}(R_{m}) }^{1/2} $ irrelevant. For clarity, we will keep it in place. If we use the functional regularization for our evolution map, we get the propagator
\begin{equation}\label{propreg}
\begin{aligned}
\langle \psi_{t} \vert \mathbb{U}_{t0}^{\varepsilon} \vert \varphi_{0} \rangle &= \sum_{\alpha_{0}...\alpha_{t}} (2\pi)^{-tq/2} ~ \abs{ \prod_{m = 1}^{t} \det \Sigma_{r}(R_{m}) }^{1/2} ~ \langle  \psi_{t} \vert \alpha_{t} \rangle_{\mathtt{c}}  ~_{\mathtt{c}}\langle \alpha_{0} \vert \varphi_{0} \rangle \\ & \qquad \qquad ~ \pi^{-ts_{n}/4} ~ \varepsilon^{ts_{n}/4} ~ e^{- \frac{1}{2} \varepsilon M( V_{2}(R_{0})^{T} \alpha_{1} , ..., V_{2}(R_{t-1})^{T} \alpha_{t})} ~ e^{i S(\alpha_{0}, ..., \alpha_{t})} 
\end{aligned}
\end{equation}
with
\begin{equation}\label{key}
M(\lambda_{1}, ..., \lambda_{t}) \equiv \sum_{n = 0}^{t-1}  \lambda_{n+1}^{T} \lambda_{n+1}
\end{equation}
Here we may introduce the complex regularized action
\begin{equation}\label{key}
S^{\varepsilon}(\alpha_{0}, ..., \alpha_{t}) \equiv \sum_{n = 0}^{t-1} S^{\varepsilon}_{n+1}(\alpha_{n},\alpha_{n+1})
\end{equation}
composed of the individual time-step contributions
\begin{equation}\label{key}
S^{\varepsilon}_{n+1}(\alpha_{n},\alpha_{n+1}) \equiv S_{n+1}(\alpha_{n}, \alpha_{n+1}) + \textstyle i \frac{1}{2} \varepsilon ~ \alpha_{n+1}^{T} V_{2}(R_{n})  V_{2}(R_{n})^{T} \alpha_{n+1} - i \frac{1}{4} s_{n} \ln(\frac{\varepsilon}{\pi})
\end{equation}
Then we can rewrite the regularized path integral in the form
\begin{equation}\label{key}
\begin{aligned}
\langle \psi_{t} \vert \mathbb{U}_{t0}^{\varepsilon} \vert \varphi_{0} \rangle &= \sum_{\alpha_{0}...\alpha_{t}} (2\pi)^{-tq/2} ~ \abs{ \prod_{m = 1}^{t} \det \Sigma_{r}(R_{m}) }^{1/2} ~ \langle  \psi_{t} \vert \alpha_{t} \rangle_{\mathtt{c}}  ~_{\mathtt{c}}\langle \alpha_{0} \vert \varphi_{0} \rangle ~ e^{i S^{\varepsilon}(\alpha_{0}, ..., \alpha_{t})} 
\end{aligned}
\end{equation}
Observe that the imaginary part of the regularized action supplies an additional dependence on the null space parts of $ \alpha_{n+1} $.  When $ \varepsilon \rightarrow 0 $, the logarithm takes over and it happens that $ \Im S^{\varepsilon}(\alpha_{0}, ..., \alpha_{t}) \rightarrow + \infty $. Nevertheless, observe that we cannot discard the term $ i \frac{1}{2} \varepsilon ~ \alpha_{n+1}^{T} V_{2}(R_{n})  V_{2}(R_{n})^{T} \alpha_{n+1} $ in the action, as this would cause the integral to diverge. In other words, we must keep the order of the integral $ \sum_{\alpha_{0}...\alpha_{t}} $ (which acts first) and the $ \lim_{\varepsilon \rightarrow 0} $ (which acts second).

Now let us express
\begin{equation}\label{key}
\begin{aligned}
& \langle \varphi_{0} \vert \overline{\mathbb{U}_{t0}^{\varepsilon}}  \mathbb{U}_{t0}^{\varepsilon} \vert \varphi_{0} \rangle = \\
&= \sum_{\alpha_{0}...\alpha_{t}} \sum_{\gamma_{0}...\gamma_{t}} ~ (2\pi)^{-tq} ~ \abs{ \prod_{m = 1}^{t} \det \Sigma_{r}(R_{m}) } ~ \langle \varphi_{0} \vert \gamma_{0} \rangle_{\mathtt{c}}  ~_{\mathtt{c}}\langle \alpha_{0} \vert \varphi_{0} \rangle ~_{\mathtt{c}}\langle \gamma_{t} \vert \alpha_{t} \rangle_{\mathtt{c}} \\
& \qquad \qquad ~ e^{-i \overline{S^{\varepsilon}(\gamma_{0}, ..., \gamma_{t})}} ~ e^{i S^{\varepsilon}(\alpha_{0}, ..., \alpha_{t})} =\\
&= \sum_{\alpha_{0}...\alpha_{t}} \sum_{\gamma_{0}...\gamma_{t-1}} ~ (2\pi)^{-tq} ~ \abs{ \prod_{m = 1}^{t} \det \Sigma_{r}(R_{m}) } ~ \langle \varphi_{0} \vert \gamma_{0} \rangle_{\mathtt{c}}  ~_{\mathtt{c}}\langle \alpha_{0} \vert \varphi_{0} \rangle \\
& \qquad \qquad ~ e^{-i \overline{S^{\varepsilon}(\gamma_{0}, ..., \gamma_{t-1}, \alpha_{t})}} ~ e^{i S^{\varepsilon}(\alpha_{0}, ..., \alpha_{t})}
\end{aligned}
\end{equation}
We could in principle renormalize the multi-step transition probability the same way we did it before for a single time-step: by defining the constant $ C_{\vert \varphi_{0} \rangle}^{\varepsilon t} \equiv \langle \varphi_{0} \vert \overline{\mathbb{U}_{t0}^{\varepsilon}}  \mathbb{U}_{t0}^{\varepsilon} \vert \varphi_{0} \rangle $ and stating that
\begin{equation}\label{key}
p_{\vert \varphi_{0} \rangle \rightarrow \vert \psi_{t} \rangle} = \lim_{\varepsilon \rightarrow 0} ~ C_{\vert \varphi_{0} \rangle}^{\varepsilon t ~ -1} ~ \langle \varphi_{0} \vert \overline{\mathbb{U}_{t0}^{\varepsilon}} \vert \psi_{t} \rangle \langle \psi_{t} \vert \mathbb{U}_{t0}^{\varepsilon} \vert \varphi_{0} \rangle
\end{equation}
This procedure does not seem very practical though. One can imagine it rather demanding to compute the integrals \textit{and} the limit. Therefore, if one has to renormalize, one should think of ways to simplify the expressions or come up with an approximation to make the computation more affordable.

Next let us look into the cylindrical regularization of multi-step evolution. The final state computed under cylindrical regularization will of course appear simpler, because it will not contain the functional regularization term:
\begin{equation}\label{cylfinalstate}
\begin{aligned}
\mathbb{U}_{t0}^{0} \vert \varphi_{0} \rangle &= \sum_{\alpha_{0}...\alpha_{t}} (2\pi)^{-tq/2} ~ \abs{ \prod_{m = 1}^{t} \det \Sigma_{r}(R_{m}) }^{1/2} ~ \vert \alpha_{t} \rangle_{\mathtt{c}}  ~_{\mathtt{c}}\langle \alpha_{0} \vert \varphi_{0} \rangle ~ e^{i S(\alpha_{0}, ..., \alpha_{t})} 
\end{aligned}
\end{equation}
However, we have to be careful when passing from \eqref{cylfinalstate} to a (propagator) wavefunction, since we in general cannot use the inner product of $ \mathcal{F}_{t} $. Instead, we should use the inner product of some suitable $ \mathcal{F}_{t}^{\mathrm{ph}} $. This physical Hilbert space will be determined by the evolution across the \textit{whole interval} between $ 0 $ and $ t $ so it will be interval-dependent. In other words, the cylindrical regularization is inherently global.

How to find $ \mathcal{F}_{t}^{\mathrm{ph}} $? First it is useful to think of the classical evolution. The classical one-step Hamiltonian evolution map comes with a pre-constraint and a set of free parameters. When we line up a number of these maps one after another, some of the free parameters produced at earlier time-steps are fixed by pre-constraints at later time-steps; other remain free. To understand the resulting structure of $ \mathcal{P}_{t} $, one may turn back to section 2.2. There we have defined the space $ \mathcal{D}_{t} \subset \mathcal{P}_{t} $ of data on the phase space which have a corresponding solution on the whole interval, i.e., the final post-constraint surface. According to Observation \ref{DdotDN}, it split as $ \mathcal{D}_{t} = \dot{\mathcal{D}}_{t} \oplus \mathcal{N}_{t} $ into a representative space of classes of symplectically equivalent data and the null space, respectively. We claim that it is exactly the (unique) null space $ \mathcal{N}_{t} $ which is generated by the ultimately free parameters of the evolution. This is because all the physical information about the system is carried by the symplectic structure which vanishes on $ \mathcal{N}_{t} $. Therefore, we can fix all the free parameters arbitrarily (thereby choosing a point in $ \mathcal{N}_{t} $) without affecting the physical results. This fixing of free parameters is exactly what happens when we pass from $ \mathcal{F}_{t} $ to $ \mathcal{F}_{t}^{\mathrm{ph}} $. Let us denote the space of coordinate parts of vectors from $ \mathcal{N}_{t} $ by $ \mathcal{N}_{t}^{\mathcal{Q}} \subset \mathcal{Q}_{n} $, the space of coordinate parts of vectors from $ \dot{\mathcal{D}}_{t} $ will be denoted by $ \dot{\mathcal{D}}_{t}^{\mathcal{Q}}  \subset \mathcal{Q}_{n} $. These conventions imply that $ \mathcal{Q}_{t} = \mathcal{N}_{t}^{\mathcal{Q}} \oplus \dot{\mathcal{D}}_{t}^{\mathcal{Q}} $. We may eventually fix $ \mathcal{F}_{t}^{\mathrm{ph}} = L^{2}(\dot{\mathcal{D}}_{t}^{\mathcal{Q}}) $. This is reminiscent of section 4, where the exact same space was used to construct the Hilbert space $ \mathcal{H}_{t} $.

So far the theory. In principle, it is of course possible to find $ \mathcal{D}_{t} $, make some suitable choice of the representative space $ \dot{\mathcal{D}}_{t} $ (recall that it is not a priori unique) and from it obtain the physical Hilbert space $ \mathcal{F}_{t}^{\mathrm{ph}} $ with the inner product $ \langle ~ \rangle^{\mathrm{ph}} $. The problem is that in order to do so, one would typically need to solve the classical equations of motion on the whole interval. Nevertheless, once it is done, \eqref{cylfinalstate} can be used to compute the regularized inner product $ \langle \psi_{t}^{\mathrm{ph}} \vert \mathbb{U}_{t0}^{0} \vert \varphi_{0} \rangle^{\mathrm{ph}} $ with a final state given by the wavefunction $ \psi_{t}^{\mathrm{ph}} \in \mathcal{F}_{t}^{\mathrm{ph}} $. It is also possible to consider $ \langle \varphi_{0} \vert \overline{\mathbb{U}_{t0}^{0}}  \mathbb{U}_{t0}^{0} \vert \varphi_{0} \rangle^{\mathrm{ph}} $ which is the norm of the wavefunction $ _{\mathtt{c}} \langle \gamma_{t} \vert \mathbb{U}_{t0}^{0} \vert  \varphi_{0} \rangle $ when seen as a function on $ \mathcal{F}_{t}^{\mathrm{ph}} $. The renormalization would be performed in the same way as before by denoting $ C_{\vert \varphi_{0} \rangle}^{0 t} \equiv \langle \varphi_{0} \vert \overline{\mathbb{U}_{t0}^{0}}  \mathbb{U}_{t0}^{0} \vert \varphi_{0} \rangle^{\mathrm{ph}} $ and stating that
\begin{equation}\label{key}
p_{\vert \varphi_{0} \rangle \rightarrow \vert \psi_{t} \rangle} = C_{\vert \varphi_{0} \rangle}^{0 t ~ -1} ~ \langle \psi_{t}^{\mathrm{ph}} \vert \mathbb{U}_{t0}^{0} \vert \varphi_{0} \rangle^{\mathrm{ph}}
\end{equation}
This approach can be potentially simpler than the one with functional regularization since there are no extra terms in the integrand and---more importantly---no limit. On the other hand, it requires us to locate the space $ \mathcal{D}_{t} $ which is typically a non-trivial task. In the functional regularization scheme, this is done implicitly by means of including the regularization terms in the integrand and performing the limit. Either approach can be favorable, depending on a particular application. For instance, in a numerical computation, on may settle for the functional regularization with some small parameter $ \varepsilon $, leading to an approximate result.

Our final note is the following. In section 4.3 we argued that some features shared by the unitary model and the model of \cite{Hoehn2014, Hoehn2014a} may be inconvenient for certain applications. In particular, we have in mind the back-propagation of constraints, which appears to be an inevitable feature of the unitary evolution but is at the same time unphysical and breaks causality. In the present section, we spent effort to make the non-unitary model interval-independent. For this reason we did not a priori reduce the state spaces; instead we quantized the system separately at each time-slice and introduced a one-step evolution map which naturally implements the pre-constraint. Consequently, the map is non-unitary and requires regularization in order to obtain well-defined wavefunctions. The two offered regularization schemes allow for that, along with the possibility to renormalize probabilities. The resulting path integrals feature the standard propagator and an integration over the canonical variables, which is a great advantage over the unitary model. Under functional regularization, the path integral keeps the formal interval-independence. Moreover, it necessitates no additional analysis of the structure of classical solutions, which is something we usually expect from a path integral. On the other hand, the cylindrical regularization is interval-dependent and does require an analysis of the classical solution space, but in turn makes the expression for the path integral simpler and perhaps better accessible for computation.

\section{Massless Scalar Field on a 2D Spacetime Lattice}\label{toy}
Finally we present an exemplary application of the above introduced formalism to the case of massless scalar field on a fixed two-dimensional Regge triangulation.  The setup will be exactly the same as in section 6 of \cite{Kaninsky2020a}, where it was used to illustrate an analysis of the classical evolution. The only difference is that here the field is quantized. We will summarize the properties of the model only briefly, for more details see the given reference. The lattice is assumed to be composed of a finite number of spacelike slices indexed by $ n \in \{0, ..., t\} $ such that every slice includes a finite number of vertices (at most $ q $) and every vertex is a member of exactly one slice. The triangulation is then provided with additional \textit{virtual} vertices so that the number of vertices in each time-slice is exactly $ q $; these have no physical meaning and are not connected by any edges. After adding the virtual vertices, the total number of vertices in the triangulation is $ N = q(t+1) $. To every vertex $ i $ we associate a field value $ \varphi_{i} \in \mathbb{R} $. The corresponding scalar field action is taken to be
\begin{equation}\label{sfaction}
S_{0t} = \frac{1}{2} \sum_{\text{edges } ij} w_{ij} ~ (\varphi_{i}-\varphi_{j})^{2}
\end{equation}
with the weight
\begin{equation}\label{key}
w_{ij} = \frac{\mathcal{V}_{ij}}{\mathcal{l}_{ij}^{2}}
\end{equation}
associated to each edge $ ij $, where $ \mathcal{V}_{ij} $ is the dual edge volume and $ \mathcal{l}_{ij}^{2} $ the squared edge length. For simplicity, we assume there are only two kinds of edges in the triangulation: \textit{spacelike} edges with squared edge lengths equal to $ 1 $ and \textit{timelike} edges with squared edge lengths equal to $ -1/2 $. The triangulation is built up from two types of triangles denoted by (1,2) and (2,1). Both types have equal area $ A $. It follows that $ \mathcal{V}_{ij} = m A $ where $ m $ is the number of triangles which contain the given edge. Edges with $ m = 1 $ are called \textit{boundary edges}, edges with $ m = 2 $ are called \textit{interior edges}. We will assume that there is no timelike boundary, therefore all timelike edges will be interior edges. Meanwhile, in order to describe the evolution we split the triangulation into individual time-steps, therefore all spacelike edges will turn into boundary edges. For simplicity, we divide the action by the overall constant $ 2A $. In result, we shall have
\begin{equation}\label{wijint2}
w_{ij} = \begin{cases}
-2 & \text{if } ij \text{ is an interior timelike edge} \\
1/2 & \text{if } ij \text{ is a boundary spacelike edge}
\end{cases}
\end{equation}

The action \eqref{sfaction} can be written compactly as
\begin{equation}\label{sfactioncondensed}
	S_{0t} = \textstyle \frac{1}{2} ~ \varphi^{T} K \varphi
\end{equation}
where $ \varphi \in \mathbb{R}^{N} $ and $ K $ is a real, symmetric $ N \times N $ matrix which we call the \textit{dynamical matrix}. By comparison of \eqref{sfaction} and \eqref{sfactioncondensed}, one finds that
\begin{equation}\label{Kij}
	K_{ij} = \delta_{ij} \sum_{k} \delta^{e}_{ik} w_{ik} - \delta^{e}_{ij} w_{ij}
\end{equation}
where
\begin{equation}\label{key}
\delta^{e}_{ij} = \begin{cases}
1 & \text{if the vertices } i \text{ and } j \text{ are connected by an edge} \\
0 & \text{otherwise}
\end{cases}
\end{equation}
The lattice is split into individual time-steps, which induces a corresponding splitting of the matrix $ K $. We let $ K_{(n)} $ be the $ q \times q $ submatrix of $ K $ corresponding to time-slice $ n $ and $ K_{(n,n+1)} $ the $ q \times q $ submatrix of $ K $ with rows corresponding to variables at time-slice $ n $ and columns corresponding to variables at time-slice $ n+1 $. The former is further separated into two parts $ K_{(n)} = K_{(n)}^{-} + K_{(n)}^{+} $ (for $ n = 1, ..., t-1 $) where $ K_{(n)}^{-} $ and $  K_{(n)}^{+} $ describe the boundary time-slice $ n $ of the two separated time steps: one between $ n-1 $ and $ n $ ($ - $), other between $ n $ and $ n+1 $ ($ + $). The action associated to the time-step between $ n $ and $ n+1 $ then takes the form
\begin{equation}\label{Sn+1ex}
S_{n+1} = \textstyle \frac{1}{2} \left( x_{n}^{T} K_{(n)}^{+} x_{n} + 2 x_{n}^{T} K_{(n,n+1)} x_{n+1} + x_{n+1}^{T} K_{(n+1)}^{-} x_{n+1} \right)
\end{equation}
where $ x_{n} \in \mathbb{R}^{q} $ is the $ q $-tuple of field values $ \varphi_{i} $ in vertices $ i $ belonging to time-slice $ n $. One identifies the matrices in \eqref{prepostmomentamatrix} as
\begin{equation}\label{LRK}
	\begin{aligned}
		L_{n}&= \textstyle K_{(n)}^{+} \qquad & R_{n}&=K_{(n,n+1)} \\
		\bar{L}_{n+1}&= -R_{n}^{T} = -K_{(n+1,n)} \qquad & \bar{R}_{n+1}&= - K_{(n+1)}^{-} \\
	\end{aligned}
\end{equation}
The sum of the contributions \eqref{Sn+1ex} gives the overall action \eqref{sfactioncondensed},
\begin{equation}\label{actionadd}
	S_{0t} = \sum_{n=0}^{t-1} S_{n+1}(x_{n},x_{n+1})
\end{equation}

\subsection{One-Step Examples}
We will now give a set of three basic examples of one-step fixed lattices endowed with massless quantum scalar field. The very same examples have been analyzed in \cite{Kaninsky2020a}, the reader is welcomed to compare the present quantum case to the classical one. We shall employ the general non-unitary model of section 5. For better readability, let us repeat the most important formulas accompanying this model. The evolution is characterized by the map \eqref{Un+1c} defined as
\begin{equation}\label{ev}
\mathbb{U}_{n+1} \vert \alpha_{n} \rangle_{\dot{\mathtt{c}}} = \sum_{\lambda_{n+1}} \mathcal{l}(\lambda_{n+1}) ~ \vert \alpha_{n+1} \rangle_{\ddot{\mathtt{c}}} ~ \delta^{s_{n}}( \iota_{n}) ~ f(\alpha_{n+1})
\end{equation}
with the conventions
\begin{equation}\label{key}
\alpha_{n} \equiv \begin{pmatrix}
\varrho_{n} \\
\iota_{n}
\end{pmatrix}, \qquad \alpha_{n+1} \equiv \begin{pmatrix}
\varrho_{n}\\
\lambda_{n+1}
\end{pmatrix}
\end{equation}
\begin{equation}\label{f}
f(\alpha_{n+1}) = e^{i\frac{1}{2} \alpha_{n+1}^{T} \bar{\Sigma}^{-1} V^{T} \bar{R}_{n+1} V \bar{\Sigma}^{-1} \alpha_{n+1}}
\end{equation}
It results in the canonical eigenvector basis propagator \eqref{prop}
\begin{equation}\label{pr}
_{\mathtt{c}} \langle \beta_{n+1} \vert \mathbb{U}_{n+1} \vert \gamma_{n} \rangle_{\mathtt{c}} = \mathcal{l}(V_{2}^{T} \beta_{n+1}) ~ \abs{\det \Sigma_{r} }^{1/2} ~ (2\pi)^{-q/2} ~ e^{iS_{n+1}(\gamma_{n},\beta_{n+1})}
\end{equation}
We will be interested not only in the evolution itself, but also in the effect of either of the introduced regularization schemes. First, we shall look at functional regularization, which is given by a special choice \eqref{el}, \eqref{kepsilon} of the extra term
\begin{equation}\label{l}
\mathcal{l}(\lambda_{n+1}) \equiv \pi^{-s_{n}/4} ~ \varepsilon^{s_{n}/4} ~ e^{-\frac{1}{2} \varepsilon \lambda_{n+1}^{T}\lambda_{n+1} }
\end{equation}
Upon evolving some initial state $ \vert \varphi_{n} \rangle $ to the next time-step, its norm is generally not conserved. According to \eqref{vphiUvepsisq}, \eqref{Cvarphi}, it changes to
\begin{equation}\label{Cvp}
C_{\vert \varphi_{n} \rangle} \equiv \langle \varphi_{n} \vert \overline{\mathbb{U}_{n+1}^{\varepsilon}} \mathbb{U}_{n+1}^{\varepsilon} \vert \varphi_{n} \rangle = \sum_{\varrho_{n}} \langle \varphi_{n} \vert J_{r_{n}} \varrho_{n} \rangle_{\dot{\mathtt{c}}} ~_{\dot{\mathtt{c}}} \langle J_{r_{n}} \varrho_{n} \vert \varphi_{n} \rangle
\end{equation}
In the second cylindrical regularization scheme, we lay $ \mathcal{l} = 1 $ and rely on the \textit{physical inner product} $ \langle ~ \rangle^{\mathrm{ph}} $ to dispose the divergence originating from the integration over free parameters. In the case of one-step evolution, \eqref{1spip} and \eqref{pipcanonical} instruct us on how it should be computed:
\begin{equation}\label{1pip}
\begin{aligned}
\langle \psi_{n+1} \vert \phi_{n+1} \rangle^{\mathrm{ph}} &= \sum_{\tau_{n+1}} ~ \langle \psi_{n+1} \vert \gamma_{n+1} \rangle_{\ddot{\mathtt{c}}} ~_{\ddot{\mathtt{c}}} \langle \gamma_{n+1} \vert \phi_{n+1} \rangle = \\ &= \abs{\det \Sigma_{r} }^{-1} \sum_{\tau_{n+1}} ~ \langle \psi_{n+1} \vert V \bar{\Sigma}^{-1} \gamma_{n+1} \rangle_{\mathtt{c}} ~_{\mathtt{c}}\langle V \bar{\Sigma}^{-1} \gamma_{n+1} \vert \phi_{n+1} \rangle
\end{aligned}
\end{equation}
where $ \gamma_{n+1} \equiv \begin{pmatrix}
\tau_{n+1} & \eta_{n+1}
\end{pmatrix}^{T} $ and $ \eta_{n+1} $ represents additional free parameters which are not integrated over. With this, the physical norm of an initial state $ \vert \varphi_{n} \rangle $ evolved one time-step is given by \eqref{phnorm}, \eqref{Cphnorm} and reads
\begin{equation}\label{Cvpcr}
C_{\vert \varphi_{n} \rangle} \equiv \langle \varphi_{n} \vert \overline{\mathbb{U}^{0}_{n+1}} \mathbb{U}^{0}_{n+1} \vert \varphi_{n} \rangle^{\mathrm{ph}} = \sum_{\varrho_{n}} \langle \varphi_{n} \vert J_{r_{n}} \varrho_{n} \rangle_{\dot{\mathtt{c}}} ~_{\dot{\mathtt{c}}} \langle J_{r_{n}} \varrho_{n} \vert \varphi_{n} \rangle
\end{equation}
which is identical to \eqref{Cvp}. Therefore, at least for one-step evolution, both regularization schemes produce the same (re)normalization. With the above formulas in mind, we can proceed to the examples.

\begin{example}\label{ex:1}
First we consider a time-step between time-slices 0 and 1 with exactly three vertices at each time-slice. The lattice is depicted in Fig. \ref{fig:lat1}.

\begin{figure}[h!]
	\centering
	\begin{tikzpicture}[scale=1]
	\tikzset{
		vertex/.style={
			shape=circle,fill=lightgray!100,minimum size=3mm,inner sep=0.2mm, label={[fill=none,label distance=1mm]90:#1}
		},
		vertexwhite/.style={
			shape=circle,fill=white!100,minimum size=10mm,inner sep=0.2mm, label={[fill=none,label distance=1mm]90:#1}
		},
		edge/.style={
			draw,-,color=lightgray!100,line width=0.3mm
		},
		edget/.style={
			draw,dashed,color=lightgray!100,line width=0.3mm
		}
	}
	
	\coordinate (cia) at (-2,0);
	\coordinate (c1) at (-1,0);
	\coordinate (c2) at (0,0);
	\coordinate (c3) at (1,0);
	\coordinate (cib) at (2,0);
	\coordinate (cic) at (-1.5,0.866);
	\coordinate (c4) at (-0.5,0.866);
	\coordinate (c5) at (0.5,0.866);
	\coordinate (c6) at (1.5,0.866);
	\coordinate (cid) at (2.5,0.866);

	\draw[edge] (cia) -- (c1) -- (c2) -- (c3) -- (cib);
	\draw[edget] (cic) -- (c1) -- (c4) -- (c2) -- (c5) -- (c3) -- (c6) -- (cib);
	\draw[edge] (cic) -- (c4) -- (c5) -- (c6) -- (cid);
	
	\node[vertex] at (c1) {1};
	\node[vertex] at (c2) {2};
	\node[vertex] at (c3) {3};
	\node[vertex] at (c4) {4};
	\node[vertex] at (c5) {5};
	\node[vertex] at (c6) {6};
	\node[vertexwhite] at (cia) {};
	\node[vertexwhite] at (cib) {};
	\node[vertexwhite] at (cic) {};
	\node[vertexwhite] at (cid) {};
	\node[] at (3.5,0) {$ n = 0 $};
	\node[] at (3.5,0.866) {$ n = 1 $};

	\end{tikzpicture}
	\vspace{0 mm}
	\caption{Diagram of the time-step lattice of Example \ref{ex:1}. The fragments of edges on the right are meant to be connected to the fragments on the left, so that each time-slice is a closed loop. Timelike edges are drawn in dashed line.}
	\label{fig:lat1}
	\vspace{4 mm}
\end{figure}

We have $ q = 3 $, $ t = 1 $ and $ N = 6 $. The dynamical matrix \eqref{Kij} is
\begin{equation}\label{key}
		K = \begin{pmatrix}
			-3 & -1/2 & -1/2 & 2 & 0 & 2 \\
			& -3 & -1/2 & 2 & 2 & 0 \\
			& & -3 & 0 & 2 & 2 \\
			& & & -3 & -1/2 & -1/2 \\
			& & & & -3 & -1/2 \\
			& & & & & -3 \\
		\end{pmatrix}
\end{equation}
Because the matrix is symmetric, we only write the upper triangle. It is easy to read out the matrices of \eqref{LRK}. Since we consider a single time-slice, no splitting is needed. We get in particular
\begin{equation}\label{key}
		R_{0} = \begin{pmatrix}
			2 & 0 & 2 \\
			2 & 2 & 0 \\
			0 & 2 & 2
		\end{pmatrix}
\end{equation}
and the singular value decomposition $ R_{0} = U \Sigma V^{T} $ produces
\begin{equation}\label{key}
		U = \frac{1}{\sqrt{6}} \begin{pmatrix}
			\sqrt{2} & 0 & -2 \\
			\sqrt{2} & -\sqrt{3} & 1 \\
			\sqrt{2} & \sqrt{3} & 1
		\end{pmatrix}, ~ \Sigma = \begin{pmatrix}
			4 & 0 & 0 \\
			0 & 2 & 0 \\
			0 & 0 & 2
		\end{pmatrix}, ~ V = \frac{1}{\sqrt{6}} \begin{pmatrix}
			\sqrt{2} & -\sqrt{3} & -1 \\
			\sqrt{2} & 0 & 2 \\
			\sqrt{2} & \sqrt{3} & -1
		\end{pmatrix}
\end{equation}
Now we can easily plug into our formulas and describe the evolution of the quantum system. Since $ R_{0} $ is regular, the evolution will turn out unitary, without any need for regularization.

We can proceed to a description of the quantum system. Since we have $ s_{0} = 0 $, the one-step evolution map \eqref{ev} simplifies to
\begin{equation}\label{key}
\mathbb{U}_{1} \vert \alpha_{0} \rangle_{\dot{\mathtt{c}}} = \mathcal{l} ~ \vert \alpha_{1} \rangle_{\ddot{\mathtt{c}}} ~ f(\alpha_{1})
\end{equation}
with constant $ \mathcal{l} $. The evolution is regular. It is natural to put $ \mathcal{l} = 1 $ which also follows if we straightforwardly plug in the functional regularization prescription \eqref{l}. Next, the propagator \eqref{pr} amounts to
\begin{equation}\label{key}
\begin{aligned}
_{\mathtt{c}} \langle \beta_{1} \vert \mathbb{U}_{1} \vert \gamma_{0} \rangle_{\mathtt{c}} &= \abs{\det \Sigma }^{1/2} ~ (2\pi)^{-q/2} ~ e^{iS_{1}(\gamma_{0},\beta_{1})} =\\
&= 4 (2\pi)^{-3/2} ~ e^{iS_{1}(\gamma_{0},\beta_{1})}
\end{aligned}
\end{equation}
According to \eqref{sfactioncondensed}, the action equals
\begin{equation}\label{key}
S_{1}(\gamma_{0},\beta_{1}) = S(\gamma_{0},\beta_{1}) = \frac{1}{2} \begin{pmatrix}
\gamma_{0}^{T} & \beta_{1}^{T}
\end{pmatrix} K \begin{pmatrix}
\gamma_{0} \\ \beta_{1}
\end{pmatrix}
\end{equation}
If we start with a normalizable initial state $ \vert \varphi_{0} \rangle $ so that it has the (finite) norm $ \langle \varphi_{0} \vert \varphi_{0} \rangle $ and recall \eqref{Cvp}, we find that the norm of the final state will be
\begin{equation}\label{key}
\langle \varphi_{0} \vert \overline{\mathbb{U}_{1}} \mathbb{U}_{1} \vert \varphi_{0} \rangle = \langle \varphi_{0} \vert \varphi_{0} \rangle
\end{equation}
i.e., the normalization is conserved and the quantum evolution is indeed unitary.

\end{example}

\vspace{\baselineskip}

\begin{example}\label{ex:2}
The second example of one-step lattice shall have three vertices at time-slice 0 but only one vertex at time-slice 1, as depicted in Fig. \ref{fig:lat2}.
	
\begin{figure}[h!]
	\centering
	\begin{tikzpicture}[scale=1]
	\tikzset{
		vertex/.style={
			shape=circle,fill=lightgray!100,minimum size=3mm,inner sep=0.2mm, label={[fill=none,label distance=1mm]90:#1}
		},
		vertexwhite/.style={
			shape=circle,fill=white!100,minimum size=10mm,inner sep=0.2mm, label={[fill=none,label distance=1mm]90:#1}
		},
		vertexvirtual/.style={
			shape=circle,draw=lightgray!100,fill=white!100,line width=0.3mm,minimum size=2.7mm,inner sep=0.2mm, 	label={[fill=none,label distance=1mm]90:#1}
		},
		edge/.style={
			draw,-,color=lightgray!100,line width=0.3mm
		},
		edget/.style={
			draw,dashed,color=lightgray!100,line width=0.3mm
		},
	}
	
	\coordinate (cia) at (-2,0);
	\coordinate (c1) at (-1,0);
	\coordinate (c2) at (0,0);
	\coordinate (c3) at (1,0);
	\coordinate (cib) at (2,0);
	\coordinate (cic) at (-2,0.866);
	\coordinate (c4) at (-1,0.866);
	\coordinate (c5) at (0,0.866);
	\coordinate (c6) at (1,0.866);
	\coordinate (cid) at (2,0.866);

	\draw[edge] (cia) -- (c1) -- (c2) -- (c3) -- (cib);
	\draw[edget] (c1) -- (c5) -- (c2);
	\draw[edget] (c3) -- (c5);
	
	\node[vertex] at (c1) {1};
	\node[vertex] at (c2) {2};
	\node[vertex] at (c3) {3};
	\node[vertexvirtual] at (c4) {4};
	\node[vertex] at (c5) {5};
	\node[vertexwhite] at (cia) {};
	\node[vertexwhite] at (cib) {};
	\node[vertexwhite] at (cic) {};
	\node[vertexwhite] at (cid) {};
	\node[vertexvirtual] at (c6) {6};
	\node[] at (3.5,0) {$ n = 0 $};
	\node[] at (3.5,0.866) {$ n = 1 $};
	
	\end{tikzpicture}
	\vspace{0 mm}
	\caption{Diagram of the time-step lattice of Example \ref{ex:2}. It is made of three identical type 2-1 triangles. Vertices 4 and 6 are virtual. Dashed edges are timelike.}
	\label{fig:lat2}
	\vspace{4 mm}
\end{figure}
	
We have $ q = 3 $, $ t = 1 $ and $ N = 6 $ as before. We read out the dynamical matrix
\begin{equation}\label{key}
		K = \begin{pmatrix}
			-1 & -1/2 & -1/2 & 0 & 2 & 0 \\
			& -1 & -1/2 & 0 & 2 & 0 \\
			& & -1 & 0 & 2 & 0 \\
			& & & 0 & 0 & 0 \\
			& & & & -6 & 0 \\
			& & & & & 0 \\
		\end{pmatrix}
\end{equation}
including the upper right block
\begin{equation}\label{key}
	R_{0} = \begin{pmatrix}
			0 & 2 & 0 \\
			0 & 2 & 0 \\
			0 & 2 & 0
		\end{pmatrix}
\end{equation}
describing the interaction between time-slices. It may be decomposed into
\begin{equation}\label{key}
		U = \frac{1}{\sqrt{6}} \begin{pmatrix}
			\sqrt{2} & -\sqrt{3} & -1 \\
			\sqrt{2} & 0 & 2 \\
			\sqrt{2} & \sqrt{3} & -1
		\end{pmatrix}, ~ \Sigma = \begin{pmatrix}
			2 \sqrt{3} & 0 & 0 \\
			0 & 0 & 0 \\
			0 & 0 & 0
		\end{pmatrix}, ~ V = \begin{pmatrix}
			0 & 0 & 1 \\
			1 & 0 & 0 \\
			0 & 1 & 0
		\end{pmatrix}
\end{equation}

We see that $ r_{0} = 1 $ and $ s_{0} = 2 $ so the lattice is irregular. The one-step quantum evolution map \eqref{ev} then gives
\begin{equation}\label{ua2}
\mathbb{U}_{1} \vert \alpha_{0} \rangle_{\dot{\mathtt{c}}} = \sum_{\alpha_{12},\alpha_{13}} \mathcal{l}(\alpha_{12},\alpha_{13}) ~ \vert \alpha_{1} \rangle_{\ddot{\mathtt{c}}} ~ \delta(\alpha_{02}) \delta(\alpha_{03}) ~ f(\alpha_{1})
\end{equation}
Here one can see how the pre-constraint implemented by $ \delta(\alpha_{02}) \delta(\alpha_{03}) $ sets the final state to zero unless the adapted coordinates of the initial state satisfy $ \alpha_{02} = \alpha_{03} = 0 $. Meanwhile, the final state takes on various adapted coordinates $ \alpha_{12},\alpha_{13} $ weighted by the function $ \mathcal{l} $. Recall that these can in general also enter the function $ f $. That does not happen here due to the form of the matrix $ \bar{R}_{1} = -K^{-}_{(1)} $ whose only nonzero element is $ (\bar{R}_{1})_{22} = 6 $. Upon plugging into \eqref{f}, one finds that $ f $ contains only the adapted coordinate $ \alpha_{11} $. This is because the adapted coordinates $ \alpha_{12},\alpha_{13} $ happen to describe the field values at virtual vertices and thus bear no physical meaning.

The propagator \eqref{pr} amounts to
\begin{equation}\label{wf2}
_{\mathtt{c}} \langle \beta_{1} \vert \mathbb{U}_{1} \vert \gamma_{0} \rangle_{\mathtt{c}} = \mathcal{l}(\beta_{11}, \beta_{13}) ~ (2 \sqrt{3})^{1/2} ~ (2\pi)^{-3/2} ~ e^{iS(\gamma_{0},\beta_{1})}
\end{equation}
and the action given by \eqref{Sn+1ex} can be rewritten as
\begin{equation}\label{S2}
\begin{aligned}
S(\gamma_{0},\beta_{1}) & = \textstyle \frac{1}{2} \left( \gamma_{0}^{T} K_{(0)}^{+} \gamma_{0} + 2\gamma_{0}^{T} K_{(0,1)} \beta_{1} + \beta_{1}^{T} K_{(1)}^{-} \beta_{1} \right) = \\
&= \textstyle \frac{1}{2} \left( \gamma_{0}^{T} K_{(0)}^{+} \gamma_{0} + 4 \beta_{12}(\gamma_{01}+\gamma_{02}+\gamma_{03}) - 6\beta_{12}^{2} \right)
\end{aligned}
\end{equation}
First let us point out that the mixed term of \eqref{S2} hides the non-trivial pre-constraint $ \delta(\alpha_{02}) \delta(\alpha_{03}) $ from \eqref{ua2}. At the same time, it tells us that the field value $ \beta_{12} $ at vertex $ 5 $ is coupled to field values $ \gamma_{01}, \gamma_{02}, \gamma_{03} $ at all the three vertices of time-slice $ 0 $. It comes as no surprise that the field values $ \beta_{11}, \beta_{13} $ at virtual vertices are completely missing: if it had not been for the regularization term $ \mathcal{l}(V_{2}^{T} \beta_{1}) = \mathcal{l}(\beta_{11}, \beta_{13}) $ (the $ 3 \times 2 $ matrix $ V_{2} $ is formed by the last two columns of $ V $), the wavefunction \eqref{wf2} would be constant with respect to them.

If we want to compute inner products on $ \mathcal{F}_{1} $, e.g. in order to normalize the final state, we need to set up the regularization scheme. Assume that we start with a normalizable initial state $ \vert \varphi_{0} \rangle $. Under the functional regularization \eqref{l}, the wavefunction $ _{\mathtt{c}} \langle \beta_{1} \vert \mathbb{U}_{1}^{\varepsilon} \vert \varphi_{0} \rangle $ generated from \eqref{wf2} will become square integrable in the variables $ \beta_{11}, \beta_{12} $ and will therefore belong to $ \mathcal{F}_{1} $, at least as long as we keep $ \varepsilon > 0 $. Thanks to the normalization $ \sum_{\beta_{11}, \beta_{13}} \mathcal{l}(\beta_{11}, \beta_{13})^{2} = 1 $, our initial state $ \vert \varphi_{0} \rangle $ will evolve into $ \mathbb{U}^{\varepsilon}_{1} \vert \varphi_{0} \rangle $ with the norm
\begin{equation}\label{Cphi2}
C_{\vert \varphi_{n} \rangle} = \langle \varphi_{0} \vert \overline{\mathbb{U}^{\varepsilon}_{1}} \mathbb{U}^{\varepsilon}_{1} \vert \varphi_{0} \rangle = \sum_{\alpha_{01}} \langle \varphi_{0} \vert \alpha_{0} \rangle_{\dot{\mathtt{c}}} ~_{\dot{\mathtt{c}}} \langle \alpha_{0} \vert \varphi_{0} \rangle
\end{equation}
where it must be understood that $ \alpha_{0} = \begin{pmatrix}
\alpha_{01} & 0 & 0
\end{pmatrix}^{T} $. Since there is no normalizable state $ \vert \varphi_{0} \rangle $ which would fully satisfy the pre-constraint (the state would have to be sharp in the adapted coordinates $ \alpha_{02} = \alpha_{03} = 0 $ w.r.t. the basis $ \{ \dot{e}_{01}, \dot{e}_{02}, \dot{e}_{03} \} $ of $ \mathcal{Q}_{0} $), the norm \eqref{Cphi2} of any normalizable state will be smaller than $ \langle \varphi_{0} \vert \varphi_{0} \rangle $. This clearly demonstrates the non-unitary of the evolution. If $ C_{\vert \varphi_{n} \rangle} \neq 0 $, we can renormalize the final state so that it has the same norm as $ \vert \varphi_{0} \rangle $.

Eventually let us consider some state $ \vert \psi_{1} \rangle $ and look at its inner product with the final state $ \mathbb{U}^{\varepsilon}_{1} \vert \varphi_{0} \rangle $. It is given as
\begin{equation}\label{psiUepsphi2}
\langle \psi_{1} \vert \mathbb{U}^{\varepsilon}_{1} \vert \varphi_{0} \rangle = \sum_{\beta_{1}} \sum_{\gamma_{0}} \langle \psi_{1} \vert \beta_{1} \rangle_{\mathtt{c}} ~ _{\mathtt{c}} \langle \gamma_{0} \vert \varphi_{0} \rangle ~ \mathcal{l}(\beta_{11}, \beta_{13}) ~ (2\sqrt{3})^{1/2} ~ (2\pi)^{-3/2} ~ e^{iS(\gamma_{0},\beta_{1})}
\end{equation}
The reader may notice that the wavefunction $ _{\mathtt{c}} \langle \beta_{1} \vert \psi_{1} \rangle $ depends on the meaningless variables $ \beta_{11}, \beta_{13} $ describing field values at virtual vertices $ 4, 6 $. Indeed, there is no canonical way of choosing this wavefunction. Luckily, the particular choice does not matter, since the unphysical variables only enter into the integral $ \sum_{\beta_{11}, \beta_{13}} \langle \psi_{1} \vert \beta_{1} \rangle_{\mathtt{c}} ~ \mathcal{l}(\beta_{11}, \beta_{13}) $. For $ _{\mathtt{c}} \langle \beta_{1} \vert \psi_{1} \rangle $ from $ \mathcal{F}_{1} $, the integral is finite. In the limit $ \varepsilon \rightarrow 0 $, all free parameters are given the same weight so the result will be influenced only by the dependence of $ _{\mathtt{c}} \langle \beta_{1} \vert \psi_{1} \rangle $ on the physical $ \beta_{12} $. For a computation of the transition probability, we shall normalize $ \vert \psi_{1} \rangle $ so that $ \langle \psi_{1} \vert \psi_{1} \rangle = 1 $.

The cylindrical regularization is perhaps more natural here. It fixes the evolved eigenstates of the adapted coordinates as
\begin{equation}\label{ua2c}
\mathbb{U}_{1}^{0} \vert \alpha_{0} \rangle_{\dot{\mathtt{c}}} = \sum_{\alpha_{12},\alpha_{13}} \vert \alpha_{1} \rangle_{\ddot{\mathtt{c}}} ~ \delta(\alpha_{02}) \delta(\alpha_{03}) ~ f(\alpha_{1})
\end{equation}
Consequently, the wavefunction \eqref{wf2} under cylindrical regularization turns out constant w.r.t. $ \beta_{11}, \beta_{13} $. The evolved state $ \mathbb{U}_{1}^{0} \vert \varphi_{0} \rangle $ is therefore not normalizable; its wavefunction does not belong to $ \mathcal{F}_{1} $. The remedy is provided by the physical inner product \eqref{1pip}, taking the form
\begin{equation}\label{pip2}
\begin{aligned}
\langle \psi_{1} \vert \phi_{1} \rangle^{\mathrm{ph}} &= \sum_{\gamma_{11}} ~ \langle \psi_{1} \vert \gamma_{1} \rangle_{\ddot{\mathtt{c}}} ~_{\ddot{\mathtt{c}}} \langle \gamma_{1} \vert \phi_{1} \rangle = \\
&= (2\sqrt{3})^{-1} \sum_{\gamma_{11}} ~ \langle \psi_{1} \vert V \bar{\Sigma}^{-1} \gamma_{1} \rangle_{\mathtt{c}} ~_{\mathtt{c}}\langle V \bar{\Sigma}^{-1} \gamma_{1} \vert \phi_{1} \rangle
\end{aligned}
\end{equation}
We see that instead of integrating over all the (adapted) coordinates (as in the inner product on $ \mathcal{F}_{1} $), the physical inner product features only one integral over the adapted coordinate $ \gamma_{11} $ while $ \gamma_{12}, \gamma_{13} $ play the role of additional parameters. The matrix
\begin{equation}\label{key}
V \bar{\Sigma}^{-1} = \begin{pmatrix}
0 & 0 & 1 \\
1 & 0 & 0 \\
0 & 1 & 0
\end{pmatrix} \begin{pmatrix}
2 \sqrt{3} & 0 & 0 \\
0 & 1 & 0 \\
0 & 0 & 1
\end{pmatrix}^{-1} = \begin{pmatrix}
0 & 0 & 1 \\
(2\sqrt{3})^{-1} & 0 & 0 \\
0 & 1 & 0
\end{pmatrix}
\end{equation}
just transforms the adapted coordinates back to the canonical ones. In particular, note that $ (V \bar{\Sigma}^{-1} \gamma_{1})_{2} = (2\sqrt{3})^{-1} \gamma_{11} $ is the canonical coordinate corresponding to the field value at vertex $ 5 $. What we see in \eqref{pip2} is an integration over this field value. This justifies us calling the inner product `physical'---it disregards the field values at virtual vertices $ 4, 6 $ which have no physical meaning and integrates the wavefunctions only over the field value at vertex $ 5 $. With this, the evolved state wavefunction $  _{\mathtt{c}} \langle \beta_{1} \vert \mathbb{U}_{1}^{0} \vert \varphi_{0} \rangle $ is found to have a finite norm in the physical Hilbert space $ \mathcal{F}_{1}^{\mathrm{ph}} $ equal to
\begin{equation}\label{key}
\langle \varphi_{0} \vert \overline{\mathbb{U}^{0}_{1}} \mathbb{U}^{0}_{1} \vert \varphi_{0} \rangle^{\mathrm{ph}} = \sum_{\gamma_{11}} \langle \varphi_{0} \vert \gamma_{1} \rangle_{\dot{\mathtt{c}}} ~_{\dot{\mathtt{c}}} \langle \gamma_{1} \vert \varphi_{0} \rangle
\end{equation}
where it must be understood that $ \gamma_{1} = \begin{pmatrix} \gamma_{11} & 0 & 0 \end{pmatrix}^{T} $. This norm is equivalent to \eqref{Cphi2} and can be interpreted as the norm of $ _{\mathtt{c}} \langle \gamma_{0} \vert \varphi_{0} \rangle $ when limited to the pre-constraint surface. This makes sense, because $ \mathcal{F}_{1}^{\mathrm{ph}} $ is the space of square integrable functions on $ \dot{\mathcal{C}}^{+\mathcal{Q}}_{n+1} $ and the latter space is in fact isomorphic to the coordinate part $ \mathcal{C}_{0}^{- \mathcal{Q}} $ of the pre-constraint surface $ \mathcal{C}_{0}^{-} $.

The physical inner product of the final state $ \mathbb{U}^{0}_{1} \vert \varphi_{0} \rangle $ with another state $ \vert \psi_{1} \rangle $ can be written down as
\begin{equation}\label{psiU0phi2}
\langle \psi_{1} \vert \mathbb{U}^{0}_{1} \vert \varphi_{0} \rangle^{\mathrm{ph}} = (2\sqrt{3})^{-1/2} ~ (2\pi)^{-3/2} \sum_{\beta_{11}} \sum_{\gamma_{0}} ~ \langle \psi_{1} \vert V \bar{\Sigma}^{-1} \beta_{1} \rangle_{\mathtt{c}} ~_{\mathtt{c}}\langle \gamma_{0} \vert \varphi_{0} \rangle _{\mathtt{c}} ~ e^{iS(\gamma_{0},V \bar{\Sigma}^{-1} \beta_{1})}
\end{equation}
In comparison to the analogical expression \eqref{psiUepsphi2} subjected to functional regularization, we see that \eqref{psiU0phi2} leaves out the integrals over the free parameters $ \beta_{12}, \beta_{13} $. Instead of $ \vert \psi_{1} \rangle $, we can consider a state $ \vert \psi_{1}^{\mathrm{ph}} \rangle $ from $ \Phi_{1}^{\mathrm{ph} \times} $, whose wavefunction belongs to $ \mathcal{F}_{1}^{\mathrm{ph}} $ and therefore is square integrable w.r.t. the physical inner product. Then it can be normalized so that $ \langle \psi_{1}^{\mathrm{ph}} \vert \psi_{1}^{\mathrm{ph}} \rangle = 1 $ and used for a computation of the transition probability into this state without paying any attention to the redundant free parameters.

\end{example}

\vspace{\baselineskip}

\begin{example}\label{ex:3}
The last instance of a triangular one-step lattice is the time-reversed version of that from Example \ref{ex:2}. Its diagram is given in Fig. \ref{fig:lat3}.
	
\begin{figure}[h!]
	\centering
	\begin{tikzpicture}[scale=1]
	\tikzset{
		vertex/.style={
			shape=circle,fill=lightgray!100,minimum size=3mm,inner sep=0.2mm, label={[fill=none,label distance=1mm]90:#1}
		},
		vertexwhite/.style={
			shape=circle,fill=white!100,minimum size=10mm,inner sep=0.2mm, label={[fill=none,label distance=1mm]90:#1}
		},
		vertexvirtual/.style={
			shape=circle,draw=lightgray!100,fill=white!100,line width=0.3mm,minimum size=2.7mm,inner sep=0.2mm, 	label={[fill=none,label distance=1mm]90:#1}
		},
		edge/.style={
			draw,-,color=lightgray!100,line width=0.3mm
		},
		edget/.style={
			draw,dashed,color=lightgray!100,line width=0.3mm
		},
	}
	
	\coordinate (cia) at (-2,0);
	\coordinate (c1) at (-1,0);
	\coordinate (c2) at (0,0);
	\coordinate (c3) at (1,0);
	\coordinate (cib) at (2,0);
	\coordinate (cic) at (-2,0.866);
	\coordinate (c4) at (-1,0.866);
	\coordinate (c5) at (0,0.866);
	\coordinate (c6) at (1,0.866);
	\coordinate (cid) at (2,0.866);
	
	\draw[edget] (c4) -- (c2) -- (c6);
	\draw[edget] (c5) -- (c2);
	\draw[edge] (cic) -- (c4) -- (c5) -- (c6) -- (cid);
	
	\node[vertexvirtual] at (c1) {1};
	\node[vertex] at (c2) {2};
	\node[vertexvirtual] at (c3) {3};
	\node[vertex] at (c4) {4};
	\node[vertex] at (c5) {5};
	\node[vertexwhite] at (cia) {};
	\node[vertexwhite] at (cib) {};
	\node[vertexwhite] at (cic) {};
	\node[vertexwhite] at (cid) {};
	\node[vertex] at (c6) {6};
	\node[] at (3.5,0) {$ n = 0 $};
	\node[] at (3.5,0.866) {$ n = 1 $};
	\end{tikzpicture}
	\vspace{0 mm}
	\caption{Diagram of the time-step lattice of Example \ref{ex:3}. It again contains three identical triangles, this time of type 1-2. Vertices 1 and 3 are virtual.}
	\label{fig:lat3}
	\vspace{4 mm}
\end{figure}

This time we have
\begin{equation}\label{key}
		K = \begin{pmatrix}
			0 & 0 & 0 & 0 & 0 & 0 \\
			& -6 & 0 & 2 & 2 & 2 \\
			& & 0 & 0 & 0 & 0 \\
			& & & -1 & -1/2 & -1/2 \\
			& & & & -1 & -1/2 \\
			& & & & & -1 \\
		\end{pmatrix}
\end{equation}
and thus
\begin{equation}\label{key}
		R_{0} = \begin{pmatrix}
			0 & 0 & 0 \\
			2 & 2 & 2 \\
			0 & 0 & 0
		\end{pmatrix}
\end{equation}
with the decomposition
\begin{equation}\label{key}
		U = \begin{pmatrix}
			0 & 0 & 1 \\
			1 & 0 & 0 \\
			0 & 1 & 0
		\end{pmatrix}, ~ \Sigma = \begin{pmatrix}
			2 \sqrt{3} & 0 & 0 \\
			0 & 0 & 0 \\
			0 & 0 & 0
		\end{pmatrix}, ~ V = \frac{1}{\sqrt{6}} \begin{pmatrix}
			\sqrt{2} & -\sqrt{3} & -1 \\
			\sqrt{2} & 0 & 2 \\
			\sqrt{2} & \sqrt{3} & -1
		\end{pmatrix}
\end{equation}

We again have $ r_{0} = 1 $ and $ s_{0} = 2 $. The one-step quantum evolution map given by \eqref{ev} formally looks the same as in the preceding example:
\begin{equation}\label{ua3}
\mathbb{U}_{1} \vert \alpha_{0} \rangle_{\dot{\mathtt{c}}} = \sum_{\alpha_{12},\alpha_{13}} \mathcal{l}(\alpha_{12},\alpha_{13}) ~ \vert \alpha_{1} \rangle_{\ddot{\mathtt{c}}} ~ \delta(\alpha_{02}) \delta(\alpha_{03}) ~ f(\alpha_{1})
\end{equation}
However, it differs in content. Note that the adapted coordinates $ \alpha_{0} $ and $ \alpha_{1} $ do not have the same meaning as before. Their defining symplectic transformation is related, but not equivalent to the one from Example \ref{ex:2}. For instance, one can show that the adapted coordinates $ \alpha_{02}, \alpha_{03} $ actually represent the canonical momenta at vertices $ 3, 1 $, respectively; see Example 6.3 in \cite{Kaninsky2020a}. The pre-constraint $ \delta(\alpha_{02}) \delta(\alpha_{03}) $ therefore does not influence any physical quantities, it only fixes the values of these momenta to zero. This behavior is typical. In the quantum-mechanical realm, we can understand it as a consequence of the uncertainty relations: the coordinates (field values) at virtual vertices are fully undetermined, so the corresponding momenta must be fixed. Meanwhile, the free adapted coordinates $ \alpha_{12}, \alpha_{13} $ stand for some linear combinations of the canonical coordinates describing field values at vertices $ 4, 5, 6 $. In contrast to the previous example, these are all real vertices, which makes both $ \alpha_{12}, \alpha_{13} $ physically meaningful. Indeed, they may be important e.g. for the measurement of the field at time-slice $ 1 $ or for the evolution to later time-slices.

Recall that according to \eqref{f}, the function $ f $ equals
\begin{equation}\label{key}
f(\alpha_{1}) = e^{i\frac{1}{2} \alpha_{1}^{T} \bar{\Sigma}^{-1} V^{T} \bar{R}_{1} V \bar{\Sigma}^{-1} \alpha_{1}}
\end{equation}
where
\begin{equation}\label{key}
\bar{R}_{1} = -K^{-}_{(1)} = \begin{pmatrix}
1 & 1/2 & 1/2 \\
1/2 & 1 & 1/2 \\
1/2 & 1/2 & 1
\end{pmatrix}
\end{equation}
It follows that $ f $ depends on all the three adapted coordinates $ \alpha_{11}, \alpha_{12}, \alpha_{13} $ including the latter two, which are the free parameters of the evolution. In view of the preceding, this is expected, since they correspond to (a linear combination of) field values at vertices $ 4, 5, 6 $ which enter the action.

The propagator \eqref{pr} takes the form
\begin{equation}\label{wf3}
_{\mathtt{c}} \langle \beta_{1} \vert \mathbb{U}_{1} \vert \gamma_{0} \rangle_{\mathtt{c}} = \mathcal{l}(V_{2}^{T} \beta_{1}) ~ (2\sqrt{3})^{1/2} ~ (2\pi)^{-3/2} ~ e^{iS(\gamma_{0},\beta_{1})}
\end{equation}
cf. \eqref{wf2}. Again, $ V_{2} $ is formed by the last two columns of $ V $, i.e.,
\begin{equation}\label{V2Tbeta}
V_{2}^{T} \beta_{1} = \frac{1}{\sqrt{6}} \begin{pmatrix}
-\sqrt{3} & 0 & \sqrt{3} \\
-1 & 2 & -1
\end{pmatrix} \begin{pmatrix}
\beta_{11} \\
\beta_{12} \\
\beta_{13}
\end{pmatrix} =  \frac{-1}{\sqrt{6}} \begin{pmatrix}
\sqrt{3} \left( \beta_{11} - \beta_{13} \right) \\
\beta_{11} - 2 \beta_{12} + \beta_{13}
\end{pmatrix}
\end{equation}
These happen to be the linear combinations of field values $ \beta_{11}, \beta_{12}, \beta_{13} $ at vertices $ 4, 5, 6 $ (respectively) which are not fixed by the classical equations of motion. With the help of \eqref{Sn+1ex}, we find the action to be
\begin{equation}\label{S3}
\begin{aligned}
S(\gamma_{0},\beta_{1}) & = \textstyle \frac{1}{2} \left( \gamma_{0}^{T} K_{(0)}^{+} \gamma_{0} + 2\gamma_{0}^{T} K_{(0,1)} \beta_{1} + \beta_{1}^{T} K_{(1)}^{-} \beta_{1} \right) = \\
&= \textstyle \frac{1}{2} \left( - 6 \gamma_{02}^{2} + 4 \gamma_{02} ( \beta_{11}+\beta_{12}+\beta_{13} ) + \beta_{1}^{T} K_{(1)}^{-} \beta_{1} \right)
\end{aligned}
\end{equation}
where we took advantage of the only nonzero element of $ K_{(0)}^{+} $ being $ (K_{(0)}^{+})_{22} = -6 $. In comparison to \eqref{S2}, we see that the field values at time-slice 0 and the field values at time-slice 1 has effectively switched their roles. The coordinates $ \gamma_{01}, \gamma_{03} $ are not present because they belong to virtual vertices. Consequently, the wavefunction \eqref{wf3} is constant w.r.t. $ \gamma_{01}, \gamma_{03} $. One can also check that the mixed term is independent of the two linear combinations from \eqref{V2Tbeta}; these only enter the last term.

Let us consider a normalizable initial state $ \vert \varphi_{0} \rangle $, employ the functional regularization and express the evolved state wavefunction:
\begin{equation}\label{betaUphi3}
\begin{aligned}
_{\mathtt{c}} \langle \beta_{1} \vert \mathbb{U}_{1}^{\varepsilon} \vert \varphi_{0} \rangle &= \sum_{\gamma_{0}}  ~_{\mathtt{c}} \langle \beta_{1} \vert \mathbb{U}_{1}^{\varepsilon} \vert \gamma_{0} \rangle_{\mathtt{c}} ~_{\mathtt{c}} \langle \gamma_{0} \vert \varphi_{0} \rangle = \\ &= \sum_{\gamma_{02}} ~ \mathcal{l}(V_{2}^{T} \beta_{1}) ~ (2\sqrt{3})^{1/2} ~ (2\pi)^{-3/2} ~ e^{iS(\gamma_{0},\beta_{1})} \sum_{\gamma_{01},\gamma_{03}} ~ _{\mathtt{c}} \langle \gamma_{0} \vert \varphi_{0} \rangle
\end{aligned}
\end{equation}
Note that one could have such $ \vert \varphi_{0} \rangle $ that the integral over $ \gamma_{01}, \gamma_{03} $ diverges. This does not need to concern us: wavefunctions of normalizable states simply \textit{can} diverge sometimes. Consider the momentum wavefunction $ _{\mathtt{m}} \langle \beta_{0} \vert \varphi_{0} \rangle $, for which
\begin{equation}\label{mom0}
_{\mathtt{m}} \langle 0 \vert \varphi_{0} \rangle = (2 \pi)^{-q/2} ~ \sum_{\gamma_{0}} ~_{\mathtt{c}} \langle \gamma_{0} \vert \varphi_{0} \rangle
\end{equation}
This example well illustrates the situation in \eqref{betaUphi3}. Recall that the momenta $ \alpha_{02}, \alpha_{03} $ at vertices $ 1,3 $ are fixed to zero by the pre-constraint. Upon transforming from momenta to the canonical coordinates, we get \eqref{betaUphi3} with the second integral analogical to \eqref{mom0}. Let us remark that one could prevent all divergences by considering only initial states with wavefunctions from $ \Phi_{0} \subset \mathcal{F}_{0} $. But it is not necessary. Since the variables $ \gamma_{01}, \gamma_{03} $ are unphysical, we are free to fix the wavefunction w.r.t. them arbitrarily. In particular, we can choose such $ \vert \varphi_{0} \rangle $ that the divergence is avoided.

The functional regularization \eqref{l} ensures that the wavefunction \eqref{betaUphi3} is square integrable in the two variables defined by the linear combinations of \eqref{V2Tbeta} and therefore belongs to $ \mathcal{F}_{1} $. Its norm is
\begin{equation}\label{Cphi3}
C_{\vert \varphi_{n} \rangle} = \langle \varphi_{0} \vert \overline{\mathbb{U}^{\varepsilon}_{1}} \mathbb{U}^{\varepsilon}_{1} \vert \varphi_{0} \rangle = \sum_{\alpha_{01}} \langle \varphi_{0} \vert \alpha_{0} \rangle_{\dot{\mathtt{c}}} ~_{\dot{\mathtt{c}}} \langle \alpha_{0} \vert \varphi_{0} \rangle
\end{equation}
where it must be understood that $ \alpha_{0} = \begin{pmatrix}
\alpha_{01} & 0 & 0
\end{pmatrix}^{T} $, i.e., the point described by $ \alpha_{0} $ belongs to the pre-constraint surface. We remark that the adapted coordinate $ \alpha_{01} $ in fact describes the field value at vertex $ 2 $. As in the previous example, the norm \eqref{Cphi2} associated to any normalizable state $  \vert \varphi_{0} \rangle $ will be smaller than $ \langle \varphi_{0} \vert \varphi_{0} \rangle $.

Having expressed the final state wavefunction, one can evaluate its inner product with another state $ \vert \psi_{1} \rangle $, e.g. to compute the transition probability. From \eqref{betaUphi3} we get
\begin{equation}\label{key}
\langle \psi_{1} \vert \mathbb{U}^{\varepsilon}_{1} \vert \varphi_{0} \rangle = \sum_{\beta_{1}} \sum_{\gamma_{0}} \langle \psi_{1} \vert \beta_{1} \rangle_{\mathtt{c}} ~ _{\mathtt{c}} \langle \gamma_{0} \vert \varphi_{0} \rangle ~ \mathcal{l}(V_{2}^{T} \beta_{1}) ~ (2\sqrt{3})^{1/2} ~ (2\pi)^{-3/2} ~ e^{iS(\gamma_{0},\beta_{1})}
\end{equation}
Observe that $ _{\mathtt{c}} \langle \beta_{1} \vert \psi_{1} \rangle $ may have a physically meaningful dependence on all the field values $ \beta_{11}, \beta_{12}, \beta_{13} $ at vertices $ 4, 5, 6 $. Because the linear combinations in \eqref{V2Tbeta} represent free parameters of the evolution, the product $ _{\mathtt{c}} \langle \beta_{1} \vert \psi_{1} \rangle ~ \mathcal{l}(V_{2}^{T} \beta_{1}) $ is the only part of the integrand dependent on them. If  $ _{\mathtt{c}} \langle \beta_{1} \vert \psi_{1} \rangle $ belongs to $ \mathcal{F}_{1} $, the integral over $ \beta_{1} $ will clearly converge. In the limit $ \varepsilon \rightarrow 0 $, all values of the free parameters are given the same weight and one gets a physically relevant inner product whose absolute value squared describes the transition probability. As we know, it can be renormalized upon dividing by \eqref{Cphi3}.

Finally let us shortly look at the cylindrical regularization. It provides the evolved eigenstates of the adapted coordinates in the form
\begin{equation}\label{ua3c}
\mathbb{U}_{1}^{0} \vert \alpha_{0} \rangle_{\dot{\mathtt{c}}} = \sum_{\alpha_{12},\alpha_{13}} \vert \alpha_{1} \rangle_{\ddot{\mathtt{c}}} ~ \delta(\alpha_{02}) \delta(\alpha_{03}) ~ f(\alpha_{1})
\end{equation}
Our discussion of \eqref{ua3} still applies here. Inevitably, one will find that the evolved state $ \mathbb{U}_{1}^{0} \vert \varphi_{0} \rangle $ is not normalizable; its wavefunction $ _{\mathtt{c}} \langle \beta_{1} \vert \mathbb{U}_{1}^{0} \vert \varphi_{0} \rangle $ does not belong to $ \mathcal{F}_{1} $. We need to apply the physical inner product \eqref{1pip}, which is of the form
\begin{equation}\label{pip3}
\begin{aligned}
\langle \psi_{1} \vert \phi_{1} \rangle^{\mathrm{ph}} &= \sum_{\gamma_{11}} ~ \langle \psi_{1} \vert \gamma_{1} \rangle_{\ddot{\mathtt{c}}} ~_{\ddot{\mathtt{c}}} \langle \gamma_{1} \vert \phi_{1} \rangle = \\
&= (2\sqrt{3})^{-1} \sum_{\gamma_{11}} ~ \langle \psi_{1} \vert V \bar{\Sigma}^{-1} \gamma_{1} \rangle_{\mathtt{c}} ~_{\mathtt{c}}\langle V \bar{\Sigma}^{-1} \gamma_{1} \vert \phi_{1} \rangle
\end{aligned}
\end{equation}
where the adapted coordinates $ \gamma_{12}, \gamma_{13} $---recall that they are given by the two rows of \eqref{V2Tbeta}---play the role of additional parameters. The matrix $ V \bar{\Sigma}^{-1} $ produces linear combinations of the adapted coordinates $ \gamma_{11}, \gamma_{12}, \gamma_{13} $ corresponding to the canonical ones. Observe that unlike in the preceding Example \ref{ex:2}, all the adapted coordinates are physical here, yet the integration over $ \gamma_{12}, \gamma_{13} $ is still leaved out. This happens because the physical inner product is respectful of the symplectic structure carried over from time-slice 0 whose configuration space has only one physical degree of freedom---the field value at vertex 2. Thanks to this, the evolved state wavefunction $ _{\mathtt{c}} \langle \beta_{1} \vert \mathbb{U}_{1}^{0} \vert \varphi_{0} \rangle $ is found to have a finite norm in $ \mathcal{F}_{1}^{\mathrm{ph}} $ equal to
\begin{equation}\label{key}
\langle \varphi_{0} \vert \overline{\mathbb{U}^{0}_{1}} \mathbb{U}^{0}_{1} \vert \varphi_{0} \rangle^{\mathrm{ph}} = \sum_{\gamma_{11}} \langle \varphi_{0} \vert \gamma_{1} \rangle_{\dot{\mathtt{c}}} ~_{\dot{\mathtt{c}}} \langle \gamma_{1} \vert \varphi_{0} \rangle
\end{equation}
where we assume $ \gamma_{1} = \begin{pmatrix} \gamma_{11} & 0 & 0 \end{pmatrix}^{T} $. This norm is again equivalent to \eqref{Cphi3} and can be interpreted as the norm of $ _{\mathtt{c}} \langle \gamma_{0} \vert \varphi_{0} \rangle $ when limited to the pre-constraint surface. The physical inner product of the final state $ \mathbb{U}_{1}^{0} \vert \varphi_{0} \rangle $ with some state $ \vert \psi_{1} \rangle $ shall take the form
\begin{equation}\label{psiU0phi3}
\langle \psi_{1} \vert \mathbb{U}^{0}_{1} \vert \varphi_{0} \rangle^{\mathrm{ph}} = (2\sqrt{3})^{-1/2} ~ (2\pi)^{-3/2} \sum_{\beta_{11}} \sum_{\gamma_{0}} ~ \langle \psi_{1} \vert V \bar{\Sigma}^{-1} \beta_{1} \rangle_{\mathtt{c}} ~_{\mathtt{c}}\langle \gamma_{0} \vert \varphi_{0} \rangle _{\mathtt{c}} ~ e^{iS(\gamma_{0},V \bar{\Sigma}^{-1} \beta_{1})}
\end{equation}
The wavefunction $ \langle \psi_{1} \vert V \bar{\Sigma}^{-1} \beta_{1} \rangle_{\mathtt{c}} $ contains important information about all the (adapted) coordinates $ \beta_{11}, \beta_{12}, \beta_{13} $. Therefore, unlike in Example \ref{ex:2}, the state $ \vert \psi_{1} \rangle $ cannot be replaced by $ \vert \psi_{1}^{\mathrm{ph}} \rangle $ whose wavefunction would only depend on $ \beta_{11} $---unless we parametrize it with $ \beta_{12}, \beta_{13} $, that is. The normalization $ \langle \psi_{1}^{\mathrm{ph}} \vert \psi_{1}^{\mathrm{ph}} \rangle = 1 $ appropriate for computing the transition probability then must be satisfied for all values of the additional, yet physical parameters $ \beta_{12}, \beta_{13} $.
\end{example}

With this we conclude our exposition of the one-step examples. The evolution across lattices with more time-steps could be treated in complete analogy, by stacking the one-step evolution maps one after another, as described in section \ref{sec:globevol}. This would only result in increasingly complicated expressions for the final states, their norms, inner products and ultimately the transition probabilities. In case of functional regularization, the difficulty of the computation would be amplified by the regularization terms and the necessary limiting procedure. In case of cylindrical regularization, we would encounter the need to analyze the structure of classical solutions on the whole studied time interval in order to fix the physical inner product.

\section{Conclusion}
Classical discrete dynamical systems have wide use in many branches of science including economics, demography or biology \cite{Galor2007}. In physics, one can also encounter their quantum counterparts. These are indispensable to a family of quantum gravity approaches based on discrete spacetime models, mainly the descendants of quantum Regge calculus \cite{Barrett2018}, but can be readily used for modeling other (often simpler) systems like quantum fields on a fixed spacetime lattice. In this work, the focus was kept on a special class of discrete dynamical systems whose configuration space is a vector space of finite dimension and whose equations of motion are linear. In the formalism of \textit{discrete canonical evolution} \cite{Dittrich2011, Dittrich2012, Dittrich2013}, one begins with an action additive w.r.t. the individual time-steps and easily obtains the equations of motion in the canonical picture. Thanks to the assumption of linearity, we could effectively study the arising dynamics with standard tools of linear algebra. When present, the irregularity of the system produces constraints and non-uniqueness of the evolution. These are not quite compatible with the standard canonical quantization procedure; one therefore needs to apply extra measures to correctly define the corresponding quantum system.

The present work introduces two fundamentally different models of quantum systems experiencing discrete linear evolution. In the first model described in section 4, one a priori limits one's consideration to the constraint surfaces and reduces all the non-uniqueness beforehand, thus effectively passing to an alternative system in which the evolution is fully \textit{regular}. In this reduced system, each point $ y_{n} $ in the representative time-slice phase space $ \dot{\mathcal{D}}_{n} $ for any $ n \in \{ 0, ..., t \} $ gives rise to a unique solution $ y \in \dot{\mathcal{S}}_{0t} $ and the symplectic form is conserved. Hence it is no problem to define the quantum theory canonically in the standard way. The problem lies elsewhere: in case that the irregularity describes real physical phenomena, the replacement of the original irregular system with a regular one will strip the model of its ability to describe these phenomena, and thus negatively influence its functionality. We further argue that even when the irregularity is rather technical and one could opt for removing it, the reduced system may suffer from some undesired properties, most importantly the back propagation of constraints and the implied breaking of causality. Last but not least, this approach produces a path integral which depends on the classical solutions and thus looses the usual advantage of supplying an off-shell formulation of the quantum dynamics. We find that much of the described behavior is actually shared by the approach adopted in the previous works \cite{Hoehn2014, Hoehn2014a} on the subject.

The second model given in section 5 aspires to overcome the above stated problematic features. It uses a different strategy: first quantize the system independently at each time-slice and then seek for an evolution mapping between the full neighboring quantum state spaces. The search for a suitable one-step evolution prescription is made harder by the absence of a Hamiltonian function in the canonical formalism, a paradox stemming from the discrete-time nature of the theory. Nevertheless, one can use the classical Hamiltonian evolution map to induce a corresponding map for the quantum case. We show how it can be done in a rather straightforward and constructive way. The procedure has two major ingredients: the knowledge of so-called \textit{adapted coordinates} introduced previously in \cite{Kaninsky2020a} and general transformation formulas for the eigenstates of coordinate and momentum observables supplied by \cite{Kaninsky2020}. Eventually, these had lead us to the desired quantum evolution map $ \mathbb{U}_{n+1}: \Phi^{\times}_{n} \rightarrow \Phi^{\times}_{n+1} $ between neighboring time-slice state spaces of our quantum system. The resulting propagator
\begin{equation}\label{prop2}
_{\mathtt{c}} \langle \beta_{n+1} \vert \mathbb{U}_{n+1} \vert \gamma_{n} \rangle_{\mathtt{c}} = \mathcal{l}(V_{2}^{T} \beta_{n+1}) ~ \abs{\det \Sigma_{r} }^{1/2} ~ (2\pi)^{-q/2} ~ e^{iS_{n+1}(\gamma_{n},\beta_{n+1})}
\end{equation}
agrees with the corresponding expressions in \cite{Hoehn2014, Hoehn2014a} up to differences related to the disparity of the approaches themselves. In particular, we find that the term $ \abs{\det \Sigma_{r} }^{1/2} $ is very well justified and coincidentally much simpler than the corresponding term given in \cite{Hoehn2014}. On the other hand, our propagator includes an additional regularization term $ \mathcal{l}(V_{2}^{T} \beta_{n+1}) $ which is not needed in \cite{Hoehn2014} due to the prior reduction of the Hilbert spaces. Next we show that the map $ \mathbb{U}_{n+1} $ is indeed \textit{not unitary} and requires regularization in order to obtain finite inner products. We offer two respective regularization schemes. The \textit{functional regularization} parametrized by $ \varepsilon > 0 $ shapes the evolved wavefunctions so that they are square integrable, and can be later removed via the limit $ \varepsilon \rightarrow 0 $. The \textit{cylindrical regularization} comes closer to the treatment of \cite{Hoehn2014, Hoehn2014a}, leaving the wavefunctions as they are and instead altering the inner product. Due to the non-unitarity of the evolution, the norms of the evolving states are not constant but only non-increasing. To compensate for this, one may renormalize the states, thus recovering the usual probabilistic interpretation. We point out that in certain situations it can be equally appropriate to leave the states (and probabilities) unrenormalized, reflecting the dissipative nature of the system. Eventually, we show that the one-step evolution maps can be composed to describe evolution across multiple time-slices, which naturally leads to the path integral formulation.

The present work may find various applications, but is primarily designed for modeling dynamics within Regge lattices. As demonstrated in the last section, it straightforwardly applies to quantum fields on a fixed triangulation. The featured example of scalar field is in itself interesting and (with small adjustments or generalizations) eligible for further research. It can be also modified e.g. to the case of fermion or gauge fields whose lattice actions are accessible in the literature \cite{Sorkin1975, McDonald2010a}.  The only limitations are that the configuration space has to be $ \mathbb{R}^{q} $ and the action must be quadratic. As of now, these requirements clearly disqualify gravity, whose case is complicated not only by nonlinear equations of motion but also by triangle inequalities and lattice diffeomorphisms \cite{Hoehn2014a}. Nevertheless, we believe that before addressing the problem of quantum gravity, it is profitable to understand simpler quantum systems and explore in some depth the discrete setup.

\section*{Acknowledgments}
This work was supported by Charles University Grant Agency [Project No. 906419].

\bibliographystyle{unsrt}
\renewcommand{\bibname}{Bibliography}
\bibliography{bibliography}

\end{document}